\newcommand{\f}{f}          
\newcommand{\Rp}{\R_{\geq 0}}      
\newcommand{\Np}{\N_{>0}}      
\renewcommand{\t}{i}      
\newcommand{\ztm}{\mathbb{Z}_{[0,T-1]}{}} 
\newcommand{\zt}{\mathbb{Z}_{[0,T]}{}} 
\newcommand{\norm}[1]{\|{#1}\|}
\newcommand{\R}{\mathbb{R}}
\newcommand{\Rx}{\mathbb{R}^{n_x}}
\newcommand{\Ru}{\mathbb{R}^{n_u}}
\newcommand{\Z}{\mathbb{Z}}
\newcommand{\N}{\mathbb{N}}
\newcommand{\PDset}[1]{\mathbb{S}_{>0}^{#1}}
\newcommand{\PSDset}[1]{\mathbb{S}_{+}^{#1}}
\newcommand{\Rxtm}{\R^{(T-1)n_x}}
\newcommand{\Rutm}{\R^{(T-1)n_u}}
\newcommand{\Rxt}{\R^{Tn_x}}
\newcommand{\x}{\bar{x}}
\renewcommand{\c}{c_1}
\newcommand{\kinf}{\mathcal{K}_\infty}
\newtheoremstyle{problem}
{}              
{}              
{}        		
{}              
{\bfseries}     
{}              
{ }             
{\thmname{#1}\thmnote{ {\the\thm@notefont{\normalfont #3}}}.}
\crefname{assumption}{Assumption}{Assumptions}
\crefname{lemma}{Lemma}{Lemmas}
\crefname{theorem}{Theorem}{Theorems}
\crefname{section}{Section}{Sections}
\crefname{proposition}{Proposition}{Propositions}
\crefname{equation}{}{}
\crefname{algorithm}{Algorithm}{Algorithms}
\crefname{problem}{Problem}{Problems}
\crefname{subsection}{Subsection}{Subsections}
\crefname{figure}{Figure}{Figures}
\newtheorem{theorem}{Theorem}
\newtheorem{proposition}{Proposition}
\newtheorem{lemma}{Lemma}
\newtheorem{remark}{Remark}
\newtheorem{assumption}{Assumption}
\theoremstyle{problem}
\newtheorem{problem}{Problem}
\newcommand{\Pabs}{\hyperlink{prob:Pabs}{$\mathcal{P}$}\xspace}
\newcommand{\Pabsw}{\hyperlink{prob:Pabsw}{$\mathcal{P}_\delta$}\xspace}
\newcommand{\Pilc}{\hyperlink{prob:Pilc}{$\mathcal{P}_\text{ILC}$}\xspace}
\newcommand{\Pmpc}{\hyperlink{prob:Pmpc}{$\mathcal{P}_\text{MPC}$}\xspace}
\DeclareMathOperator*{\argmin}{arg\,min}
\newcommand\scalemath[2]{\scalebox{#1}{\mbox{\ensuremath{\displaystyle #2}}}}
\begin{document}

\title{Iterative Learning Predictive Control for Constrained Uncertain Systems}

\author{Riccardo Zuliani, Efe C. Balta, Alisa Rupenyan, John Lygeros
\thanks{Research supported by the Swiss National Science Foundation under NCCR Automation (grant agreement 51NF40\_180545). R. Zuliani and J. Lygeros are with the Automatic Control Laboratory (IfA), ETH Z\"urich, 8092 Z\"urich, Switzerland \texttt{\small$\{$rzuliani,lygeros$\}$@ethz.ch}. E. C. Balta is with Inspire AG, 8005 Z\"urich, Switzerland \& with IfA \texttt{\small efe.balta@inspire.ch}. A. Rupenyan is with ZHAW Centre for Artificial Intelligence, Z\"urich University of Applied Sciences, Switzerland \texttt{\small alisa.rupenyan@zhaw.ch}.}}


\maketitle
\begin{abstract}
Iterative learning control (ILC) improves the performance of a repetitive system by learning from previous trials. ILC can be combined with Model Predictive Control (MPC) to mitigate non-repetitive disturbances, thus improving overall system performance. However, existing approaches either assume perfect model knowledge or fail to actively learn system uncertainties, leading to conservativeness. To address these limitations we propose a binary mixed-integer ILC scheme, combined with a convex MPC scheme, that ensures robust constraint satisfaction, non-increasing nominal cost, and convergence to optimal performance. Our scheme is designed for uncertain nonlinear systems subject to both bounded additive stochastic noise and additive uncertain components. We showcase the benefits of our scheme in simulation.
\end{abstract}

\section{Introduction}\label{section:INTRO}
Iterative learning control (ILC) is an established control technique for repetitive systems that utilizes information from previous iterations to improve control performance \cite{bristow2006survey}. Since the entire time sequence from previous iterations is available at the time of the input update, ILC can achieve perfect tracking even with little a-priori knowledge of the system \cite{ahn2007iterative}.

Recently, research has focused on \emph{optimization based ILC}, which explicitly considers input and state constraints by formulating the update as an optimization problem \cite{chu2010iterative,mishra2010optimization}. Generally, however, optimization-based schemes lack guarantees of constraint satisfaction in the presence of noise and/or model uncertainty. One notable exception is \cite{liao2022robustness}, where the authors popose a forward-backward splitting algorithm and use constraint tightening to guarantee robust constraint satisfaction in the presence of uncertainty. Similar settings have been extended to online ILC methods \cite{balta2022regret} and to iteration varying systems \cite{balta2024iterative}. Despite being robust to uncertainties, these approaches do not counteract nonrepetitive disturbances nor learn the unknown components, which introduces conservatism.

Active learning can improve the performance of optimization-based schemes in settings with model uncertainty and hard input and state constraints by reducing the constraint tightening as the uncertainty diminishes. Active learning is common in Model Predictive Control (MPC) \cite{soloperto2020augmenting,soloperto2022guaranteed}. In this work, we seek to combine ILC and MPC to obtain a scheme that learns an unknown component of the system dynamics and reduces conservativeness as much as possible.

Active learning has been considered in the field of adaptive ILC. For example, \cite{chi2008adaptive} combines ILC with recursive least squares estimation to learn an unknown parameter describing a nonlinear system. \cite{chi2015data} can operate in a fully model-free setting by using data to learn the partial derivatives of the functions mapping inputs to outputs. Both these methods, however, don't consider input or state constraints.

Our paper is not the first example of combination of ILC and MPC, this is indeed a well-established area of research (see \cite{lu2019110th} for a comprehensive review). For example, the scheme in \cite{lee_model_1999} performs a norm-optimal ILC update at each time-step in a shrinking horizon fashion, where measurements from the ongoing iteration improve robustness towards non-repetitive disturbances, and information from past iterations is used to compensate repetitive disturbances. The convergence properties of this scheme presented in \cite{lee2000convergence}, however, require no model uncertainty and no iteration-varying disturbances. The method in \cite{lee_model_1999} has recently been applied to the problem of controlling the temperature of an additive manufacturing machine in \cite{zuliani2022batch}. In \cite{wang2010model}, ILC is combined with MPC to control an artificial pancreatic cell, without theoretical guarantees. The authors of \cite{rosolia2017learning} propose a learning, reference-free MPC that uses hystorical data to iteratively construct a set of states from which the control task can be completed using a known safe control policy. At the same time, the scheme approximately builds a value function that maps each visited state to the closed-loop cost for completing the task. The approach converges to an optimal trajectory while retaining recursive feasibility and constantly improving the closed-loop performance. More recently, \cite{smith2023iterative} proposed an ILC scheme that iteratively tightens the constraints of an MPC scheme to incentivise safety, but without providing guarantees.

All aformentioned methods, however, assume perfect model knowledge or lack active learning, resulting in unnecessary conservativeness. On the other hand, MPC has been coupled with adaptive control techniques to obtain schemes that simultaneously performs closed-loop identification and regulation of constrained linear systems subject to parametric and additive uncertainty \cite{weiss2014robust,aswani2013provably}. These schemes are often implemented with set-membership identification techniques \cite{blanchini2008set}, providing both a nominal model of the plant as well as a quantification of the residual uncertainty, ususally in the form of an uncertainty set. Generally, constraints are guaranteed to be satisfied for all possible uncertainty realizations inside the uncertainty set, and the uncertainty set is reduced as new observations of states and inputs become available \cite{lorenzen2017adaptive,bujarbaruah2019adaptive}. The uncertainty set should contains all admissible models that are consistent with the measurements, and can then be exploited for robust control design. 

\subsubsection*{Contribution}
In this paper we adapt the idea of set-membership identification to the iterative setting by establishing an ILC scheme that actively learns an uncertain, state-dependent component while ensuring robust constraint satisfaction of state and input constraints. The uncertainty set is used to tighten the constraints, and as the set shrinks in size, both the tightening and the conservativeness are reduced. We consider uncertain nonlinear systems subject to both state-dependent additive disturbances and bounded stochastic additive noise.

The ILC scheme is combined with an MPC scheme operating in receding horizon in each iteration. The MPC optimizes online a nominal state prediction as well as a sequence of tubes which are guaranteed to contain the true state of the system. The size of the tubes grows as the state prediction deviates from past observed trajectories. Asymptotically, the state trajectory is shown to converge to a neighborhood of the desired reference.


At the beginning of every iteration, the ILC update is formulated as a mixed binary integer program where the number of binary variables is equal to the number of time-steps considered in the optimization. The MPC problem is convex and can be solved efficiently online at each time-step with shrinking prediction horizon.

The scheme is recursively feasible and with a non-increasing nominal iteration cost. In addition, we show that the scheme converges to the same solution as a scheme with perfect model knowledge. Theoretical results are also validated in simulation.

\subsubsection*{Outline}
The remainder of this paper is structured as follows: in \cref{section:PF} we introduce the problem, in \cref{section:PREF} we propose a tractable reformulation. \cref{section:BLMPC} describes the control scheme, whose properties are later analyzed in \cref{section:AN}. In \cref{section:SIM} we present the simulation results.
\subsubsection*{Notation}
We denote with $\Z,\N,\Np$ the set of integers, non-negative integers, and positive integers, respectively.
We denote with $\Z_{[i,j]}$ the set of integers between $i$ and $j$, with $i,j\in\Z$, $i \leq j$, i.e., $\Z_{[i,j]}=\{ n\in \Z : i \leq n \leq j \}$. 
We denote with $\R,\R^n,\R^{n \times m}$ the set of real numbers, the set of real-valued $n$-dimensional vectors, and the set of $n \times m$ real-valued matrices, respectively, with $n,m\in\Np$.
$\|\cdot\|$ denotes the standard $p$-norm, and $\mathcal{B}(r)=\{ x: \|x\| \leq r \}$.
$\mathbb{S}_{>0}^n$ denotes the cone of symmetric positive semidefinite $n \times n$ real-valued matrices.
$\mathcal{K}_\infty$ denotes the class of strictly increasing functions $\alpha:[0,\infty)\to[0,\infty)$ that are zero at zero, and for which $\lim_{t \to \infty} \alpha(t)=\infty$.

\section{Problem formulation}\label{section:PF}
Consider a discrete time nonlinear time-varying system
\begin{align}
x_k(t+1) = f(x_k(t),t) + A(t) x_k(t) + B(t) u_k(t) + w_k(t),%
\label{eq:PF_system}
\end{align}
where $t\in\ztm$ denotes the time, for a finite horizon $T\in\Np$, and $k\in\N$ denotes the iteration. The variables $x_k(t)\in\Rx$, $u_k(t)\in\Ru$, and $w_k(t)\in\Rx$ denote, respectively, the state, input, and noise. We denote $d_k(t)=f(x_k(t),t)$ and refer to $d_k(t)\in\Rx$ as the disturbance.

\emph{Iteration} refers to the execution of \cref{eq:PF_system} for all $t\in\zt$ for a given $x_k(0)$ and $u_k:=(u_k(0),\dots,u_k(T-1))\in\R^{(T-1)n_u}$. Upon reaching $t=T$, $k$ is increased to $k+1$, $t$ is reset to $t=0$, and the initial state $x_{k+1}(0)$ is reset as follows.
\begin{assumption}
\label{ass:PF_initial_condition}%
For every $k \geq 0$, $x_k(0)\in \{ \bar{x} \} \oplus \mathcal{B}(r_0)$ for some known $\bar{x}\in\Rx$ and $r_0\in\Rp$.
\end{assumption}
In this paper we allow the $p$-norm $\|\cdot\|$ to have $p=2$ or $p=\infty$. Unless otherwise specified, we assume $p=\infty$ and use $\|\cdot\|:=\|\cdot\|_\infty$ for simplicity; all results of the paper can be easily adapted to the case $p=2$. The system must satisfy $(x_k(t),u_k(t)) \in \mathscr{X}$ for $t\in\ztm$ and $x_k(T)\in \mathscr{X}_T$, where
\begin{align*}
\mathscr{X} &:= \{ (x,u)\in\R^{n_x+n_u} : H_x x + H_u u \leq h \}, \\
\mathscr{X}_T &:= \{ x\in\R^{n_x} : H_{x,T} x \leq h_T \},
\end{align*}
and $H_x,H_u,h,H_{x,T},h_T$ are matrices of appropriate dimensions. We define for simplicity
\begin{align*}
\mathscr{Z} &:= \{ (x,u) : (x(t),u(t))\in \mathscr{X}, t\in\ztm, x(T)\in \mathscr{X}_T \},\\
\mathscr{X}_x &:= \{ x : \exists\, u\in\R^{n_u}: (x,u)\in \mathscr{X} \}.
\end{align*}
Two additive terms act on \cref{eq:PF_system} at each time-step. The function $f$ is assumed to be an unknown function of $t$ and $x_k(t)$, as formalized next.
\begin{assumption}
\label{ass:PF_disturbance}
For all $k$ and $t$, $d_k(t)=\f(x_k(t),t)$, where $\f(\cdot,t)$ is $L(t)$-Lipschitz in $p$-norm as a distance metric, i.e.
\begin{align*}
\| f(x,t)-f(y,t) \| \leq L(t) \|x-y\|, ~ \forall t \in \Np, ~ x,y\in \mathscr{X}_x.
\end{align*}
Additionally, $L(t) \leq m(t)$ for some known $m(t)<\infty$ for all $t$.
\end{assumption}
%
To ensure that the true system dynamics can be learned using a set-membership identification algorithm, we assume that the noise $w_k(t)$ satisfies the following.
\begin{assumption}
\label{ass:PF_noise}
For all $k$ and $t$, $w_k(t)\in\mathcal{W}=\mathcal{B}(\bar{w})$ for some $\bar{w}\in\Rp$. Moreover, 
$\operatorname{Pr}\left\{ w_k(t)\in U \right\} > 0$ for all $U \subseteq \mathcal{W}$ with $\operatorname{int}U \neq \emptyset$.
\end{assumption}
%
%
This assumption is common in the set-membership identification literature (see for example \cite{lu2021robust}), and it could be removed at the cost of a non-vanishing estimation error \cite[Section 5.3]{lu2021robust}. This is, however, beyond the scope of this paper.

We address the problem of steering the state \cref{eq:PF_system} toward a pre-defined time-varying (but iteration-invariant) reference $r(t)\in\Rx$ by adjusting the control input $u_k(t)$. To this end, we assume that the state $x_k(t)$ is measured at each time-step. Specifically, for a given $r=(r(0),\dots,r(T))$ and $Q\in\PDset{n_x}$, the goal is to solve problem \Pabs.
\begin{problem}[$\mathcal{P}$]
    \label{prob:Pabs}
    \begin{align*}
        \underset{u,x}{\text{min.}} & \quad \norm{x-r}^2_Q \\
        \text{s.t.} & \quad x(t+1) = A(t) x(t) + B(t) u(t) + \f(x(t),t) + w(t), \\
        & \quad (x,u)\in \mathscr{Z}, \\
        & \quad x(0) \text{~given}.%
    \end{align*}
\end{problem}
%

\section{Problem reformulation}\label{section:PREF}
Problem \Pabs cannot be solved due to the presence of the unknown terms $f$ and $w$. This issue can be addressed by introducing an algorithm that learns $f$ and by robustifying the problem against all possible values of $w$.


\subsection{Approximating the disturbance}
\label{subsection:PREF_disturbance}
%

To obtain the nominal dynamics, we set $w_k(t)=0$ for all $t$ in \cref{eq:PF_system} and assume $f(x_k^\text{ref}(t),t)=d_k^\text{ref}(t)$ for $t\in\ztm$, where $x_k^\text{ref}$ and $d_k^\text{ref}$, the state and disturbance references, are known. Denoting the nominal state and input with $z_k$ and $v_k$, respectively, the nominal dynamics are
\begin{align}
z_k(t+1)=A(t)z_k(t)+B(t)v_k(t)+d_k^\text{ref}(t),
\label{eq:PREF_nominal_dynamics}
\end{align}
with $z_k(0)=\x$. The nominal dynamics \cref{eq:PREF_nominal_dynamics} are a close approximation of \cref{eq:PF_system} if $z_k \approx x_k^\text{ref}$. The input $u_k(t)$ is designed to ensure that nominal and actual state trajectories remain close by choosing
\begin{align}
u_k(t)=v_k(t)+K(t)(x_k(t)-z_k(t)),%
\label{eq:PREF_true_input}
\end{align}
where $K(t)\in\R^{n_u \times n_x}$ is a time-varying control gain designed so that each $A_K(t):=A(t)+B(t)K(t)$ is Schur. In case this is not possible, i.e., when $(A(t),B(t))$ is not stabilizable, we can choose $K(t)=0$ and utilize $v_k(t)$ without any feedback correction.

There are two issues that need to be addressed: first, the state reference $x_k^\text{ref}$ is not necessarily equal to the true state $x_k$; second, $\f(x_k^\text{ref})$ cannot be measured exacly because of the noise.

\subsubsection{Choosing the reference} The simplest strategy to select the state reference is $x_k^\text{ref}=x_{k-1}$, i.e., choosing the value of the state measured in the previous iteration $k-1$. Although this is effective in practice, one desirable property for the algorithm is non-decreasing nominal performance, which can fail to hold under the static assignment $x_k^\text{ref}=x_{k-1}$.

To achieve non-decreasing nominal performance we allow
%
\begin{align}
d_k^\text{ref}(t) \in \left\{ d_{k-1}^\text{ref}(t),d_{k-1}(t) \right\},%
\label{eq:SWN_binary_constraint}
\end{align}
which can be implemented using $T-1$ binary variables regardless of the iteration number. This binary choice allows the disturbance reference to be equal to either the previous reference $d_{k-1}^\text{ref}(t)$ or the previously measured disturbance $d_{k-1}(t)$. Notice that the choice can be different for each time step. In addition, we set $d_{0}^\text{ref}=d_0$.


\subsubsection{Computing set estimates} The choice of the disturbance reference in \cref{eq:SWN_binary_constraint} and the assignment $d_0^\text{ref}=d_0$ implies that, for every $k\in\Np$, the variable $\f(x_k^\text{ref})$ satisfies
\begin{align*}
\f(x_k^\text{ref}(t),t)=\f(x_j(t),t)=d_j(t) ,
\end{align*}
for some $j < k$. In the remainder of this section, we outline a procedure to produce set estimates for generic disturbances $d_k$, with $k\in\N$; this procedure can then be used to estimate any $f(x_k^\text{ref})$, $k\in\Np$.

Consider an iteration $k\in\N$. We define the \emph{disturbance measurement} $\bar{d}_k\in\Rxtm$ for all $t\in\ztm$ as
\begin{align*}
\bar{d}_k(t):= x_k(t +1 )-A(t) x_k(t)-B(t) u_k(t)=d_k(t)+w_k(t).
\end{align*}
The signal $\bar{d}_k$ can be seen as a noisy measurement of the unknown disturbance $d_k$, and it constitutes our best guess of $d_k$ if we only consider measurements gathered in iteration $k$. Notice that if the system was noise-free, i.e., $w_k=0$, then we would be able to retrieve the exact value of the disturbance, as $\bar{d}_k=d_k$. Since the noise $w_k(t)$ is assumed to belong to $\mathcal{W}$ for all $t\in\ztm$, and $\mathcal{W}$ is symmetric, we have for all $t\in\ztm$ that
\begin{align}
d_k(t) \in \bar{\mathcal{D}}_k(t) := \bar{d}_k(t) \oplus \mathcal{W}. 
\label{eq:SN_disturbance_belong}
\end{align}
The set $\bar{\mathcal{D}}_k(t)$ contains all disturbances that are consistent with the measurements $\bar{d}_k(t)$ and constitutes the best set estimate of $d_k(t)$ obtained using information of iteration $k$. The knowledge of $d_k(t)$ can be improved using measurements collected in iterations prior to $k$ by leveraging \cref{ass:PF_disturbance}. Specifically, we can establish the following.

\begin{lemma}\label{lemma:PREF_dist_set}
Under \cref{ass:PF_disturbance,ass:PF_noise}, we have $d_k(t) \in \mathcal{D}_{k|n}(t)$ for all $k,n\in\N$, $k \leq n$, and $t\in\ztm$, where
\begin{align}
\mathcal{D}_{k|n}(t) := \bigcap_{j=0}^n \left\{ \bar{\mathcal{D}}_j(t) \oplus \mathcal{B}(m(t) \|x_j(t)-x_k(t) \|) \right\}.
\label{eq:PREF_set_estimates}
\end{align}
Moreover
\begin{align}
\scalemath{0.96}{\mathcal{D}_{k|n}(t) = \mathcal{D}_{k|n-1}(t) \cap \left\{ \bar{\mathcal{D}}_n(t) \oplus \mathcal{B}(m(t) \norm{x_{n}(t)-x_{k}(t)}) \right\}.} \label{eq:SN_update_disturbance_set}
\end{align}
\end{lemma}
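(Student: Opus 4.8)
The plan is to prove the set inclusion one factor at a time and then read off the recursion by splitting the intersection. Fix $t\in\ztm$ and iterations $k\leq n$. For each $j\in\Z_{[0,n]}$ the goal is the single inclusion $d_k(t)\in\bar{\mathcal{D}}_j(t)\oplus\mathcal{B}(m(t)\|x_j(t)-x_k(t)\|)$; intersecting over $j$ then gives $d_k(t)\in\mathcal{D}_{k|n}(t)$ directly from the definition \eqref{eq:PREF_set_estimates}.

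To establish the per-$j$ inclusion I would combine two ingredients. First, \eqref{eq:SN_disturbance_belong} — which itself follows from $\bar d_j(t)=d_j(t)+w_j(t)$, $w_j(t)\in\mathcal{W}$, and the symmetry of $\mathcal{W}$ — yields $d_j(t)\in\bar{\mathcal{D}}_j(t)$. Second, since $d_k(t)=\f(x_k(t),t)$ and $d_j(t)=\f(x_j(t),t)$, and the constraint $(x_k(t),u_k(t))\in\mathscr{X}$ (hence $x_k(t),x_j(t)\in\mathscr{X}_x$) places the relevant points in the domain where \cref{ass:PF_disturbance} applies, we get $\|d_k(t)-d_j(t)\|=\|\f(x_k(t),t)-\f(x_j(t),t)\|\leq L(t)\|x_k(t)-x_j(t)\|\leq m(t)\|x_k(t)-x_j(t)\|$, using $L(t)\leq m(t)$. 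Combining, $d_k(t)\in d_j(t)\oplus\mathcal{B}(m(t)\|x_j(t)-x_k(t)\|)\subseteq\bar{\mathcal{D}}_j(t)\oplus\mathcal{B}(m(t)\|x_j(t)-x_k(t)\|)$, which is exactly what is needed. The recursion \eqref{eq:SN_update_disturbance_set} is then immediate: peel off the $j=n$ term from $\bigcap_{j=0}^n$ and recognize $\bigcap_{j=0}^{n-1}$ as $\mathcal{D}_{k|n-1}(t)$.

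I do not expect a real obstacle: the argument is the classical set-membership identification step transplanted from the time axis to the iteration axis, and the only care needed is to invoke the Lipschitz bound on states that are guaranteed feasible (so in $\mathscr{X}_x$) and to keep the symmetry of $\mathcal{W}$ explicit when citing \eqref{eq:SN_disturbance_belong}. A minor pedantic point is that \cref{ass:PF_disturbance} is phrased for $t\in\Np$ while $t=0\in\ztm$ is also admissible; this is harmless and, if desired, can be absorbed by reading the Lipschitz/bound condition as holding at $t=0$ as well.
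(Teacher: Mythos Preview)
Your proposal is correct and follows essentially the same approach as the paper: both arguments combine the Lipschitz bound from \cref{ass:PF_disturbance} with the noise inclusion \eqref{eq:SN_disturbance_belong} to obtain the per-$j$ containment, then intersect over $j$ and read off the recursion directly from the definition. The only differences are cosmetic---you are slightly more explicit about why the states lie in $\mathscr{X}_x$ and about the $t=0$ edge case, while the paper writes out the intermediate set $\bar d_j(t)\oplus\mathcal{W}\oplus\mathcal{B}(\cdot)$ explicitly.
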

\begin{proof}
Thanks to \cref{ass:PF_disturbance} we have $d_k(t)\in d_j(t) \oplus \mathcal{B}(m(t) \|  x_j(t)- x_k(t) \|)$ for all $0 \leq j < k$. Therefore
\begin{align}
d_k(t)\in \bar{d}_j(t) \oplus \mathcal{W} \oplus \mathcal{B}(m(t) \|  x_j(t)- x_k(t) \|), \label{eq:SN_dist_lipschitz_rel2}
\end{align}
for all $j \leq k$, since, using \eqref{eq:SN_disturbance_belong}, we have $d_j(t)\in \bar{d}_j(t) \oplus \mathcal{W}$.
Combining \cref{eq:SN_disturbance_belong} and \cref{eq:SN_dist_lipschitz_rel2} we have $d_k(t)\in \left\{ \bar{d}_k(t) \oplus \mathcal{W} \right\} \cap \left\{ \bar{d}_j(t) \oplus \mathcal{W} \oplus \mathcal{B}(m(t) \|  x_j(t)- x_k(t) \|) \right\}$, and repeating for $j=0,1,\dots,n$ we obtain \cref{eq:PREF_set_estimates}. \cref{eq:SN_update_disturbance_set} follows from \cref{eq:PREF_set_estimates}.
\end{proof}
%
%
We refer to $\mathcal{D}_{k|n}(t)$ as \emph{disturbance estimate set} and define $\mathcal{D}_{0|-1}(t)=\bar{d}_0(t)\oplus \mathcal{W}$ for all $t$. \cref{fig:SN_set_membership} showcases a simple example of the intersection in \cref{eq:PREF_set_estimates} for $k=1$.
Similar to $\mathcal{D}_{k|n}$, we define
\begin{align*}
\mathcal{D}_{k|n}^\text{ref}(t) := \bigcap_{j=0}^n \left\{ \bar{\mathcal{D}}_j(t) \oplus \mathcal{B}(m(t) \|x_j(t)-x_k^\text{ref}(t) \|) \right\}.
\end{align*}
with $\mathcal{D}_{0|-1}^\text{ref}(t)=\bar{d}_0(t)\oplus \mathcal{W}$ for all $t$.
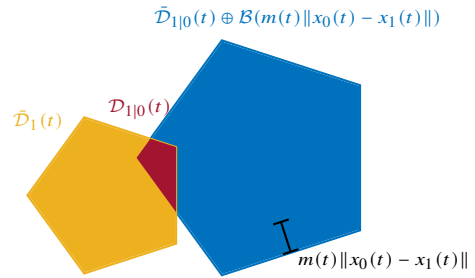
\begin{figure}[h]
\centering
\begin{tikzpicture}

\scriptsize

\tikzstyle{pentagon} = [regular polygon, regular polygon sides=5, draw, thick, inner sep=0, rotate=18]

\definecolor{mycolor1}{rgb}{0.00000,0.44700,0.74100}%
\definecolor{mycolor3}{rgb}{0.63500,0.07800,0.18400}
\definecolor{mycolor2}{rgb}{0.92900,0.69400,0.12500}%

\node[pentagon, fill=mycolor1, opacity=0.4, draw=none, minimum size=3.25cm] (setA) at (1,1) {};
\draw[mycolor1, thick] (setA.corner 1) -- (setA.corner 2) -- (setA.corner 3) -- (setA.corner 4) -- (setA.corner 5) -- cycle;
\node[pentagon,draw=mycolor1,dashed,minimum size=2.15cm] (setA2) at (1,1) {};

\node[pentagon, fill=mycolor2, opacity=0.3, draw=none, minimum size=2.15cm] (setC) at (-1,0.5) {};
\draw[mycolor2, thick] (setC.corner 1) -- (setC.corner 2) -- (setC.corner 3) -- (setC.corner 4) -- (setC.corner 5) -- cycle;

\begin{scope}
    \clip (setA.corner 1) -- (setA.corner 2) -- (setA.corner 3) -- (setA.corner 4) -- (setA.corner 5) -- cycle;
    \clip (setC.corner 1) -- (setC.corner 2) -- (setC.corner 3) -- (setC.corner 4) -- (setC.corner 5) -- cycle;
    \fill[mycolor3, opacity=0.6] (-3,-3) rectangle (3,3);
\end{scope}

\coordinate (midDashed) at ($(setA2.corner 3)!0.5!(setA2.corner 4)$);
\coordinate (midSolid) at ($(setA.corner 3)!0.5!(setA.corner 4)$);

\draw[|-|, thick] (midDashed) -- node[pos=1,yshift=-0.1cm,below,anchor=west] {$m(t)\|x_0(t)-x_1(t)\|$} (midSolid);

\fill[mycolor1] (1,1) circle (1.2pt) node[below] {$\bar{d}_0(t)$};
\fill[mycolor2] (-1,0.5) circle (1.2pt) node[below] {$\bar{d}_1(t)$};

\node[mycolor1] at (1.5,2.85) {$\bar{\mathcal{D}}_{1|0}(t)\oplus \mathcal{B}(m(t)\|x_0(t)-x_1(t)\|)$};
\node[mycolor2] at (-1.95,1.5) {$\bar{\mathcal{D}}_1(t)$};
\node[mycolor3] at (-0.6,1.65) {$\mathcal{D}_{1|0}(t)$};

\end{tikzpicture}
\caption{A depiction of the operation in \cref{eq:PREF_set_estimates} to obtain the smallest set $\mathcal{D}_{1|0}(t)$ containing $d_1(t)$ (in red). The intersection is between two sets $\bar{\mathcal{D}}_1(t)$ (in yellow), and $\bar{\mathcal{D}}_0(t)$ (dashed blue line) with the addition of $\mathcal{B}(m(t)\|x_0(t)-x_1(t)\|)$ to account for the variation of the state.}\label{fig:SN_set_membership}
\end{figure}

\subsection{State Error evolution}
\label{subsection:PREF_error_evolution}
In this section we study the dynamics of the \emph{state error} $e_k(t):=x_k(t)-z_k(t)$. To this end, let $\rho_k\in\R_{\geq 0}^{T}$ be defined so that for all $t\in\ztm$
\begin{align}
\|d-d_k^\text{ref}(t)\| \leq \rho_k(t),~~ \forall d \in \mathcal{D}_{k|k-1}^\text{ref}(t).%
\label{eq:PREF_disturbance_reference_set_estimate1}
\end{align}
Then we have the following.
\begin{lemma}\label{lemma:PREF_error_bound}
Let \cref{ass:PF_initial_condition,ass:PF_disturbance,ass:PF_noise} hold, and let $\eta_k\in\R^T_{\geq 0}$ be defined for some $k\in\N$ and all $t\in\Z_{[1,T-1]}$ as
\begin{align*}
\eta_k (t+1)=\bar{m}(t) \eta_k (t) + m(t) \| z_k(t) - x_k^\text{ref}(t) \| + \rho_k(t) + \bar{w},
\end{align*}
with $\eta_k(0)=r_0$, $\bar{m}(t):=\|A_K(t)\|+m(t)$, and $\rho_k$ satisfying \cref{eq:PREF_disturbance_reference_set_estimate1}. Then $e_k(t)\in \mathcal{B}(\eta_k(t))$.
\end{lemma}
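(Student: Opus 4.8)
The plan is to first derive the closed-loop recursion for the state error $e_k(t)=x_k(t)-z_k(t)$, and then prove the bound by induction on $t$. Subtracting the nominal dynamics \cref{eq:PREF_nominal_dynamics} from the true dynamics \cref{eq:PF_system}, inserting the input law \cref{eq:PREF_true_input}, and recalling $d_k(t)=\f(x_k(t),t)$ and $A_K(t)=A(t)+B(t)K(t)$, one obtains
\begin{align*}
e_k(t+1) = A_K(t)\,e_k(t) + \bigl(d_k(t)-d_k^\text{ref}(t)\bigr) + w_k(t),
\end{align*}
so it suffices to bound the norm of each of the three terms on the right.

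For the base case, \cref{ass:PF_initial_condition} gives $\|e_k(0)\| = \|x_k(0)-\x\| \le r_0 = \eta_k(0)$. For the inductive step, assume $\|e_k(t)\| \le \eta_k(t)$. Then $\|A_K(t)\,e_k(t)\| \le \|A_K(t)\|\,\eta_k(t)$, and $\|w_k(t)\| \le \bar w$ by \cref{ass:PF_noise}. For the middle term I would pass through the true reference disturbance $\f(x_k^\text{ref}(t),t)$ via the triangle inequality,
\begin{align*}
\|d_k(t)-d_k^\text{ref}(t)\| \le \|\f(x_k(t),t)-\f(x_k^\text{ref}(t),t)\| + \|\f(x_k^\text{ref}(t),t)-d_k^\text{ref}(t)\|.
\end{align*}
By \cref{ass:PF_disturbance} the first summand is at most $m(t)\|x_k(t)-x_k^\text{ref}(t)\|$; writing $x_k(t)-x_k^\text{ref}(t) = e_k(t) + \bigl(z_k(t)-x_k^\text{ref}(t)\bigr)$ and applying the triangle inequality together with the induction hypothesis bounds it by $m(t)\eta_k(t) + m(t)\|z_k(t)-x_k^\text{ref}(t)\|$. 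For the second summand, the argument of \cref{lemma:PREF_dist_set} applies with $x_k^\text{ref}$ substituted for $x_k$: for every $j \le k-1$, Lipschitzness (\cref{ass:PF_disturbance}) and \cref{eq:SN_disturbance_belong} give $\f(x_k^\text{ref}(t),t) \in \bar{\mathcal{D}}_j(t) \oplus \mathcal{B}(m(t)\|x_j(t)-x_k^\text{ref}(t)\|)$, hence $\f(x_k^\text{ref}(t),t) \in \mathcal{D}_{k|k-1}^\text{ref}(t)$, and the defining property \cref{eq:PREF_disturbance_reference_set_estimate1} of $\rho_k$ yields $\|\f(x_k^\text{ref}(t),t)-d_k^\text{ref}(t)\| \le \rho_k(t)$.

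Collecting the three contributions and substituting $\bar m(t)=\|A_K(t)\|+m(t)$ gives precisely the recursion defining $\eta_k(t+1)$, so $\|e_k(t+1)\| \le \eta_k(t+1)$ and the induction closes, yielding $e_k(t)\in\mathcal{B}(\eta_k(t))$ for all $t$. The only nonroutine point is recognizing that $d_k^\text{ref}(t)$ need not equal the true reference disturbance $\f(x_k^\text{ref}(t),t)$ but lies within distance $\rho_k(t)$ of it --- this is what makes the set-estimate radius $\rho_k(t)$ enter the bound, and it hinges on the reference version of the set inclusion of \cref{lemma:PREF_dist_set}; everything else is a direct propagation of norm bounds through the affine error recursion, so I anticipate no substantive obstacle.
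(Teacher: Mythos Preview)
Your proof is correct and follows essentially the same route as the paper: derive the affine error recursion $e_k(t+1)=A_K(t)e_k(t)+(d_k(t)-d_k^\text{ref}(t))+w_k(t)$, split the disturbance mismatch through $\f(x_k^\text{ref}(t),t)$, bound the Lipschitz piece via $x_k(t)-x_k^\text{ref}(t)=e_k(t)+(z_k(t)-x_k^\text{ref}(t))$, and invoke \cref{eq:PREF_disturbance_reference_set_estimate1} for the residual, then close by induction from $\|e_k(0)\|\le r_0$. If anything, your argument is slightly more explicit than the paper's in justifying that $\f(x_k^\text{ref}(t),t)\in\mathcal{D}_{k|k-1}^\text{ref}(t)$ before applying \cref{eq:PREF_disturbance_reference_set_estimate1}.
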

\begin{proof}
Using \cref{eq:PF_system,eq:PREF_nominal_dynamics}  we have
\begin{align*}
& \, e_k(t+1) - w_k(t) \\
= & \, A(t)[x_k(t)-z_k(t)]+B(t)[u_k(t)-v_k(t)]+d_k(t)-d_k^\text{ref}(t), \\
= & \, A_K(t)e_k(t)+d_k(t)-d_k^\text{ref}(t), \\
= & \, A_K(t)e_k(t)+d_k(t)-\f(x_k^\text{ref}(t),t)+\f(x_k^\text{ref}(t),t) - d_k^\text{ref}(t),
\end{align*}
and consequently
\begin{align}
\|e_k(t+1)\| \leq & \, \|A_K(t)\| \|e_k(t)\| + \|d_k(t)-\f(x_k^\text{ref}(t),t)\| \notag \\ & \, + \|\f(x_k^\text{ref}(t),t)-d_k^\text{ref}(t)\| + \bar{w} \label{eq:PREF_norm_error_evolution}
\end{align}
Next, using \cref{ass:PF_disturbance}, we obtain
\begin{align}
& \, \| d_k(t)-\f(x_k^\text{ref}(t),t) \| \notag \\
\leq \, & m(t) \| x_k(t)-x_k^\text{ref}(t) \|, \notag \\
= \, & m(t) \| z_k(t) + e_k(t) - x_k^\text{ref}(t) \|, \notag \\
\leq \, & m(t) \| z_k(t) - x_k^\text{ref}(t) \| + m(t) \| e_k(t) \|. \label{eq:PREF_error_bound_2}
\end{align}
From \cref{eq:PREF_disturbance_reference_set_estimate1} we have $\|\f(x_k^\text{ref}(t),t)-d_k^\text{ref}(t)\| \leq \rho_k(t)$ which, combined with \cref{eq:PREF_norm_error_evolution,eq:PREF_error_bound_2}, yields $\| e_k(t +1 ) \| \leq \bar{m}(t) \| e_k(t) \| +m(t) \| z_k(t) - x_k^\text{ref}(t) \| + \rho_k(t) + \bar{w}$. Since $e_k(0)=x_k(0)-z_k(0)\in \mathcal{B}(r_0)$, we have $\| e_k(t) \| \leq \eta _k(t)$ for all $k \geq 1$ and $t\in\zt$, which completes the proof.
\end{proof}
The variable $\eta_k$ represents the worst-case radius of the uncertainty ball, and we can use it to keep track of the uncertain evolution of the real state. To implement the scheme introduced in \cref{eq:SWN_binary_constraint}, we modify \cref{eq:PREF_disturbance_reference_set_estimate1} as follows
\begin{align}\label{eq:PREF_disturbance_reference_set_estimate2}
\begin{split}
& \norm{d^\text{ref}_{k}(t)-d_1 \alpha_k(t) -d_2 (1-\alpha_k(t))} \leq \rho_{k}(t) , \\
& \forall \, d_1 \in \mathcal{D}_{k-1|k-1}(t), ~ d_2 \in \mathcal{D}_{k-1|k-1}^\text{ref}(t),
\end{split}
\end{align}
where $\alpha_k(t)\in\left\{ 0,1 \right\}$ allows the control scheme to decide which disturbance value to use as a reference. Specifically, \cref{eq:PREF_disturbance_reference_set_estimate2} achieves the purpose of enforcing either $\|d^\text{ref}_{k}(t)-f(x_k^\text{ref}(t),t)\| \leq \rho_k(t)$, when $\alpha_k(t)=0$, or $\|d^\text{ref}_{k}(t)-f(x_k(t),t)\| \leq \rho_k(t)$, when $\alpha_k(t)=1$. Notice that if
\begin{align}\label{eq:PREF_update_reference_disturbance_set}
\mathcal{D}_{k|k-1}^\text{ref}(t) =
\begin{cases}
\mathcal{D}_{k-1|k-1}(t), &\text{if}~\alpha_k(t)=1,\\
\mathcal{D}_{k-1|k-1}^\text{ref}(t), &\text{if}~\alpha_k(t)=0.
\end{cases}
\end{align}
then \cref{eq:PREF_disturbance_reference_set_estimate2} is equivalent to \cref{eq:PREF_disturbance_reference_set_estimate1} for any $t\in\ztm$.

\subsection{Robust constraint satisfaction}
\label{subsection:PREF_noise_term}
The noise is randomly sampled at each timestep from the bounded set $\mathcal{W}$. To ensure safety, we consider the following tightened constraint sets
\begin{align*}
\begin{split}
\mathscr{X}^{\delta(t)} &:= \{ (z,v)\in\R^{n_x+n_u} : H_x z + H_u \leq h - \delta(t) \}, \\
\mathscr{X}_T^{\delta(T)} &:= \{ z\in\R^{n_x} : H_{x,T} z \leq h_T - \delta(T) \}.
\end{split}
\end{align*}
where $\mathscr{Z}^\delta$ is defined according to the previous definition, and $\delta\in\Rp^{Tn_h}$ satisfies
\begin{align}
\label{eq:PREF_robust_constraint_implication}
(z,v) \in \mathscr{Z}^\delta \implies (x,u)\in \mathscr{Z} ,~~~ \forall w\in \mathcal{W}^{T-1},
\end{align}
where $n_h$ denotes the number of rows in $H_x$. Since tightening constraints is usually detrimental for performance, in \cref{section:BLMPC} we devise a strategy to minimize the norm of $\delta$ while fulfilling \cref{eq:PREF_robust_constraint_implication}.
\begin{lemma}\label{lemma:PREF_rob_constr_sat}
Suppose $x_k(t)\in z_k(t) \oplus \mathcal{B}_p(\eta_k(t))$ for $t\in\zt$, then \cref{eq:PREF_robust_constraint_implication} holds if $\delta(t) := \psi(t) \eta(t)$ with
\begin{subequations}
\label{eq:PREF_supports}
\begin{align}
\psi(t) &= \sup_{v\in \mathcal{B}_p(1)} [H_x + H_uK(t)]v,~~ \forall t\in\ztm,\\
\psi(T) &= \sup_{v\in \mathcal{B}_p(1)} [H_{x,T}]v.
\end{align}
\end{subequations}
\end{lemma}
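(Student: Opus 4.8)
The plan is a standard tube-type constraint-tightening argument. First I would expand the true state--input pair at an arbitrary time $t\in\ztm$ around the nominal pair using the error $e_k(t):=x_k(t)-z_k(t)$. Substituting the feedback law \cref{eq:PREF_true_input}, i.e. $u_k(t)=v_k(t)+K(t)e_k(t)$, into $H_x x_k(t)+H_u u_k(t)$ gives
\begin{align*}
H_x x_k(t)+H_u u_k(t) = H_x z_k(t)+H_u v_k(t) + \big[H_x+H_uK(t)\big]\,e_k(t),
\end{align*}
so the entire question reduces to bounding the perturbation term $[H_x+H_uK(t)]\,e_k(t)$ row by row.

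Second, I would use the hypothesis $x_k(t)\in z_k(t)\oplus\mathcal{B}_p(\eta_k(t))$, i.e. $\|e_k(t)\|\le\eta_k(t)$, to write $e_k(t)=\eta_k(t)\,\xi$ for some $\xi\in\mathcal{B}_p(1)$ (the case $\eta_k(t)=0$ is immediate since then $e_k(t)=0$). Reading $\psi(t)$ in \cref{eq:PREF_supports} as the vector whose $i$-th entry is the supremum of the $i$-th row of $H_x+H_uK(t)$ over $\mathcal{B}_p(1)$, this yields the component-wise inequality $[H_x+H_uK(t)]\,e_k(t)=\eta_k(t)[H_x+H_uK(t)]\,\xi\le\eta_k(t)\,\psi(t)=\delta(t)$. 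Plugging this back and using that $(z_k,v_k)\in\mathscr{Z}^\delta$ forces $H_x z_k(t)+H_u v_k(t)\le h-\delta(t)$ at time $t$, one obtains $H_x x_k(t)+H_u u_k(t)\le h$, i.e. $(x_k(t),u_k(t))\in\mathscr{X}$. This holds for every $w\in\mathcal{W}^{T-1}$ because the noise enters only through $e_k(t)$, whose norm is already controlled by $\eta_k(t)$ in the hypothesis.

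Third, the terminal constraint is handled identically but more simply, since no input appears: $H_{x,T}x_k(T)=H_{x,T}z_k(T)+H_{x,T}e_k(T)\le (h_T-\delta(T))+\eta_k(T)\psi(T)=h_T$, so $x_k(T)\in\mathscr{X}_T$. Combining the stage and terminal parts gives $(x_k,u_k)\in\mathscr{Z}$, which is precisely \cref{eq:PREF_robust_constraint_implication}.

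I do not expect a genuine obstacle here; the argument is routine bookkeeping. The two points that need care are: (i) interpreting the supremum in \cref{eq:PREF_supports} as taken separately for each row, so that $\psi(t)$ is a vector and the concluding inequality is component-wise; and (ii) grouping the feedback term $H_uK(t)e_k(t)$ together with $H_x e_k(t)$ \emph{before} applying the support-function bound, rather than bounding $x_k(t)$ and $u_k(t)$ separately, which would give a strictly more conservative tightening and would not coincide with the $\psi(t)$ in \cref{eq:PREF_supports}.
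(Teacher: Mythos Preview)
Your proposal is correct and follows essentially the same approach as the paper's proof: expand $H_x x_k(t)+H_u u_k(t)$ around the nominal pair using the feedback law, bound the perturbation $[H_x+H_uK(t)]e_k(t)$ via the row-wise supremum over $\mathcal{B}_p(1)$ scaled by $\eta_k(t)$, and handle the terminal constraint analogously. The only cosmetic difference is that the paper invokes an external reference for the scaling identity $\sup_{v\in\mathcal{B}_p(\eta)}Mv=\eta\sup_{v\in\mathcal{B}_p(1)}Mv$, whereas you obtain it directly by writing $e_k(t)=\eta_k(t)\xi$ with $\xi\in\mathcal{B}_p(1)$.
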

\begin{proof}
Using the definition of $u_k(t)$ in \cref{eq:PREF_true_input} we have
\begin{multline*}
H_x x_k(t) + H_u u_k(t) \in H_x z_k(t) + H_u v_k(t) \\ \oplus \, [H_x + H_uK(t)] \mathcal{B}_p(\eta_k(t)),
\end{multline*}
and we therefore require
\begin{align*}
\delta(t) \geq [H_x + H_uK(t)] v,~~ \forall v \in \mathcal{B}_p(\eta_k(t)),
\end{align*}
which is equivalent to
\begin{align*}
H_x z_k(t) + H_u v_k(t) + \sup_{v\in \mathcal{B}_p(\eta_k(t))} [H_x + H_uK(t)] \leq h,
\end{align*}
where the $\sup$ operation is carried out row-wise. Using \cite[Lemma 2.24 (c)]{beck2017first}, we have that
\begin{align*}
\sup_{v\in \mathcal{B}_p(\eta_k(t))} [H_x + H_uK(t)] = \eta_k(t) \sup_{v\in \mathcal{B}_p(1)} [H_x + H_uK(t)].
\end{align*}
A similar reasonining can be used for $t=T$.
\end{proof}
In the remainder of this paper, we focus on solving the robust problem \Pabsw for the smallest value of $\delta$ satisfying \cref{eq:PREF_robust_constraint_implication}.
\begin{problem}[$\mathcal{P}_\delta$]
    \label{prob:Pabsw}
    \begin{align*}
        \underset{z,v}{\text{minimize}} & \quad \norm{z-r}^2_Q \\
        \text{subject to}
        & \quad z(t+1) = A(t) z(t) + B(t) v(t) + \f(z(t),t), \\
        & \quad (z,v) \in \mathscr{Z}^\delta, \\
        & \quad z(0)=\x.
    \end{align*}
\end{problem}

\section{Iterative learning predictive control}\label{section:BLMPC}
In this section we combine the results derived in \cref{section:PREF} to obtain the iterative learning predictive control (ILPC) scheme. We enforce the nominal dynamics \cref{eq:PREF_nominal_dynamics} and the tightened constraints on the left-hand side of \cref{eq:PREF_robust_constraint_implication}, where the value of $\delta$ is given in \cref{lemma:PREF_rob_constr_sat}, the evolution of the uncertainty $\eta_k$ is given in \cref{lemma:PREF_error_bound} and the value of $\rho_k$ is chosen to satisfy \cref{eq:PREF_disturbance_reference_set_estimate1}. The disturbance reference is chosen dynamically using \cref{eq:SWN_binary_constraint}.

The ILPC scheme involves two separate optimization problems. The first problem, denoted \Pilc, is solved before the beginning of each new iteration $k$. Problem \Pilc is parameterized by the sets $\mathcal{D}_{k-1|k-1},\mathcal{D}_{k-1|k-1}^\text{ref}$, and by the state trajectories $x_{k-1},x_{k-1}^\text{ref}$. In particular, when solving \Pilc for iteration $k$, we optimize over the entire trajectory $(z_k,v_k)$, and therefore do not require online measurements of the state $x_k(t)$ of the system in iteration $k$. \Pilc is a binary mixed-integer program and it is similar to a norm-optimal iterative learning control update (see, e.g., \cite{owens2005iterative}).\par\medskip
\noindent\hypertarget{prob:Pilc}{\textbf{Problem}} $\mathcal{P}_\text{ILC}(r,\mathcal{D}_{k-1|k-1},\mathcal{D}_{k-1|k-1}^\text{ref},x_{k-1},x_{k-1}^\text{ref})$.
\begin{align*}
\min_{\mathbf{x}_k} & \quad \|z_k-r\|_Q^2 + c_1 \|\xi_k\|_1 + \sum_{t=0}^{T-1} c_2(t) \rho_k(t) \\
\text{s.t.}%
& \quad z_k(0)=x_0,~~ \eta_k(0)=r_0, \\
& \quad z_k(t+1)=A(t)z_k(t)+B(t)v_k(t)+d_k^\text{ref}(t),\\
& \quad \eta_k(t+1)=\bar{m}(t)\eta_k(t)+m(t)\xi_k(t) + \rho_k(t) + \bar{w}, \\
& \quad x_k^\text{ref}(t) = \alpha_k(t) x_{k-1}(t) + (1-\alpha_k(t)) x_{k-1}^\text{ref}(t), \\
& \quad \|z_k(t)-x_k^\text{ref}(t)\| \leq \xi_k(t), \\
& \quad \|d^\text{ref}_{k}(t)-d_1 \alpha(t) -d_2 (1-\alpha(t))\| \leq \rho_{k}(t) , \\ 
& \quad \forall \, d_1 \in \mathcal{D}_{k-1|k-1}(t), ~ d_2 \in \mathcal{D}_{k-1|k-1}^\text{ref}(t), \\
& \quad (z_k,v_k)\in\mathscr{Z}^{\psi(t)\eta(t)}, \\
& \quad \alpha_k(t) \in \left\{ 0,1 \right\},~~t\in\ztm,
\end{align*}
where $Q\in\PDset{n_x}$, $c_1>0$ can be taken arbitrarily close to $0$, $c_2(t)=c_1/m(t)$, and $\mathbf{x}_k = (z_k,v_k,\xi_k,\rho_k,\eta_k,d_k^\text{ref},x_k^\text{ref},\alpha_k)$. The auxiliary variable $\xi_k\in\R^{T-1}$ upper-bounds the quantities $\|z_k(t)-x_k^\text{ref}(t)\|$ and avoids introducing unnecessary nonconvexity.

Next, we consider a second problem, denoted \Pmpc, which is solved online, at each time step, after $x_k(t)$ is measured. As the total time duration of the process is finite, this procedure is carried out in a shrinking horizon approach. Problem \Pmpc is similar to \Pilc, with the difference that the binary constraints have been removed, and the state and disturbance references have been replaced with the corresponding outputs of \Pilc. In this way, \Pmpc is a convex optimization problem (specifically, it is a second order cone program if $p=2$ in \cref{ass:PF_disturbance}, and a quadratic program if $p=\infty$) that can be solved efficiently.\par\medskip
\noindent\hypertarget{prob:Pmpc}{\textbf{Problem}} $\mathcal{P}_\text{MPC}(x_k(\t),\t,x_k^{\text{ref},\t},d_k^\text{ref},\rho_k,v_k)$.
\begin{align*}
\min_{\mathbf{x}_k^i} & \quad %
\|z_k^\t-r^\t\|^2_Q + c_1\|\xi_k^\t\|_1 + \sum_{t=0}^{T-\t-1} \| v_k^\t(t)-v_k(\t+t) \|_P^2 \\
\text{s.t.}
& \quad \eta_k^\t(0) \geq \norm{z_k^\t(0)-x_k(\t)}_p,\\
& \quad z_k^\t(t+1)=A(t)z_k^\t(t)+B(t)v_k^\t(t)+d_k(\t+t),\\
& \quad \eta_k^\t(t+1)=\bar{m}(t)\eta_k^\t(t)\!+\!m(t)\xi_k^\t(t)\!+\!\rho_k(\t+t)\!+\!\bar{w},\\
& \quad \norm{z_k^\t(t)-x_k^{\text{ref},\t}(t)}_p \leq \xi_k^\t(t),\\
& \quad (z_k^\t(t),v_k^\t(t))\in\mathscr{X}^{\psi(\t+t)\eta_k^\t(t)},\\
& \quad z_k^\t(T-\t)\in\mathscr{X}_T^{\psi(\t+T)\eta_k^\t(T-\t)},\\
& \quad t\in\Z_{[0,T-\t-1]},
\end{align*}
where $P\in\PSDset{n_u}$ and $\mathbf{x}_k^i=(z_k^\t,v_k^\t,\xi_k^\t,\eta_k^\t,d_k^\t,\rho_k^\t)$. We denote with the superscript $i$ the time-step at which the optimization takes place (e.g. $z_k^i$ denotes time $i$ and iteration $k$, since the total number of time-steps remaining until the end of the iteration is $T-i$, we have $z_k^i\in\R^{(T-i)n_x}$). We use similar notations for $\xi_k^i$, $v_k^i$, $\eta_k^i$, $x_k^{\text{ref},i}$, and $r^i$.

\cref{alg:BMPC} summarizes the control strategy. The non-convex and computationally intensive \Pilc is solved before the execution of the system begins with the goal of identifying the references $x_k^\text{ref},d_k^\text{ref}$. Then, \Pmpc is solved in shrinking horizon to reject disturbances within an iteration while satisfying constraints.

\begin{algorithm}[t!]
\caption{Iterative Learning Predictive Control Scheme}
\label{alg:BMPC}
\begin{algorithmic}[1]
\Init Set $x_0^\text{ref}(t) = x_0(t)$ and $\mathcal{D}^\text{ref}_{0|-1}(t)=\bar{d}_0(t) \oplus \mathcal{W}$ for all $t$.
\While{not converged}
    \State Measure $\bar{d}_{k-1}(t)$ for $t\in\ztm$.
    \State Construct $\mathcal{D}_{k-1|k-1}(t)$ for $t\in\ztm$ using \cref{eq:PREF_set_estimates}.
    \State Construct $\mathcal{D}_{k-1|k-2}^\text{ref}$ for $t\in\ztm$ using \cref{eq:SN_update_disturbance_set}.
    \State Solve \Pilc and store its optimizers $x_k^{\text{ref}}$, $\alpha_k$.
    \State Compute $\mathcal{D}_{k|k-1}^\text{ref}(t)$ for $t\in\ztm$ using \cref{eq:PREF_update_reference_disturbance_set}.
    \For{$\t=0,\dots,T-1$}
    \State Measure $x_k(\t)$.
    \State Set $x_k^{\text{ref},\t}(t)=x_k^{\text{ref},*}(\t+t)$ for $t\in\Z_{[0,T-\t-1]}$.
    \State Solve \Pmpc and store $v_k^i(0)$, $z_k^i(0)$.
    \State Apply $u_k(\t)=v_k^{\t,*}(0)+K(t)(x_k(\t)-z_k^{\t,*}(0))$.
    \State $t\gets t+1$
    \EndFor
    \State $k\gets k+1$
\EndWhile
\end{algorithmic}
\end{algorithm}
The constraint in \cref{eq:PREF_disturbance_reference_set_estimate2} can be implemented efficiently if $\mathcal{D}_{k-1|k-1}(t)$ and $\mathcal{D}_{k-1|k-1}^\text{ref}(t)$ are given in vertex representation. If the number of vertices grows too large as $k$ increases, we can replace $\mathcal{D}_{k-1|k-1}(t)$ and $\mathcal{D}_{k-1|k-1}^\text{ref}(t)$ with over approximations with fixed complexity using specialized software, e.g. \cite{kvasnica2004multi}. How this may hinder the theoretical results of \cref{section:AN} is left for future work.
%

\section{Analysis}\label{section:AN}
In this section, we describe the theoretical properties of \cref{alg:BMPC}.

\subsection{Robust constraint satisfaction}

We begin the analysis by showing that the control inputs $u_k$ obtained generated by \cref{alg:BMPC} produce state trajectories $x_k$ that satisfy the system constraints $(x_k,u_k)\in\mathscr{Z}$ for all iterations and for all possible realizations of the noise $w_k$. For notational simplicity, in the following $\mathbf{x}_k$ denotes an optimizer of \Pilc for iteration $k$, and similarly $\mathbf{x}_k^{\t}$ denotes an optimizer of \Pmpc for iteration $k$ and time $\t$.

We now show that a result similar to \cref{lemma:PREF_error_bound} continues to hold if the variables $x_k(t)$, $z_k(t)$, $\eta_k(t)$ are replaced with $x_k(\t+t)$, $z_k^\t(t)$, $\eta_k^\t(t)$.
\begin{lemma}
\label{lemma:AN_tubes}
Let \cref{ass:PF_disturbance,ass:PF_initial_condition,ass:PF_noise} hold. Then for any $k\in\Np$
\begin{align}
\label{eq:AN_tubes_1}
x_k(t)\in z_k(t) \oplus \mathcal{B}(\eta_k(t))
\end{align}
for $t\in\zt$. Moreover, given any $k\in\Np$ and $\t\in\ztm$, for each $t\in\Z_{[T-\t]}$
\begin{align}
\label{eq:AN_tubes_2}
x_k(\t+t)\in z_k^\t(t) \oplus \mathcal{B}(\eta_k^\t(t)).
\end{align}
\end{lemma}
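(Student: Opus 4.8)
The plan is to prove \eqref{eq:AN_tubes_1} first, then bootstrap from it to obtain \eqref{eq:AN_tubes_2}. The structure of \eqref{eq:AN_tubes_1} mirrors \cref{lemma:PREF_error_bound} exactly: the only thing to check is that the quantities appearing in \Pilc --- namely $\xi_k$, $\rho_k$, and the dynamically chosen disturbance reference $d_k^\text{ref}$ via the binary $\alpha_k$ --- are compatible with the hypotheses of that lemma. First I would note that the constraint $\|z_k(t)-x_k^\text{ref}(t)\| \leq \xi_k(t)$ in \Pilc means that the $\eta_k$-recursion in \Pilc dominates the one in \cref{lemma:PREF_error_bound} (since $m(t)\geq 0$ and $\xi_k(t)\geq \|z_k(t)-x_k^\text{ref}(t)\|$), so it suffices to verify the $\rho_k$ requirement \eqref{eq:PREF_disturbance_reference_set_estimate1}. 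Here I would invoke the discussion around \eqref{eq:PREF_disturbance_reference_set_estimate2}--\eqref{eq:PREF_update_reference_disturbance_set}: the constraint on $\rho_k$ in \Pilc is precisely \eqref{eq:PREF_disturbance_reference_set_estimate2}, and the set-update rule \eqref{eq:PREF_update_reference_disturbance_set} (implemented in \cref{alg:BMPC} line 7) makes this equivalent to \eqref{eq:PREF_disturbance_reference_set_estimate1}. Combined with \cref{lemma:PREF_dist_set}, which guarantees $f(x_k^\text{ref}(t),t)\in\mathcal{D}^\text{ref}_{k|k-1}(t)$ (using that $f(x_k^\text{ref})$ equals some past measured $d_j$ by the reference-selection rule \eqref{eq:SWN_binary_constraint}), the hypothesis $\|f(x_k^\text{ref}(t),t)-d_k^\text{ref}(t)\|\leq\rho_k(t)$ of \cref{lemma:PREF_error_bound} holds. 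Then \cref{lemma:PREF_error_bound} applies verbatim and yields \eqref{eq:AN_tubes_1}.

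For \eqref{eq:AN_tubes_2} I would run essentially the same error-evolution argument, but now with the \emph{true} input actually applied by the algorithm, $u_k(\t)=v_k^{\t,*}(0)+K(\t)(x_k(\t)-z_k^{\t,*}(0))$ (line 12), against the nominal dynamics of \Pmpc. Define the error $e_k^\t(t):=x_k(\t+t)-z_k^\t(t)$ and compute $e_k^\t(t+1)$ using \eqref{eq:PF_system} and the $z_k^\t$-recursion; the feedback term collapses the $A$ and $B$ contributions into $A_K(\t+t)e_k^\t(t)$ exactly as in the proof of \cref{lemma:PREF_error_bound}. The disturbance mismatch term is $d_k(\t+t)-d_k^\text{ref}(\t+t)$: bounding $\|d_k(\t+t)-f(x_k^\text{ref}(\t+t),\t+t)\|\leq m(\t+t)\|z_k^\t(t)-x_k^\text{ref}(\t+t)\|+m(\t+t)\|e_k^\t(t)\|$ by Lipschitzness and the triangle inequality, then using $\|f(x_k^\text{ref}(\t+t))-d_k^\text{ref}(\t+t)\|\leq\rho_k(\t+t)$ (same $\rho_k$ carried over from \Pilc, as passed into \Pmpc) and $\|w_k\|\leq\bar w$, gives exactly the $\eta_k^\t$-recursion of \Pmpc once we replace $\|z_k^\t(t)-x_k^\text{ref}(\t+t)\|$ by its upper bound $\xi_k^\t(t)$. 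The base case is the initialization constraint $\eta_k^\t(0)\geq\|z_k^\t(0)-x_k(\t)\|_p$ in \Pmpc, which gives $e_k^\t(0)\in\mathcal{B}(\eta_k^\t(0))$ directly. Induction on $t$ then closes \eqref{eq:AN_tubes_2}.

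The main obstacle, and the point most worth spelling out carefully, is the bookkeeping linking the three different disturbance-set objects ($\mathcal{D}_{k-1|k-1}$, $\mathcal{D}_{k-1|k-1}^\text{ref}$, and $\mathcal{D}_{k|k-1}^\text{ref}$) through the binary variable $\alpha_k$: one must check that, whichever branch $\alpha_k(t)\in\{0,1\}$ selects, the quantity $f(x_k^\text{ref}(t),t)$ genuinely lies in the set that $\rho_k(t)$ is required (via \eqref{eq:PREF_disturbance_reference_set_estimate2}) to cover. This is where \cref{lemma:PREF_dist_set} and the equivalence \eqref{eq:PREF_update_reference_disturbance_set} do the real work, and it is the only non-mechanical part; the rest is a re-run of the \cref{lemma:PREF_error_bound} computation with shifted indices. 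A minor secondary point is that \eqref{eq:AN_tubes_2} should be shown to be consistent across successive time steps $\t$ (i.e.\ that the tube at $\t$ contains the realized $x_k(\t)$, which re-seeds the tube at $\t{+}1$); this follows because $x_k(\t)\in z_k^\t(0)\oplus\mathcal{B}(\eta_k^\t(0))$ is exactly the \Pmpc initialization, so no separate argument is needed.
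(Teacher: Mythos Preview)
Your proposal is correct and follows essentially the same approach as the paper: invoke \cref{lemma:PREF_error_bound} for \eqref{eq:AN_tubes_1}, then rerun the error-evolution computation with shifted indices $e_k^\t(t):=x_k(\t+t)-z_k^\t(t)$ and seed it via the \Pmpc constraint $\eta_k^\t(0)\geq\|z_k^\t(0)-x_k(\t)\|$ for \eqref{eq:AN_tubes_2}. Your discussion of the $\alpha_k$/disturbance-set bookkeeping is more explicit than the paper's (which simply asserts that \eqref{eq:AN_tubes_1} ``follows immediately'' from \cref{lemma:PREF_error_bound}), but the underlying argument is identical.
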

\begin{proof}
\cref{eq:AN_tubes_1} follows immediately from \cref{lemma:PREF_error_bound}. To prove \cref{eq:AN_tubes_2}, consider that
\begin{multline}
\label{eq:AN_tubes_3}
\|e_k^\t(t+1)\| \leq \bar{m}(\t+t) + m(\t+t) \|z_k^\t(t)-x_k^\text{ref}(\t+t)\| \\ +\rho_k(\t+t)+\bar{w},
\end{multline}
where $e_k^\t(t):=x_k(\t+t)-z_k^\t(t)$.
Moreover, the constraint $\eta_k^\t(0)\geq\|z_k^\t(0)-x_k(\t)\|$ ensures that $x_k(\t)\in z_k^\t(0)\oplus \mathcal{B}(\eta_k^\t(0))$. Combining this with \cref{eq:AN_tubes_3} and the definition of $\eta_k^\t$ in \Pmpc, we obtain $\|e_k^\t(t)\| \leq \eta_k^\t(t)$, proving \cref{eq:AN_tubes_2}.
\end{proof}
%
%
\begin{proposition}[Robust constraint satisfaction]
\label{prop:AN_robust_constraint_satisfaction}
Let \cref{ass:PF_disturbance,ass:PF_initial_condition,ass:PF_noise} hold. Suppose that, for some $k\in\Np$ and for all $\t\in\ztm$, a solution $\mathbf{x}_k^\t$ to \Pmpc exists. Then the control inputs $u_k$ generated via \cref{alg:BMPC} produce closed-loop state trajectories $x_k$ satisfying $(x_k,u_k)\in\mathscr{Z}$.
\end{proposition}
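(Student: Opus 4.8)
The plan is to combine \cref{lemma:AN_tubes} with \cref{lemma:PREF_rob_constr_sat} at each time step, using the fact that the inner problem \Pmpc is solved with horizon shrinking from $\t=0$ through $\t=T-1$. Concretely, fix $k\in\Np$. For each $\t\in\ztm$, \cref{lemma:AN_tubes} (specifically \cref{eq:AN_tubes_2} at $t=0$, together with the shrinking-horizon tube recursion built into \Pmpc) guarantees that the measured state satisfies $x_k(\t)\in z_k^\t(0)\oplus\mathcal{B}(\eta_k^\t(0))$. The constraint in \Pmpc requires $(z_k^\t(0),v_k^\t(0))\in\mathscr{X}^{\psi(\t)\eta_k^\t(0)}$, i.e., the nominal pair lies in the constraint set tightened by exactly the amount $\delta(\t)=\psi(\t)\eta_k^\t(0)$ prescribed by \cref{lemma:PREF_rob_constr_sat}. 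Applying \cref{lemma:PREF_rob_constr_sat} with the local tube radius $\eta_k^\t(0)$ in place of $\eta_k(t)$ then yields $(x_k(\t),u_k(\t))\in\mathscr{X}$, where $u_k(\t)=v_k^{\t,*}(0)+K(t)(x_k(\t)-z_k^{\t,*}(0))$ is precisely the feedback form assumed in the proof of \cref{lemma:PREF_rob_constr_sat}.

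The main steps, in order, are: (i) invoke \cref{lemma:AN_tubes} to get $x_k(\t)\in z_k^\t(0)\oplus\mathcal{B}(\eta_k^\t(0))$ for every $\t$; (ii) re-run the short computation in the proof of \cref{lemma:PREF_rob_constr_sat} — namely $H_x x_k(\t)+H_u u_k(\t)\in H_x z_k^\t(0)+H_u v_k^\t(0)\oplus[H_x+H_uK(\t)]\mathcal{B}(\eta_k^\t(0))$, which by the support-function identity equals the nominal value plus $\psi(\t)\eta_k^\t(0)$ row-wise — and combine with the tightened constraint $(z_k^\t(0),v_k^\t(0))\in\mathscr{X}^{\psi(\t)\eta_k^\t(0)}$ to conclude $(x_k(\t),u_k(\t))\in\mathscr{X}$; (iii) handle the terminal step $t=T$ separately: at $\t=T-1$ the constraint $z_k^{T-1}(1)\in\mathscr{X}_T^{\psi(T)\eta_k^{T-1}(1)}$ together with $x_k(T)\in z_k^{T-1}(1)\oplus\mathcal{B}(\eta_k^{T-1}(1))$ and the $t=T$ case of \cref{lemma:PREF_rob_constr_sat} gives $x_k(T)\in\mathscr{X}_T$; (iv) quantify over all $\t\in\ztm$ to obtain $(x_k(t),u_k(t))\in\mathscr{X}$ for all $t\in\ztm$ and hence $(x_k,u_k)\in\mathscr{Z}$.

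The one point needing care — and the place I expect the only real subtlety — is consistency of the tube radius used for tightening at time $\t$ with the radius that actually bounds $\|x_k(\t)-z_k^\t(0)\|$. The outer problem \Pilc is solved assuming a nominal trajectory $z_k$, but \Pmpc at time $\t$ re-anchors its nominal trajectory at $z_k^\t(0)$ with the relaxed initial condition $\eta_k^\t(0)\geq\|z_k^\t(0)-x_k(\t)\|$, and propagates its own $\eta_k^\t$. \cref{lemma:AN_tubes} is precisely the lemma that certifies $\|e_k^\t(t)\|\le\eta_k^\t(t)$ for this re-anchored, shrinking-horizon tube, so the argument reduces to citing it at $t=0$ (and, for the terminal constraint, at $t=T-\t$ with $\t=T-1$); no re-derivation of the error recursion is needed. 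A minor bookkeeping remark is that \Pmpc must be feasible at every $\t\in\ztm$, which is exactly the hypothesis of the proposition, so recursive feasibility is not invoked here — it is established separately.
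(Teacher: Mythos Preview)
Your proposal is correct and follows essentially the same approach as the paper: invoke \cref{lemma:AN_tubes} at $t=0$ to obtain $x_k(\t)\in z_k^\t(0)\oplus\mathcal{B}(\eta_k^\t(0))$, combine with the tightened constraint in \Pmpc via \cref{lemma:PREF_rob_constr_sat} to get $(x_k(\t),u_k(\t))\in\mathscr{X}$, and treat the terminal step at $\t=T-1$ using $x_k(T)\in z_k^{T-1}(1)\oplus\mathcal{B}(\eta_k^{T-1}(1))$ together with the terminal tightening. The paper's proof is slightly terser---it cites \cref{lemma:PREF_rob_constr_sat} directly rather than re-expanding the support-function computation---but the logical structure is identical.
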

\begin{proof}
For a given time step $\t\in\ztm$, we have from \cref{lemma:AN_tubes} that $x_k(\t)\in z_k^\t(0) \oplus \mathcal{B}_p(\eta_k^\t(0))$. Moreover, since the variables $(z_k^\t(0),v_k^\t(0))$ must satisfy the constraint $(z_k^\t(0),v_k^\t(0))\in \mathscr{X}^{\psi(\t)\eta_k^\t(0)}$, we have that $(x_k(t),u_k(t))\in \mathscr{X}$ thanks to \cref{lemma:PREF_rob_constr_sat}. The same holds for $\t=T-1$ since $x_k(T)\in z_k^{T-1}(1) \oplus \mathcal{B}_p(\eta_k^{T-1}(1))$ and $z_k^{T-1}(1)\in \mathscr{X}_T^{\psi(T)\eta_k^{T-1}(1)}$. This completes the proof.
\end{proof}
\subsection{Recursive feasibility}
Next, we consider \emph{recursive feasibility}. Since the proposed scheme is repetitive, the concept of recursive feasibility is different than the standard definition in the MPC literature (see, e.g., \cite{rakovic_handbook_2018}). Recursive feasibility in this setting means that, assuming problem \Pilc admits a feasible solution at $k=1$, then the problem will be feasible for all $k\in\Np$. In addition, under the same initial feasibility assumption, we require that the problem \Pmpc is feasible for all $t\in\ztm$ and $k\in\Np$.

To ensure feasibility for $k=1$, we must impose an assumption on the initial trajectory $x_0,u_0$, which must satisfy the robust constraints under the worst-case value of $\eta$.
\begin{assumption}
\label{ass:AN_initial_trajectories}
We have access to trajectories $x_0\in\Rxt,u_0\in\Rutm$ that satisfy $(x_0,u_0)\in\mathscr{Z}^\delta$, with $\delta(t)=\psi(t)\eta_0(t)$, $\psi$ is as in \cref{eq:PREF_supports}, $\eta_0(0)=\bar{w}$, and $\eta_0(t+1)=\bar{m}(t)\eta_0(t)+2\bar{w}$ for $t\in\ztm$.
\end{assumption}
\cref{ass:AN_initial_trajectories} is required to guarantee the feasibility of \Pilc for $k=1$, when only one choice of disturbance reference exists (i.e. $d_1^\text{ref}=d_0$).
\begin{proposition}[Recursive feasibility, iteration axis]
\label{prop:AN_recursive_feasibility}
Let \cref{ass:PF_disturbance,ass:PF_initial_condition,ass:PF_noise,ass:AN_initial_trajectories} hold. Then for any $k \geq 1$ problem \Pilc has a feasible solution.
\end{proposition}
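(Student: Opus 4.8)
The plan is to prove recursive feasibility by induction on the iteration index $k$. For the base case $k=1$, I would exploit \cref{ass:AN_initial_trajectories}: since $d_1^\text{ref}=d_0$ is the only admissible choice (the binary variable $\alpha_1(t)$ is forced, because $\mathcal{D}_{0|0}$ and $\mathcal{D}_{0|0}^\text{ref}$ both reduce to $\bar{d}_0(t)\oplus\mathcal{W}$), the nominal dynamics of \Pilc with $z_1=z_0$, $v_1=v_0$ coincide with the propagation of $x_0$ up to the noise. I would then set $x_1^\text{ref}=x_0$, take $\xi_1(t)=\|z_1(t)-x_0(t)\|$ and $\rho_1(t)$ as the smallest value satisfying the set-membership constraint, and verify that the resulting $\eta_1$ is dominated by the worst-case $\eta_0$ in \cref{ass:AN_initial_trajectories}; the tightened constraint $(z_1,v_1)\in\mathscr{Z}^{\psi(t)\eta_1(t)}$ then follows from $(x_0,u_0)\in\mathscr{Z}^{\psi(t)\eta_0(t)}$ and monotonicity of the tightening in $\eta$.

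For the inductive step, assume \Pilc is feasible at iteration $k$, so the trajectories $z_k,v_k,x_k^\text{ref}$ are defined, and assume (as in the hypothesis feeding into \cref{prop:AN_robust_constraint_satisfaction}) that \Pmpc was solved at each $\t\in\ztm$, producing an actual closed-loop trajectory $x_k$ that, by \cref{lemma:AN_tubes}, satisfies $x_k(t)\in z_k(t)\oplus\mathcal{B}(\eta_k(t))$ and, crucially, $(x_k,u_k)\in\mathscr{Z}$ by \cref{prop:AN_robust_constraint_satisfaction}. The key construction for iteration $k+1$ is to pick the \emph{new} disturbance reference via the binary choice in \cref{eq:SWN_binary_constraint}: set $d_{k+1}^\text{ref}(t)=d_k(t)$ for every $t$, i.e. $\alpha_{k+1}(t)=1$, so that $x_{k+1}^\text{ref}=x_k$. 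With this choice I would exhibit the candidate solution $z_{k+1}=x_k$ (the realized state!), choosing $v_{k+1}$ to be the realized input net of feedback, i.e. $v_{k+1}(t)=u_k(t)$; then $\xi_{k+1}(t)=\|z_{k+1}(t)-x_{k+1}^\text{ref}(t)\|=0$, and the new error tube is driven only by $\rho_{k+1}$ and $\bar w$. I would choose $\rho_{k+1}(t)$ as the smallest value satisfying \cref{eq:PREF_disturbance_reference_set_estimate2} with $\alpha_{k+1}(t)=1$, which is possible because $\mathcal{D}_{k|k}$ is a nonempty bounded polytope and $d_k(t)\in\mathcal{D}_{k|k}(t)$ by \cref{lemma:PREF_dist_set}.

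The remaining work is to check that this candidate is actually feasible for \Pilc at $k+1$, i.e. that $(z_{k+1},v_{k+1})\in\mathscr{Z}^{\psi(t)\eta_{k+1}(t)}$. Here I would argue that, because $z_{k+1}=x_k$ was a genuine realized state satisfying $(x_k,u_k)\in\mathscr{Z}$, and because the new tube radius $\eta_{k+1}(t)$ (with $\xi_{k+1}\equiv 0$) is in fact no larger than the tube radius that was used implicitly when certifying $x_k$'s feasibility — the worst-case propagation with the learned-set residual $\rho_{k+1}$ can be bounded, using that $\mathcal{D}_{k|k}\subseteq\mathcal{D}_{k|k-1}^\text{ref}$ or an analogous containment from the monotone set updates of \cref{lemma:PREF_dist_set} — the original (untightened) slack absorbed in passing from $\mathscr{Z}^{\psi(t)\eta_k(t)}$ back to $\mathscr{Z}$ leaves enough room. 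The main obstacle I anticipate is exactly this tube-size comparison: showing that the error dynamics in \Pilc, when initialized with $z_{k+1}=x_k$ and fed $\rho_{k+1}$, produce an $\eta_{k+1}$ small enough that the tightened constraints remain feasible. This needs a careful argument that the set estimate $\mathcal{D}_{k|k}$ is "tight enough" — ideally one shows $\rho_{k+1}(t)\le 2\bar w$ or some comparable bound inherited from \cref{ass:AN_initial_trajectories}-type reasoning, or one leans on the fact that $d_k(t)\in\bar d_k(t)\oplus\mathcal{W}$ directly bounds $\rho_{k+1}$. I would also need to handle the MPC recursive feasibility claim (feasibility of \Pmpc for all $\t,k$), which should follow by a shifted-solution argument: the tail of the \Pilc optimizer, shifted to start at $\t$ and with $\eta_k^\t(0)=\|z_k^\t(0)-x_k(\t)\|$ accommodated by \cref{lemma:AN_tubes}, is feasible for \Pmpc — though the statement as phrased only asserts feasibility of \Pilc, so I would focus the proof there and invoke the MPC part only if the statement is later extended.
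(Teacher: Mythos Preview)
Your inductive step goes in the wrong direction and has a genuine gap. You set $\alpha_{k+1}\equiv 1$, pick $z_{k+1}=x_k$, $v_{k+1}=u_k$, and then try to show $(x_k,u_k)\in\mathscr{Z}^{\psi(t)\eta_{k+1}(t)}$. But all you know about the realized pair $(x_k,u_k)$ is that it lies in the \emph{untightened} set $\mathscr{Z}$ (via \cref{prop:AN_robust_constraint_satisfaction}); the slack $\psi(t)\eta_k(t)$ that was present around $z_k$ is not present around $x_k$. Your sketch that ``the original (untightened) slack absorbed in passing from $\mathscr{Z}^{\psi(t)\eta_k(t)}$ back to $\mathscr{Z}$ leaves enough room'' is exactly backwards: that slack was consumed to cover the gap between $z_k$ and $x_k$, not stored for later. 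Concretely, with $\xi_{k+1}=0$ and $d_{k+1}^\text{ref}=\bar d_k$, you get $\rho_{k+1}\le\bar w$ and hence $\eta_{k+1}(t+1)=\bar m(t)\eta_{k+1}(t)+2\bar w$ with $\eta_{k+1}(0)=r_0$; there is no reason this tightening is compatible with a trajectory that only satisfies $\mathscr{Z}$.

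The paper's proof avoids this entirely by taking the \emph{other} branch of the binary choice: set $\tilde\alpha_k\equiv 0$ and reuse the previous nominal optimizer $\tilde{\mathbf{x}}_k=\mathbf{x}_{k-1}$ verbatim. Then $\tilde z_k=z_{k-1}$, $\tilde v_k=v_{k-1}$, $\tilde d_k^\text{ref}=d_{k-1}^\text{ref}$, $\tilde x_k^\text{ref}=x_{k-1}^\text{ref}$, $\tilde\xi_k=\xi_{k-1}$, $\tilde\rho_k=\rho_{k-1}$, and hence $\tilde\eta_k=\eta_{k-1}$. The tightened constraint $(z_{k-1},v_{k-1})\in\mathscr{Z}^{\psi(t)\eta_{k-1}(t)}$ was already satisfied at iteration $k-1$, and the set-membership constraint with $\alpha_k=0$ is over $\mathcal{D}_{k-1|k-1}^\text{ref}$, which (by the update \cref{eq:PREF_update_reference_disturbance_set} and the monotone intersection in \cref{eq:SN_update_disturbance_set}) is contained in the set that constrained $\rho_{k-1}$. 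This is the whole reason the scheme carries the binary choice \cref{eq:SWN_binary_constraint}: $\alpha_k=0$ is precisely the ``do nothing new'' option that makes recursive feasibility and non-increasing cost immediate. Your base case is close to the paper's (they take $\tilde z_1=x_0$, $\tilde\xi_1=0$, $\tilde\rho_1=\bar w\mathbf{1}$ and invoke \cref{ass:AN_initial_trajectories} directly), but for $k\ge 2$ you should switch to $\alpha_k=0$ and reuse the nominal optimizer rather than the realized closed-loop trajectory.
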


\begin{proof}
We use induction. First, for $k=1$ we use the initial trajectory $(x_0,u_0)$ given in \cref{ass:AN_initial_trajectories} to construct the feasible trajectory $\tilde{\mathbf{x}}_1$, defined as follows
\begin{align}
\label{eq:AN_rec_feas_1}
\begin{split}
&\tilde{z}_1 = x_0,~~\tilde{v}_1 = u_0,~~\tilde{d}_1^\text{ref}=\bar{d}_0,~~\tilde{x}_1^\text{ref} = x_0,\\
&\tilde{\alpha}_1=\mathbf{1},~~\tilde{\xi}_1=\mathbf{0},~~\rho_1=\mathbf{1}\bar{w},~~\tilde{\eta}_1(0)=r_0,\\
&\tilde{\eta}_1(t+1)=\tilde{\eta}_1(t)+2\bar{w},~~t\in\ztm.
\end{split}
\end{align}
Trajectory $\tilde{\mathbf{x}}_1$ satisfies all constraints in \Pilc thanks to \cref{ass:AN_initial_trajectories}. Now consider, for any $k \geq 2$, the trajectory $\tilde{\mathbf{x}}_k$ defined as $\tilde{\mathbf{x}}_k=\mathbf{x}_{k-1}$ with the modification $\tilde{\alpha}_k=\mathbf{0}$. It can be verified that $\tilde{\mathbf{x}}_k$ is feasible by construction for \Pilc for iteration $k$. This completes the induction step and the proof.
\end{proof}
Similarly, we can prove that \Pmpc is recursively feasible in the time axis.
\begin{proposition}[Recursive feasibility, time axis]
\label{prop:AN_recursive_feasibility_time}
Let \cref{ass:PF_disturbance,ass:PF_initial_condition,ass:PF_noise,ass:AN_initial_trajectories} hold. Then \Pmpc admits a feasible solution for any $k \geq 1$ and $t\in\ztm$.
\end{proposition}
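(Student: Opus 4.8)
The plan is to mimic the induction argument of \cref{prop:AN_recursive_feasibility}, but now along the time axis $\t = 0,1,\dots,T-1$, for a fixed iteration $k \geq 1$. The key observation is that once \Pilc has been solved, it supplies a feasible nominal trajectory $(z_k, v_k, \xi_k, \rho_k, \eta_k, d_k^\text{ref}, x_k^\text{ref})$ for the \emph{whole} horizon, and this trajectory, read from time $\t$ onward, is exactly the natural candidate solution for \Pmpc at step $\t$. So the base case $\t = 0$ should follow by taking $z_k^0(t) = z_k(t)$, $v_k^0(t) = v_k(t)$, $\xi_k^0(t) = \xi_k(t)$, $\rho_k^0(t) = \rho_k(t)$, $d_k^0(t) = d_k^\text{ref}(t)$ for $t \in \ztm$, and $\eta_k^0(0) = r_0 \geq \|z_k^0(0) - x_k(0)\|$ (using $x_k(0) \in \{\x\}\oplus\mathcal{B}(r_0)$ from \cref{ass:PF_initial_condition} and $z_k(0) = \x$), so that $\eta_k^0(t) = \eta_k(t)$ propagates and all tightened constraints $(z_k,v_k) \in \mathscr{Z}^{\psi\eta}$ carry over verbatim from feasibility of \Pilc.

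For the induction step, I would assume \Pmpc is feasible at time $\t$ with optimizer $\mathbf{x}_k^\t$, the scheme applies $u_k(\t) = v_k^{\t,*}(0) + K(\t)(x_k(\t) - z_k^{\t,*}(0))$, and the system evolves to $x_k(\t+1)$. I then construct a candidate $\tilde{\mathbf{x}}_k^{\t+1}$ by shifting: $\tilde z_k^{\t+1}(t) = z_k^{\t,*}(t+1)$, $\tilde v_k^{\t+1}(t) = v_k^{\t,*}(t+1)$, and likewise for $\xi$, $\rho$, $d$, for $t \in \Z_{[0,T-\t-2]}$. The nominal dynamics, the $\xi$-bound, and the tightened state/input/terminal constraints are then inherited directly from the corresponding constraints of the $\t$-problem (they are the same rows, just reindexed), so the only thing that needs genuine checking is the initial-condition constraint $\tilde\eta_k^{\t+1}(0) \geq \|\tilde z_k^{\t+1}(0) - x_k(\t+1)\|_p$. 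For this I would invoke \cref{lemma:AN_tubes}: \cref{eq:AN_tubes_2} gives $x_k(\t+1) \in z_k^\t(1) \oplus \mathcal{B}(\eta_k^\t(1))$, hence $\|z_k^{\t,*}(1) - x_k(\t+1)\|_p \leq \eta_k^\t(1) = \tilde\eta_k^{\t+1}(0)$ if I set $\tilde\eta_k^{\t+1}(0) := \eta_k^{\t,*}(1)$; then $\tilde\eta_k^{\t+1}$ propagates by the same recursion as $\eta_k^{\t,*}$ shifted by one, so $\tilde\eta_k^{\t+1}(t) = \eta_k^{\t,*}(t+1)$ and the tightening magnitudes $\psi(\t+1+t)\tilde\eta_k^{\t+1}(t)$ coincide with those already satisfied by $\mathbf{x}_k^\t$.

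The main obstacle — or rather the point requiring care — is the bookkeeping on the uncertainty radius: one must confirm that the $\eta$-recursion in \Pmpc at step $\t+1$, when fed the shifted $\xi$ and $\rho$ values and the new initial value $\eta_k^{\t,*}(1)$, reproduces exactly the tail $\eta_k^{\t,*}(2), \eta_k^{\t,*}(3), \dots$, which hinges on the recursion coefficients $\bar m(\cdot)$, $m(\cdot)$ being functions of absolute time and on $\rho_k$, $x_k^{\text{ref}}$ being iteration-data fixed once \Pilc is solved (so that $x_k^{\text{ref},\t+1}(t) = x_k^{\text{ref},*}(\t+1+t)$ is consistent with $x_k^{\text{ref},\t}(t+1)$). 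A subtlety is the terminal constraint: at step $\t+1$ the horizon shrinks by one, and the candidate's last stage $\tilde z_k^{\t+1}(T-\t-1) = z_k^{\t,*}(T-\t)$ must satisfy $\mathscr{X}_T^{\psi(T)\tilde\eta_k^{\t+1}(T-\t-1)}$ — which is precisely the terminal constraint already enforced on $\mathbf{x}_k^\t$, so it transfers directly. Once these reindexing identities are laid out, feasibility of the candidate is immediate, completing the induction.
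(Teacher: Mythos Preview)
Your plan is correct and matches the paper's own proof essentially line for line: induction on $\t$, base case obtained by restricting the \Pilc optimizer to the full horizon, induction step by shifting $\mathbf{x}_k^{\t,*}$ by one and invoking \cref{lemma:AN_tubes} to verify the initial-radius constraint. The only cosmetic difference is that the paper sets $\tilde\eta_k^{\t+1}(0) = \|x_k(\t+1)-z_k^{\t,*}(1)\|_p$ and then bounds it above by $\eta_k^{\t,*}(1)$ to obtain $\tilde\eta_k^{\t+1}(t) \leq \eta_k^{\t,*}(t+1)$, whereas you set $\tilde\eta_k^{\t+1}(0) = \eta_k^{\t,*}(1)$ directly and get equality along the tail; both choices work and yield the same constraint-transfer argument.
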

\begin{proof}
We use induction. First, using \cref{prop:AN_recursive_feasibility}, we know that for any $k \geq 1$ a feasible solution of \Pilc exists. For $t=0$, it can be shown that the trajectory $\tilde{\mathbf{x}}_k^0$, defined as
\begin{align}
\label{eq:AN_rec_feas_BMPC0}
\tilde{z}_k^0=z_k,~~\tilde{v}_k^0=v_k,~~\tilde{\xi}_k^0=\xi_k,~~\tilde{\rho}_k^0=\rho_k,~~\tilde{\eta}_k^0=\eta_k,
\end{align}
is a feasible solution for \Pmpc. Next, let $\tilde{\mathbf{x}}_k^{\t+1}$ be defined for all $t$ as
\begin{align}
\begin{split}
&\tilde{v}_k^{\t+1}(t) = v_k^{\t}(t+1),~~\tilde{\xi}_k^{\t+1}(t) = \xi_k^{\t}(t+1),\\
&\tilde{d}_k^{\t+1}(t)=d_k^{\t}(t+1),~~\tilde{\rho}_k^{\t+1}(t)=\rho_k^{\t}(t+1),\\
&\tilde{z}_k^{\t+1}(t)=z_k^{\t,*}(t+1),
\end{split}%
\label{eq:AN_rec_feas_BMPC1}
\end{align}
and
\begin{align}
\label{eq:AN_rec_feas_BMPC3}
\tilde{\eta}^{\t+1}_k(t+1) = \bar{m}(\t+t+1) \tilde{\eta}^{\t+1}_k(t) + m&(\t+t+1) \tilde{\xi}_k^{\t+1}(t) \notag \\ & ~ + \tilde{\rho}_k^{\t+1}(t) + \bar{w},
\end{align}
with $\tilde{\eta}_k^{\t+1}(0)=\norm{x_k(\t)-z_k^{\t}(1)}_p$. Using \cref{eq:AN_tubes_2} we have $x_k(\t+1) \in z_k^{\t}(1) \oplus \mathcal{B}_p(\eta_k^{\t}(1))$, meaning that $\tilde{\eta}_k^{\t+1}(0) = \|x_k(\t+1)-z_k^{\t+1}(0)\|_p \leq \eta_k^\t(1)$. Combining this fact with the definition of $\tilde{\eta}_k^{\t+1}$ in \cref{eq:AN_rec_feas_BMPC3}, we have $\tilde{\eta}_k^{\t+1}(t) \leq \eta_k^{\t}(t+1)$ for all $t\in\Z_{[0,T-\t]}$. From this we conclude that $\tilde{\mathbf{x}}_k^{\t+1}$ is a feasible solution of \Pmpc. This concludes the proof.
\end{proof}
\subsection{Non-degrading performance}
Next, we show that the optimal cost obtained by \Pilc is non-increasing in $k\in\Np$. We use $\mathcal{J}_\text{ILC}(\mathbf{x}_k)$ and $\mathcal{J}_\text{MPC}(\mathbf{x}_k^\t)$ to denote the cost of \Pilc and \Pmpc attained by the variables $\mathbf{x}_k$ and $\mathbf{x}_k^\t$, respectively.
%
%
\begin{proposition}[Non-degrading performance, iteration axis]
\label{prop:AN_non_deg_it}
Let \cref{ass:PF_disturbance,ass:PF_initial_condition,ass:PF_noise,ass:AN_initial_trajectories} hold. Then for any $k \geq 2$, $\mathcal{J}_\text{ILC}(\mathbf{x}_k) \leq \mathcal{J}_\text{ILC}(\mathbf{x}_{k-1})$. In addition $\mathcal{J}_\text{ILC}(\mathbf{x}_1) \leq \|x_0-r\|_Q^2 + c_1 \|\mathbf{1}\bar{w}\|_1 +c_2 \|\mathbf{1}\bar{w}\|_1$, with $x_0$ satisfying \cref{ass:AN_initial_trajectories}.
\end{proposition}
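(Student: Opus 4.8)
The claim has two parts: a bound on the initial cost $\mathcal{J}_\text{ILC}(\mathbf{x}_1)$, and monotonicity $\mathcal{J}_\text{ILC}(\mathbf{x}_k) \leq \mathcal{J}_\text{ILC}(\mathbf{x}_{k-1})$ for $k \geq 2$. The strategy in both cases is the same idea already used to prove recursive feasibility in \cref{prop:AN_recursive_feasibility}: exhibit an explicit feasible candidate for \Pilc at iteration $k$ whose cost is no larger than the optimal cost at iteration $k-1$, and then invoke optimality of $\mathbf{x}_k$.

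For the initial bound, I would take the candidate $\tilde{\mathbf{x}}_1$ constructed in \cref{eq:AN_rec_feas_1}, which \cref{prop:AN_recursive_feasibility} already certifies to be feasible for \Pilc at $k=1$. I then simply evaluate $\mathcal{J}_\text{ILC}$ on it: since $\tilde{z}_1 = x_0$, the tracking term is $\|x_0 - r\|_Q^2$; since $\tilde{\xi}_1 = \mathbf{0}$, the $c_1\|\xi_k\|_1$ term vanishes (I would double check here whether the stated bound's $c_1\|\mathbf{1}\bar w\|_1$ term comes from $\xi$ or is a typo for something else — more likely the two $\bar w$-terms in the claimed bound are meant to be $c_1\|\mathbf 1\bar w\|_1$ coming from a different choice of candidate, but in any case one can bound $\|\xi_1\|_1 \leq \|\mathbf 1\bar w\|_1$ trivially since the claim only asks for an upper bound); and since $\rho_1 = \mathbf{1}\bar w$ and $c_2(t) = c_1/m(t)$, the last sum is $\sum_t (c_1/m(t))\bar w$, which I would bound by $c_2\|\mathbf 1\bar w\|_1$ under whatever notational convention for $c_2$ (scalar) the paper uses. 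Optimality of $\mathbf{x}_1$ then gives $\mathcal{J}_\text{ILC}(\mathbf{x}_1) \leq \mathcal{J}_\text{ILC}(\tilde{\mathbf{x}}_1)$, which is the claimed inequality.

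For the monotonicity step, the natural candidate at iteration $k \geq 2$ is the shifted/copied solution $\tilde{\mathbf{x}}_k = \mathbf{x}_{k-1}$ with $\tilde\alpha_k = \mathbf{0}$, exactly as in the proof of \cref{prop:AN_recursive_feasibility}. With $\alpha_k(t) = 0$ for all $t$, \cref{eq:PREF_update_reference_disturbance_set} gives $\mathcal{D}_{k|k-1}^\text{ref} = \mathcal{D}_{k-1|k-1}^\text{ref}$, and $x_k^\text{ref} = x_{k-1}^\text{ref} = z_{k-1}$-adjacent data, so every component of $\mathbf{x}_{k-1}$ remains feasible for \Pilc at iteration $k$ (this is the content of the cited recursive-feasibility argument, which I would reference rather than re-derive). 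The key observation is then that the cost $\mathcal{J}_\text{ILC}$ is \emph{unchanged} under this relabelling: the tracking term $\|z_k - r\|_Q^2$, the $c_1\|\xi_k\|_1$ term, and the $\sum_t c_2(t)\rho_k(t)$ term all depend only on $(z_k, \xi_k, \rho_k)$, which are copied verbatim from iteration $k-1$. Hence $\mathcal{J}_\text{ILC}(\tilde{\mathbf{x}}_k) = \mathcal{J}_\text{ILC}(\mathbf{x}_{k-1})$, and optimality of $\mathbf{x}_k$ for \Pilc at iteration $k$ yields $\mathcal{J}_\text{ILC}(\mathbf{x}_k) \leq \mathcal{J}_\text{ILC}(\tilde{\mathbf{x}}_k) = \mathcal{J}_\text{ILC}(\mathbf{x}_{k-1})$.

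The only real subtlety — the step I would scrutinize most carefully — is verifying that $\tilde{\mathbf{x}}_k = \mathbf{x}_{k-1}$ with $\tilde\alpha_k = \mathbf{0}$ is genuinely feasible for the \emph{new} problem instance \Pilc at iteration $k$, since that instance is parameterized by the updated sets $\mathcal{D}_{k-1|k-1}, \mathcal{D}_{k-1|k-1}^\text{ref}$ and the new trajectories $x_{k-1}, x_{k-1}^\text{ref}$, which differ from the data used at iteration $k-1$. One must check that the disturbance-reference constraint \cref{eq:PREF_disturbance_reference_set_estimate2} with $\alpha = 0$ reduces to requiring $\|d_k^\text{ref}(t) - d_2\| \leq \rho_k(t)$ for all $d_2 \in \mathcal{D}_{k-1|k-1}^\text{ref}(t)$, and that the copied $d_{k-1}^\text{ref}, \rho_{k-1}$ still satisfy this because $\mathcal{D}_{k-1|k-1}^\text{ref} \subseteq \mathcal{D}_{k-2|k-2}^\text{ref}$ (set estimates only shrink, by \cref{eq:SN_update_disturbance_set}); the other constraints (nominal dynamics, $\eta_k$ recursion, tightened state/input constraints) are structurally identical and so are inherited directly. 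Since \cref{prop:AN_recursive_feasibility} already asserts this feasibility, I would cite it and keep this verification brief, but it is the load-bearing point of the argument.
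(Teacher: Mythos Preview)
Your proposal is correct and follows essentially the same approach as the paper: for $k=1$ you evaluate the cost of the candidate $\tilde{\mathbf{x}}_1$ from \cref{eq:AN_rec_feas_1}, and for $k\geq 2$ you reuse $\tilde{\mathbf{x}}_k=\mathbf{x}_{k-1}$ with $\tilde\alpha_k=\mathbf{0}$, observe that the cost depends only on $(z,\xi,\rho)$ which are copied verbatim, and conclude by optimality. Your observation about the $c_1\|\mathbf 1\bar w\|_1$ term is well taken---since $\tilde\xi_1=\mathbf 0$ in \cref{eq:AN_rec_feas_1} that term should vanish, so the stated bound is simply loose (the paper's own proof writes the same expression without comment); and your extra care in checking that the copied $(d_{k-1}^\text{ref},\rho_{k-1})$ remain feasible for the updated constraint \cref{eq:PREF_disturbance_reference_set_estimate2} via the set inclusion $\mathcal{D}_{k-1|k-1}^\text{ref}\subseteq\mathcal{D}_{k-1|k-2}^\text{ref}$ is more explicit than the paper, which simply asserts feasibility ``by construction.''
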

\begin{proof}
We first prove the second claim of the proposition. Given any state and input trajectory $x_0,u_0$ satisfying \cref{ass:AN_initial_trajectories}, the variables in \cref{eq:AN_rec_feas_1} constitute a feasible solution $\tilde{\mathbf{x}}_1$ to \Pilc for $k=1$. The cost associated to such variables is
\begin{align*}
\mathcal{J}_\text{ILC}(\tilde{\mathbf{x}}_1) = \|x_0-r\|_Q^2 + c_1 \|\mathbf{1}w_\text{max}\|_1 +c_2 \|\mathbf{1}w_\text{max}\|_1.
\end{align*}
By optimality of \Pilc, we conclude that $\mathcal{J}_\text{ILC}(\mathbf{x}_1) \leq \mathcal{J}_\text{ILC}(\tilde{\mathbf{x}}_1)$, proving the second claim of the proposition.

We proceed with the first claim. Given a solution $\mathbf{x}_{k-1}$ of \Pilc for iteration $k-1$, the trajectory $\tilde{\mathbf{x}}_k=\mathbf{x}_{k-1}$ with the modification $\tilde{\alpha}_k=\mathbf{0}$ constitutes a feasible solution to \Pilc for iteration $k$. The cost associated to $\tilde{\mathbf{x}}_k$ is $\mathcal{J}_\text{ILC}(\tilde{\mathbf{x}}_k) = \mathcal{J}_\text{ILC}(\mathbf{x}_{k-1})$. By optimality of $\mathbf{x}_k$, we conclude that $\mathcal{J}_\text{ILC}(\mathbf{x}_k) \leq \mathcal{J}_\text{ILC}(\tilde{\mathbf{x}}_k) = \mathcal{J}_\text{ILC}(\mathbf{x}_{k-1})$, proving the first claim of the proposition and completing the proof.
\end{proof}
Similarly, the cost $\mathcal{J}_\text{MPC}(\mathbf{x}_k^{\t})$ is non-increasing in $\t$ for any $k \geq 1$.
\begin{proposition}[Non-degrading performance, time axis]
\label{prop:AN_non_deg_time}
Let \cref{ass:PF_disturbance,ass:PF_initial_condition,ass:PF_noise,ass:AN_initial_trajectories} hold. Then for any $k \geq 1$ and $\t\in\Z_{[0,T-2]}$
\begin{align*}
\mathcal{J}_\text{MPC}(\mathbf{x}_k^{\t+1}) \leq & \, \mathcal{J}_\text{MPC}(\mathbf{x}_k^{\t})-\|z_k^{\t}(0)-r(\t)\|_Q^2 \\ &-\| v_k{^\t}(0)-v_k(i) \|_P^2 -c_1|\xi_k^{\t}(0)|.
\end{align*}
Moreover
\begin{align*}
\mathcal{J}_\text{MPC}(\mathbf{x}_k^{0}) & \leq \mathcal{J}_\text{ILC}(\mathbf{x}_k)-c_2 \| \rho _k \|_1.
\end{align*}
\end{proposition}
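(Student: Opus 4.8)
The plan is to reuse the shifted feasible candidates constructed in the proof of \cref{prop:AN_recursive_feasibility_time} and to bound the optimal cost of \Pmpc by the cost of these candidates via optimality. The structural fact that makes this cheap is that the radii $\eta_k^\t$ do not enter the objective of \Pmpc (they appear only in the constraint tightening), so the comparison reduces to re-indexing the three additive pieces of $\mathcal{J}_\text{MPC}$.

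For the first inequality, fix $k\ge 1$ and $\t\in\Z_{[0,T-2]}$ and take the candidate $\tilde{\mathbf{x}}_k^{\t+1}$ defined in \cref{eq:AN_rec_feas_BMPC1,eq:AN_rec_feas_BMPC3}, which is feasible for \Pmpc at time $\t+1$ by \cref{prop:AN_recursive_feasibility_time}. Evaluating $\mathcal{J}_\text{MPC}(\tilde{\mathbf{x}}_k^{\t+1})$: since $\tilde z_k^{\t+1}(t)=z_k^{\t,*}(t+1)$, the tracking term drops exactly the stage $\|z_k^\t(0)-r(\t)\|_Q^2$; since $\tilde\xi_k^{\t+1}(t)=\xi_k^\t(t+1)$, the penalty $c_1\|\cdot\|_1$ drops $c_1|\xi_k^\t(0)|$; and since $\tilde v_k^{\t+1}(t)=v_k^\t(t+1)$ while the reference sequence shifts from $v_k(\t+\cdot)$ to $v_k(\t+1+\cdot)$, the input-deviation term drops $\|v_k^\t(0)-v_k(\t)\|_P^2$. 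Hence $\mathcal{J}_\text{MPC}(\tilde{\mathbf{x}}_k^{\t+1})=\mathcal{J}_\text{MPC}(\mathbf{x}_k^\t)-\|z_k^\t(0)-r(\t)\|_Q^2-\|v_k^\t(0)-v_k(\t)\|_P^2-c_1|\xi_k^\t(0)|$, and $\mathcal{J}_\text{MPC}(\mathbf{x}_k^{\t+1})\le\mathcal{J}_\text{MPC}(\tilde{\mathbf{x}}_k^{\t+1})$ gives the claim (reading $v_k(i)$ in the statement as $v_k(\t)$).

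For the second inequality, I would take $\tilde{\mathbf{x}}_k^0$ as in \cref{eq:AN_rec_feas_BMPC0}, i.e. the \Pilc optimizer $\mathbf{x}_k$ used directly as a full-horizon candidate for \Pmpc at $\t=0$, feasible again by \cref{prop:AN_recursive_feasibility_time}. With $r^0=r$ and $\tilde v_k^0=v_k$ the input-deviation term $\sum_{t=0}^{T-1}\|\tilde v_k^0(t)-v_k(t)\|_P^2$ vanishes, so $\mathcal{J}_\text{MPC}(\tilde{\mathbf{x}}_k^0)=\|z_k-r\|_Q^2+c_1\|\xi_k\|_1$. Subtracting this from the \Pilc cost $\mathcal{J}_\text{ILC}(\mathbf{x}_k)=\|z_k-r\|_Q^2+c_1\|\xi_k\|_1+\sum_{t=0}^{T-1}c_2(t)\rho_k(t)$ leaves precisely $\sum_{t=0}^{T-1}c_2(t)\rho_k(t)$, which is the quantity written $c_2\|\rho_k\|_1$ in the statement; optimality of $\mathbf{x}_k^0$ then yields $\mathcal{J}_\text{MPC}(\mathbf{x}_k^0)\le\mathcal{J}_\text{MPC}(\tilde{\mathbf{x}}_k^0)=\mathcal{J}_\text{ILC}(\mathbf{x}_k)-c_2\|\rho_k\|_1$.

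The only care needed is bookkeeping of the summation ranges in the three telescoping pieces — in particular being consistent about whether the terminal-state stage is carried inside $\|z_k^\t-r^\t\|_Q^2$ — together with the already-established bound $\tilde\eta_k^{\t+1}(t)\le\eta_k^\t(t+1)$ from \cref{prop:AN_recursive_feasibility_time}, which guarantees the candidate still meets the tightened constraints $\mathscr{X}^{\psi(\cdot)\eta(\cdot)}$ (using that $\psi$ is componentwise nonnegative, being a support function of a symmetric ball). I do not expect any substantive obstacle beyond this accounting: the argument is the standard MPC ``shift the tail'' cost-decrease computation, here adapted to the two-level ILC/MPC structure.
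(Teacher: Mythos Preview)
Your proposal is correct and follows essentially the same approach as the paper: evaluate the cost of the shifted candidate from \cref{eq:AN_rec_feas_BMPC1,eq:AN_rec_feas_BMPC3} (respectively \cref{eq:AN_rec_feas_BMPC0}) and invoke optimality. Your accounting of the three telescoping pieces is in fact cleaner than the paper's own write-up, which contains a stray $-c_2|\rho_k^{\t}(0)|$ in place of the input-deviation term.
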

\begin{proof}
The variables in \cref{eq:AN_rec_feas_BMPC1,eq:AN_rec_feas_BMPC3} constitute a feasible solution $\tilde{\mathbf{x}}_k^{\t+1}$ to \Pmpc for iteration $k$ and time $\t+1$ with cost
\begin{align*}
\mathcal{J}_{\text{MPC}}(\tilde{\mathbf{x}}_k^{\t+1}) =~ & %
\sum_{t=0}^{T-\t} \|z_k^{\t}(t+1)\|_Q^2 + c_1 \sum_{t=0}^{T-\t} [ |\xi_k^{\t}(t+1)| \\ & + \|v_k^{\t}(t+1)-v_k(t+\t+1)\|_P^2 ] \\
=~ & \mathcal{J}_\text{MPC}(\mathbf{x}_k^{\t})-\|z_k^{\t}(0)-r(\t)\|_Q^2 \\ & -c_1|\xi_k^{\t}(0)|-c_2|\rho_k^{\t}(0)|.
\end{align*}
By optimality of $\mathbf{x}_k^{\t-1}$, we conclude
\begin{align*}
\mathcal{J}_{\text{MPC}}(\mathbf{x}_k^{\t+1}) \leq \mathcal{J}_{\text{MPC}}(\tilde{\mathbf{x}}_k^{\t+1}) \leq & \, \mathcal{J}_\text{MPC}(\mathbf{x}_k^{\t})-\|z_k^{\t}(0)-r(\t)\|_Q^2 \\ & -c_1|\xi_k^{\t}(0)|-c_2|\rho_k^{\t}(0)|.
\end{align*}
The second statement follows similarly by evaluating the cost associated to the trajectory $\tilde{\mathbf{x}}_k^0$ as defined in \cref{eq:AN_rec_feas_BMPC0}, which is feasible for \Pmpc at iteration $k$ and time $0$. This completes the proof.
\end{proof}

\subsection{Convergence to optimal performance}

Thanks to \cref{prop:AN_non_deg_it}, the sequence $\{\mathcal{J}_\text{ILC}(\mathbf{x}_k)\}_{k\in\N}$ is non-increasing in $k$; therefore, since $\mathcal{J}_\text{ILC}(\mathbf{x}_k)$ is non-negative, the sequence must converge as $k\to\infty$. As a result, the sequence of optimizers $\{z_k,\rho_k,\xi_k\}_{k\in\N}$ is bounded and must admit a convergent subsequence.
We now show that only one fixed point exists, and conclude that the sequence $\{z_k,\rho_k,\xi_k\}_{k\in\N}$ must converge to it.

Before proceeding with the main result of this paper, we require additional assumptions.
\begin{assumption}%
\label{ass:AN_strict_feasibility}
For every $k\in\N$, $(z_k,v_k)\in \operatorname*{int}\mathscr{Z}^\delta_k$, with $\delta_k(t)=\psi(t)\eta_k(t)$ and $\psi$ as in \cref{eq:PREF_supports}.
\end{assumption}
\cref{ass:AN_strict_feasibility} is quite restrictive; nevertheless, it is necessary for the convergence result in \cref{thm:AN_convergence}. In short, \cref{ass:AN_strict_feasibility} requires the asymptotic value of $(z_k,v_k)$ to either lie in the interior of the feasible set, or to approach its boundary while maintaining strict feasibility. A similar assumption is used in \cite{rosolia2017learning} (compare \cite[Theorem 3]{rosolia2017learning}).

For the ease of analysis, we assume that the reference $r$ is realizable. This is consistent with other works in the MPC literature, e.g. \cite[Assumption 4]{kohler2024analysis}.
\begin{assumption}
\label{ass:AN_perfect_tracking}
There exists some $v^*$ and $\eta^*$ such that $(r,v^*,\eta^*)$ solves \Pabsw.
\end{assumption}
Our scheme in \cref{alg:BMPC} is optimistic, since the norm constraint on $d_k^\text{ref}(t)$ in \Pilc enables the scheme to freely choose $d_k^\text{ref}$ to obtain the best nominal cost, at the expense of increased uncertainty. This freedom generally results in better practical performance.

We now present the main convergence result.
\begin{theorem}%
\label{thm:AN_convergence}
Suppose \cref{ass:PF_noise,ass:PF_disturbance,ass:PF_initial_condition,ass:AN_initial_trajectories,ass:AN_perfect_tracking,ass:AN_strict_feasibility} hold. Then with probability $1$
\begin{align*}
\limsup_{k\to \infty} \|z_k(t)-r(t)\| \leq h_t(c_1),
\end{align*}
for all $t\in\zt$ and some $h_t\in\kinf$.
\end{theorem}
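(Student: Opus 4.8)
The plan is to exploit the convergence of the monotone sequence $\{\mathcal{J}_\text{ILC}(\mathbf{x}_k)\}_{k}$ established via \cref{prop:AN_non_deg_it}, and to show that any fixed point of the scheme must place $\norm{z-r}$ within a $c_1$-dependent distance of zero at each time step. First I would fix a convergent subsequence of the bounded optimizer sequence $\{z_k,v_k,\xi_k,\rho_k,\eta_k,d_k^\text{ref},x_k^\text{ref}\}_{k}$, call its limit $(z_\infty,v_\infty,\xi_\infty,\rho_\infty,\eta_\infty,d_\infty^\text{ref},x_\infty^\text{ref})$, and argue that along this subsequence the measured disturbances $\bar d_{k}(t)$ and the set estimates $\mathcal{D}_{k|k-1}^\text{ref}(t)$ converge as well. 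The crucial probabilistic ingredient is \cref{ass:PF_noise}: because the noise hits every subset of $\mathcal{W}$ with positive probability, with probability $1$ the sets $\bar{\mathcal{D}}_j(t)$ accumulated over infinitely many revisits of nearby states force the diameter of the reference disturbance estimate set at the fixed point to shrink to $2\bar w$-type residuals only as the state reference stabilizes; more precisely, at a genuine fixed point $x_\infty^\text{ref}(t)=z_\infty(t)$ (since otherwise $\xi_\infty(t)>0$ and the feasible perturbation $\tilde\alpha=\mathbf 0$ followed by re-optimization would have strictly decreased the cost, contradicting convergence), and along the subsequence $f(x_k^\text{ref}(t),t)\to f(z_\infty(t),t)$ by \cref{ass:PF_disturbance}.

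Next I would characterize the fixed point as an optimizer of a limiting problem. Using \cref{ass:AN_strict_feasibility} (strict feasibility of the limit) I can perturb $d_\infty^\text{ref}$ and the binary choice without leaving the feasible set, so stationarity of the cost $\norm{z-r}_Q^2 + c_1\norm{\xi}_1 + \sum_t c_2(t)\rho(t)$ under these perturbations must hold; combined with \cref{ass:AN_perfect_tracking}, which guarantees $(r,v^*,\eta^*)$ is feasible for \Pabsw and hence achieves nominal cost contribution $0$ from the tracking term, this pins down $z_\infty$ to be $c_1$-close to $r$. The quantitative bound $h_t(c_1)$ comes from the fact that driving $\norm{z_\infty - r}$ away from $0$ costs $\Omega(\norm{z_\infty-r}^2)$ in the quadratic term while only saving $O(c_1)$ in the $\ell_1$ and $\rho$ terms (whose coefficients are proportional to $c_1$), so at optimality $\norm{z_\infty(t)-r(t)}^2 \lesssim c_1 \cdot(\text{const depending on }m,\psi,T)$, giving $h_t(c_1) = O(\sqrt{c_1})$, which is a $\kinf$ function of $c_1$ as required. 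The same argument applied to the MPC layer via \cref{prop:AN_non_deg_time} (the descent inequality there forces $z_k^\t(0)\to r$ along the time axis as well) extends the bound to the realized nominal state, but the statement as written only concerns $z_k(t)$, so the ILC-layer argument suffices.

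Finally I would upgrade subsequential convergence to full convergence: since the cost sequence converges and every convergent subsequence of optimizers yields a limit with the same (minimal, among fixed points) cost and the same characterization, and since — by the uniqueness argument announced just before the theorem — only one fixed point exists, the whole sequence $\{z_k,\rho_k,\xi_k\}$ converges to it, whence $\limsup_k \norm{z_k(t)-r(t)} = \norm{z_\infty(t)-r(t)} \le h_t(c_1)$.

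\textbf{Main obstacle.} The hard part will be making rigorous the "with probability $1$" step that links \cref{ass:PF_noise} to the stabilization of the reference disturbance sets: one must show that the optimistic binary selection in \cref{eq:SWN_binary_constraint}, which is free to keep $d_k^\text{ref}$ pinned to an old value, nonetheless cannot indefinitely exploit a stale, overly optimistic disturbance estimate — i.e., that the intersection structure of $\mathcal{D}_{k|k-1}^\text{ref}(t)$ in \cref{eq:PREF_set_estimates}, fed by the almost-surely space-filling noise samples whenever the state revisits a neighborhood, eventually certifies $\rho_k(t)$ no smaller than what the true $f$ demands, removing the spurious cost advantage. Carefully coupling this set-shrinkage with the monotone-cost argument, and ruling out pathological oscillations between distinct near-fixed-points (which is where \cref{ass:AN_strict_feasibility} and the uniqueness claim do the real work), is the delicate core of the proof.
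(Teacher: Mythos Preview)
Your plan diverges substantially from the paper's argument, and it has a structural gap that would be hard to close along the lines you sketch.

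The paper does not use a subsequence/fixed-point argument at all. Instead it runs an \emph{induction on the time index} $t$: assuming $\|z_k(t')-r(t')\|$, $\xi_k(t')$, $\rho_k(t')$ are $\kinf$-bounded in $c_1$ (plus a $\kinf(1/k)$ term) for $t'<i$, it establishes the same bounds at $t=i$ and $t=i+1$. The induction step has three pieces that you largely bypass. First (\cref{lemma:AN_continuity_lemma}), it shows that \emph{the MPC layer} forces the closed-loop state $x_k(i)$ close to $r(i)$ whenever the noise up to time $i$ is small; this is done by comparing the MPC optimizer to a perturbed ILC optimizer and exploiting the quadratic cost, \cref{ass:AN_strict_feasibility}, and outer Lipschitz continuity of QP solutions. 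Second (\cref{lemma:AN_disturbance_set_to_singleton}), since the ``small noise'' event has positive probability by \cref{ass:PF_noise}, a Borel--Cantelli-type count (\cref{lemma:AN_technical_lemma}) shows the disturbance estimate set at $\tilde x_k^{\text{ref}}(i)$ shrinks to a $\kinf(c_1)+\kinf(1/k)$ ball. Third (\cref{lemma:AN_xi_and_rho_are_zero} and \cref{prop:AN_thm_intermediate}), this forces $\xi_k(i),\rho_k(i)$ and then $\|z_k(i+1)-r(i+1)\|$ to be $\kinf$-small, closing the induction. The probability-$1$ statement follows because the bound holds with any prescribed $p\in(0,1)$.

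The gap in your plan is precisely the step you flag as the ``main obstacle'': to get the reference disturbance sets to shrink you need the \emph{realized} states $x_k(t)$ to revisit a neighborhood of $r(t)$ infinitely often, and this is driven by the MPC optimization, not by the ILC layer alone. You explicitly dismiss the MPC layer as unnecessary for the stated claim, but without \cref{lemma:AN_continuity_lemma} there is no mechanism linking convergence of the nominal $z_k$ to the measurements that feed $\mathcal{D}_{k|k}$. Your fixed-point characterization also has a circularity: you assert $x_\infty^{\text{ref}}(t)=z_\infty(t)$ because otherwise $\xi_\infty(t)>0$ and re-optimization would decrease cost, but $x_k^{\text{ref}}(t)$ is constrained to lie in $\{x_{k-1}(t),x_{k-1}^{\text{ref}}(t)\}$, so without first controlling $x_{k-1}(t)$ you cannot conclude $\xi_\infty(t)=0$. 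The paper's time-induction structure is what breaks this circularity.
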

\begin{proof}
See Appendix \ref{sec:appendix}.
\end{proof}
\cref{thm:AN_convergence} states that the nominal state trajectory $z_k$ approaches $r$ as $k\to \infty$ up to a factor that depends on $c_1$. For small values of $c_1$, $z_k$ can be made arbitrarily close to $r$. The presence of $c_1$ is necessary to avoid unnecessarily large values of the term $\rho_k$. Practically, as showcased in the simulation example, one can expect excellent tracking performance even for non-trivial values of $c_1$, suggesting that the presence of the terms depending on $c_1$ may be an artifact of the way the proofs are structured, and may be removed at the cost of more involved proofs, or more restrictive assumptions. This is however beyond the scope of this paper, and will be addressed in future work.

\section{Binary constraint relaxation}
If the disturbance term $d_k(t)$ is an affine function of $x_k(t)$, \Pilc can be reformulated as a convex optimization problem while retaining all theoretical properties described in \cref{section:AN}. Specifically, let
\begin{align}
\label{eq:AD_disturbance}
d_k(t) := f(x_k(t),t) = D(t) x_k(t) + d(t),
\end{align}
where $D(t)\in\R^{n_x \times n_x}$ with $\|D(t)\|_p \leq m(t)$, and $d(t)\in\R^{n_x}$ for $t\in\Z_{[0,T-1]}$. By leveraging the linearity of $f(x,t)$ in $x$, we can replace the integer constraint $\alpha_k(t)\in\left\{ 0,1 \right\}$ with the convex constraint $\alpha_k(t)\in[0,1]$, rendering \Pilc a quadratic program (if $p=\infty$). To see why this is possible, we show that relation \cref{eq:AN_tubes_1} is still verified if $\alpha_k(t)\in[0,1]$. Showing that the scheme fulfills the theoretical properties of \cref{section:AN} is straightforward and therefore omitted.
\begin{proposition}
Suppose $d_k(t)$ is given as in \cref{eq:AD_disturbance}, and that the constraint $\alpha_k(t)\in\left\{ 0,1 \right\}$ in \Pilc is replaced with $\alpha_k(t)\in[0,1]$ for all $t\in\Z_{[0,T-1]}$. Then \cref{eq:AN_tubes_1} is still verified for all $k \geq 1$ and $t\in\Z_{[T-1]}$.
\end{proposition}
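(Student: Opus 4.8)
The plan is to re-run the error-evolution argument of \cref{lemma:PREF_error_bound} (equivalently the first part of \cref{lemma:AN_tubes}) and check that every inequality used there survives when $\alpha_k(t)$ is allowed to range over $[0,1]$ rather than $\{0,1\}$. The only place the binary nature of $\alpha_k(t)$ was exploited is in guaranteeing that the constraint on $d_k^\text{ref}(t)$ in \Pilc, namely \cref{eq:PREF_disturbance_reference_set_estimate2}, implies a bound of the form $\|\f(x_k^\text{ref}(t),t)-d_k^\text{ref}(t)\|\leq\rho_k(t)$ with $x_k^\text{ref}(t)$ the convex combination $\alpha_k(t)x_{k-1}(t)+(1-\alpha_k(t))x_{k-1}^\text{ref}(t)$. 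So the heart of the argument is: under the affine model \cref{eq:AD_disturbance}, show that $\f(x_k^\text{ref}(t),t)$ lies in the set $\{d_1\alpha_k(t)+d_2(1-\alpha_k(t)) : d_1\in\mathcal D_{k-1|k-1}(t),\,d_2\in\mathcal D_{k-1|k-1}^\text{ref}(t)\}$ even for fractional $\alpha_k(t)$.

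First I would fix $t$ and write $x_k^\text{ref}(t)=\alpha x_{k-1}(t)+(1-\alpha)x_{k-1}^\text{ref}(t)$ with $\alpha:=\alpha_k(t)\in[0,1]$. Using linearity of $\f(\cdot,t)=D(t)\cdot+d(t)$,
\begin{align*}
\f(x_k^\text{ref}(t),t) &= D(t)\big[\alpha x_{k-1}(t)+(1-\alpha)x_{k-1}^\text{ref}(t)\big]+d(t)\\
&= \alpha\big[D(t)x_{k-1}(t)+d(t)\big]+(1-\alpha)\big[D(t)x_{k-1}^\text{ref}(t)+d(t)\big]\\
&= \alpha\,\f(x_{k-1}(t),t)+(1-\alpha)\,\f(x_{k-1}^\text{ref}(t),t).
\end{align*}
Now $\f(x_{k-1}(t),t)=d_{k-1}(t)\in\mathcal D_{k-1|k-1}(t)$ by \cref{lemma:PREF_dist_set}, and $\f(x_{k-1}^\text{ref}(t),t)\in\mathcal D_{k-1|k-1}^\text{ref}(t)$ by the analogue of that lemma for the reference sets (which holds by the same Lipschitz/set-membership reasoning, since $\f(x_{k-1}^\text{ref}(t),t)=d_j(t)$ for some $j<k-1$ by the reference-selection rule). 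Hence $\f(x_k^\text{ref}(t),t)$ is a valid choice of $d_1\alpha+d_2(1-\alpha)$ in \cref{eq:PREF_disturbance_reference_set_estimate2}, so that constraint forces $\|\f(x_k^\text{ref}(t),t)-d_k^\text{ref}(t)\|\leq\rho_k(t)$, i.e. the reference disturbance set estimate condition \cref{eq:PREF_disturbance_reference_set_estimate1} still holds.

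Given that bound, the rest is an exact replay of \cref{lemma:PREF_error_bound}: starting from $e_k(t+1)-w_k(t)=A_K(t)e_k(t)+d_k(t)-\f(x_k^\text{ref}(t),t)+\f(x_k^\text{ref}(t),t)-d_k^\text{ref}(t)$, bound the first difference by $m(t)\|x_k(t)-x_k^\text{ref}(t)\|\leq m(t)\|z_k(t)-x_k^\text{ref}(t)\|+m(t)\|e_k(t)\|\leq m(t)\xi_k(t)+m(t)\eta_k(t)$ using \cref{ass:PF_disturbance} and the \Pilc constraint $\|z_k(t)-x_k^\text{ref}(t)\|\leq\xi_k(t)$, bound the second difference by $\rho_k(t)$, bound the noise by $\bar w$, and match the recursion defining $\eta_k$ in \Pilc with $\eta_k(0)=r_0\geq\|e_k(0)\|$ from \cref{ass:PF_initial_condition}. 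Induction on $t$ then gives $\|e_k(t)\|\leq\eta_k(t)$, i.e. \cref{eq:AN_tubes_1}. I expect the only mild obstacle is making the reference-set analogue of \cref{lemma:PREF_dist_set} explicit — but that is immediate from the construction of $\mathcal D_{k|n}^\text{ref}$ and the fact that each $x_k^\text{ref}$ coincides with some past measured state, so the argument is entirely routine once the affine-combination identity above is in place.
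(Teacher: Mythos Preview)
Your proposal is correct and follows essentially the same route as the paper: establish the affine-combination identity $\f(x_k^\text{ref}(t),t)=\alpha_k(t)\f(x_{k-1}(t),t)+(1-\alpha_k(t))\f(x_{k-1}^\text{ref}(t),t)$ from linearity of $\f(\cdot,t)$, use membership of the two summands in $\mathcal D_{k-1|k-1}(t)$ and $\mathcal D_{k-1|k-1}^\text{ref}(t)$ to conclude $\f(x_k^\text{ref}(t),t)\in\alpha_k(t)\mathcal D_{k-1|k-1}(t)+(1-\alpha_k(t))\mathcal D_{k-1|k-1}^\text{ref}(t)$, and then invoke the proof of \cref{lemma:PREF_error_bound}. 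The paper's version is slightly terser (it writes the sets as balls around centers and immediately defers to \cref{lemma:PREF_error_bound}), whereas you spell out the error recursion explicitly, but the logical content is the same.
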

\begin{proof}
Let $\mathcal{D}_{k-1|k-1}(t)=\hat{d}_{k-1}(t)\oplus \mathcal{B}(\rho_{k-1}(t))$ and $\mathcal{D}_{k-1|k-1}^\text{ref}(t)=\hat{d}_{k-1}^\text{ref}(t)\oplus \mathcal{B}(\rho_{k-1}^\text{ref}(t))$. Then we have
\begin{align*}
& \f(\alpha_k(t) x_{k-1}(t) + (1-\alpha_k(t)) x_{k-1}^\text{ref}(t),t) \\
& \qquad = D(t)(\alpha_k(t) x_{k-1}(t) + (1-\alpha_k(t)) x_{k-1}^\text{ref}(t)) + d(t) \\
& \qquad = \alpha_k(t) \f(x_{k-1}(t),t) + (1-\alpha_k(t)) \f(x_{k-1}^\text{ref}(t),t),
\end{align*}
for all $t\in\Z_{[0,T_1]}$. As a result, we have
\begin{align*}
&\alpha_k(t) \f(x_{k-1}(t),t) + (1-\alpha_k(t)) \f(x_{k-1}^\text{ref}(t),t) \\
\in~& \alpha_k(t) \hat{d}_{k-1}(t) + (1-\alpha_k(t)) \hat{d}_{k-1}^\text{ref}(t) \\ & \oplus \mathcal{B}( \alpha_k(t)\rho_{k-1}(t)+(1-\alpha_k(t))\rho_{k-1}^\text{ref}(t)).
\end{align*}
Therefore, for any $k \geq 1$ and $t\in\Z_{[0,1]}$ we have
\begin{align*}
\f(x_k^\text{ref}(t),t)=~&\f(\alpha_k(t) x_{k-1}(t) + (1-\alpha_k(t)) x_{k-1}^\text{ref}(t),t) \\
\in~& \alpha_k(t) \mathcal{D}_{k-1|k-1}(t) + (1-\alpha_k(t)) \mathcal{D}_{k-1|k-1}^\text{ref}(t),
\end{align*}
and the result follows from the proof of \cref{lemma:PREF_error_bound}.
\end{proof}

\section{Simulation example}\label{section:SIM}
We deploy our scheme on a batch process described by time-invariant dynamics given as in \cref{eq:PF_system} with
\begin{align*}
A(t)=\begin{bmatrix} 1.3070&1\\-0.6086&0 \end{bmatrix} ,~~~ B(t)= \begin{bmatrix} 1.239\\-0.8282 \end{bmatrix},
\end{align*}
with initial condition $x_k(0)\in(-0.5,0.5)\oplus \mathcal{B}(\bar{w})$ for all $k \geq 0$, where $\bar{w}=0.06$ and we chose $p=\infty$. The horizon is chosen as $N=30$. The constraints are given by $\| x_k(t) \|_\infty \leq x_\text{max}=1.75$, $|u_k (t)| \leq u_\text{max}=0.85$. The reference $r(t)$ is computed by first applying the input $\bar{u}(t)\in\R^{n_u}$ for $t\in[0,T \!-\!1 ]$ given in \cref{fig:SIM_reference_input_signal}, to the system under the assumption that no disturbance is present, i.e. $\f(x,t)=0$ for all $x,t$. The resulting trajectory $r(t)$ can be seen in \cref{fig:SIM_reference_signal}. The disturbance is given by the quadratic function
\begin{align*}
[\f(x,t)]_i = \alpha x_i^2 + [d(t)]_i,
\end{align*}
where the constant $\alpha = m/{x_\text{max}^2}$ ensures that $\f$ satisfies \cref{ass:PF_disturbance} for all feasible states, and $d(t)$ is given as in \cref{fig:SIM_state_independent_disturbance}. We choose $m=0.2$.
\begin{figure}[b]
\centering
%
%
\definecolor{mycolor1}{rgb}{0.4940 0.1840 0.5560}
\begin{tikzpicture}

\begin{axis}[%
width=\pgfscale*4.392in,
height=\pgfscale*0.6*1.952in,
at={(\pgfscale*0.737in,\pgfscale*0.316in)},
scale only axis,
xmin=0,
xmax=30,
xlabel style={font=\color{white!15!black}},
xlabel={Time-step $t$},
ymin=-0.11,
ymax=0.11,
ylabel style={font=\color{white!15!black}},
ylabel={$\bar{u}$},
axis background/.style={fill=white},
title style={font=\bfseries},
axis x line*=bottom,
axis y line*=left,
xmajorgrids,
ymajorgrids,
xtick distance = 5,
label style={font=\small},
tick label style={font=\small},
scaled y ticks=base 10:1,
ytick = {-0.1,0,0.1},
]
\addplot [color=mycolor1, dashed, line width=1.0pt, mark size=1pt, mark=*, mark options={solid, mycolor1}, forget plot]
  table[row sep=crcr]{%
1	-0.1\\
2	-0.1\\
3	-0.1\\
4	-0.1\\
5	-0.1\\
6	-0.1\\
7	-0.1\\
8	-0.1\\
9	-0.1\\
10	-0.1\\
11	-0.1\\
12	-0.1\\
13	-0.1\\
14	-0.1\\
15	-0.1\\
16	-0.1\\
17	-0.1\\
18	-0.1\\
19	0.1\\
20	0.1\\
21	0.1\\
22	0.1\\
23	0.1\\
24	0.1\\
25	0\\
26	0\\
27	0\\
28	0\\
29	0\\
};
\end{axis}
\end{tikzpicture}%
\caption{Reference input signal $\bar{u}(t)$.}
\label{fig:SIM_reference_input_signal}
\end{figure}
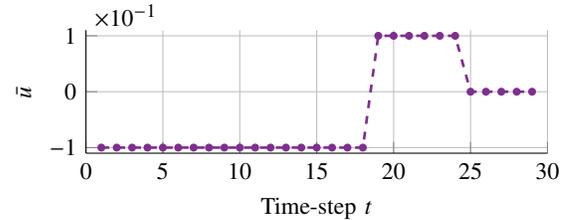
\begin{figure}
\centering
%
%
\definecolor{mycolor1}{rgb}{0.00000,0.44700,0.74100}%
\definecolor{mycolor2}{rgb}{0.85000,0.32500,0.09800}
%
\begin{tikzpicture}

\begin{axis}[%
width=\pgfscale*4.521in,
height=\pgfscale*1.659in,
at={(\pgfscale*0.758in,\pgfscale*0.481in)},
scale only axis,
xmin=0,
xmax=30,
xlabel style={font=\color{white!15!black}},
xlabel={Time-step $t$},
ylabel style={font=\color{white!15!black}},
ylabel={$r$},
axis background/.style={fill=white},
axis x line*=bottom,
axis y line*=left,
xmajorgrids,
ymajorgrids,
legend style={legend columns = 2, legend cell align=left, align=left, draw=white!15!black,legend style={at={(0,2)},anchor=west}},
xtick distance = 5,
label style={font=\small},
tick label style={font=\small}
]
\addplot [color=mycolor1, dashed, line width=1.0pt, mark size=1pt, mark=*, mark options={solid, mycolor1}]
  table[row sep=crcr]{%
1	-0.5\\
2	-0.2774\\
3	-0.0993417999999999\\
4	-0.0020940925999999\\
5	0.0166424404518001\\
6	-0.0180538655731374\\
7	-0.0748049915630561\\
8	-0.127862541385103\\
9	-0.162670023725054\\
10	-0.175872578321671\\
11	-0.171944483427357\\
12	-0.158775388672986\\
13	-0.143954020381704\\
14	-0.132597203092507\\
15	-0.126774127637602\\
16	-0.126075127020246\\
17	-0.128705456935217\\
18	-0.132568709909807\\
19	-0.136017162761344\\
20	0.109626885122031\\
21	0.267142384111049\\
22	0.323516173747873\\
23	0.301332784118485\\
24	0.238030005499905\\
25	0.168794084773865\\
26	-0.00707119254779995\\
27	-0.111970128653349\\
28	-0.142041430365336\\
29	-0.117503129189066\\
30	-0.0671301753297658\\
} node [below,pos=0.5,yshift=-0.15cm] {$[r(t)]_1$} ;
\addplot [color=mycolor2, dashed, line width=1.0pt, mark size=1pt, mark=*, mark options={solid, mycolor2}]
  table[row sep=crcr]{%
1	0.5\\
2	0.38712\\
3	0.25164564\\
4	0.14327941948\\
5	0.0840944647563599\\
6	0.0726914107410345\\
7	0.0938075825878114\\
8	0.128346317865276\\
9	0.160637142686974\\
10	0.181820976439068\\
11	0.189856051166569\\
12	0.187465412613889\\
13	0.179450701546379\\
14	0.170430416804305\\
15	0.1635186578021\\
16	0.159974734080245\\
17	0.159549322304522\\
18	0.161150141090773\\
19	0.163501316851108\\
20	-3.99547434457581e-05\\
21	-0.149538922285268\\
22	-0.245402854969984\\
23	-0.279711943342955\\
24	-0.26621113241451\\
25	-0.227685061347242\\
26	-0.102728079993375\\
27	0.00430352778459105\\
28	0.0681450202984283\\
29	0.0864464145203436\\
30	0.0715124044244656\\
}node [above,pos=0.5,yshift=0.15cm] {$[r(t)]_2$} ;

\end{axis}
\end{tikzpicture}%
\caption{Reference signal $r(t)$.}
\label{fig:SIM_reference_signal}
\end{figure}
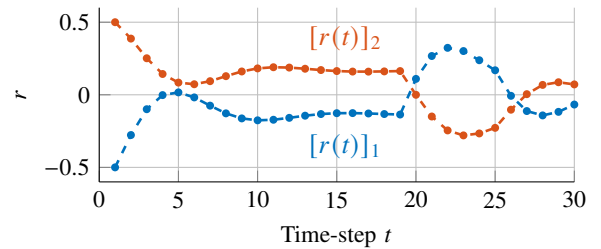
\begin{figure}
\centering
%
%
\definecolor{mycolor1}{rgb}{0.00000,0.44700,0.74100}%
\definecolor{mycolor2}{rgb}{0.85000,0.32500,0.09800}%
\begin{tikzpicture}



\begin{axis}[%
width=\pgfscale*4.521in,
height=\pgfscale*1.659in,
at={(\pgfscale*0.758in,\pgfscale*0.481in)},
scale only axis,
xmin=0,
xmax=30,
xlabel style={font=\color{white!15!black}},
xlabel={Time-step $t$},
ylabel style={font=\color{white!15!black}},
ylabel={$d$},
axis background/.style={fill=white},
axis x line*=bottom,
axis y line*=left,
xmajorgrids,
ymajorgrids,
legend style={legend cell align=left, align=left, draw=white!15!black},
xtick distance = 5,
label style={font=\small},
tick label style={font=\small}
]
\addplot [color=mycolor1, dashed, line width=1.0pt, mark size=1pt, mark=*, mark options={solid, mycolor1}]
  table[row sep=crcr]{%
1	0.2478\\
2	0.2478\\
3	0.2478\\
4	0.3717\\
5	0.3717\\
6	0.3717\\
7	0.2478\\
8	0.2478\\
9	0.2478\\
10	0.2478\\
11	0.2478\\
12	0.2478\\
13	0.4956\\
14	0.4956\\
15	0.4956\\
16	0.1239\\
17	0.1239\\
18	0.1239\\
19	0.1239\\
20	0.1239\\
21	0.1239\\
22	0.4956\\
23	0.4956\\
24	0.4956\\
25	0.3717\\
26	0.3717\\
27	0.3717\\
28	0.2478\\
29	0.2478\\
} node [above,pos=0.62,yshift=0.15cm] {$[d(t)]_1$} ;
\addplot [color=mycolor2, dashed, line width=1.0pt, mark size=1pt, mark=*, mark options={solid, mycolor2}]
  table[row sep=crcr]{%
1	-0.16564\\
2	-0.16564\\
3	-0.16564\\
4	-0.24846\\
5	-0.24846\\
6	-0.24846\\
7	-0.16564\\
8	-0.16564\\
9	-0.16564\\
10	-0.16564\\
11	-0.16564\\
12	-0.16564\\
13	-0.33128\\
14	-0.33128\\
15	-0.33128\\
16	-0.08282\\
17	-0.08282\\
18	-0.08282\\
19	-0.08282\\
20	-0.08282\\
21	-0.08282\\
22	-0.33128\\
23	-0.33128\\
24	-0.33128\\
25	-0.24846\\
26	-0.24846\\
27	-0.24846\\
28	-0.16564\\
29	-0.16564\\
} node [below,pos=0.62,yshift=-0.15cm] {$[d(t)]_2$} ;

\end{axis}
\end{tikzpicture}%
\caption{State-independent disturbance $d(t)$.}
\label{fig:SIM_state_independent_disturbance}
\end{figure}
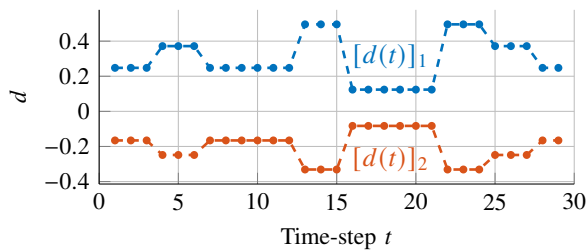
The control scheme uses $Q=I$ and $c_1=0.1$. The ancillary controller gain $K$ is chosen as
\begin{align*}
K = &\, \argmin_{\tilde{K}} \norm{A+B \tilde{K}} \\
& \text{s.t.} ~~ |\lambda| < 1 , ~~ \forall \lambda \in \sigma(A),
\end{align*}
where $\sigma(A)$ is the spectrum of $A$. We have 
\begin{align*}
K = \begin{bmatrix}-0.9075&-0.5029\end{bmatrix} ,~~ \| A\!+\!BK \|_{\infty} = 0.5595.
\end{align*}
Since $m=0.2$, we have $\bar{m}=\| A\!+\!BK \|_{i \infty } + m = 0.7595$. The initial trajectory $x_0,u_0$ is generated by using a simple feedback control law $u_0(t) = K_0(x_0(t)-r(t)),~~t\in\ztm$, where $K_0$ is chosen to place the closed loop poles of $A+BK_0$ to $[0.3,0.2]$.

\Cref{fig:SIM_state_tracking} shows the closed-loop trajectories for different values of $k$. The line in orange represents the initial trajectory $x_0$ generated with the feedback controller. The line in blue shows $x_{50}$, and the dashed grey lines represent the trajectories $x_{5k}$ with $k=2,\dots,9$. As shown in the plots, the ILPC scheme is able to greatly improve upon the initial trajectory, achieving almost perfect tracking of the reference (in red).

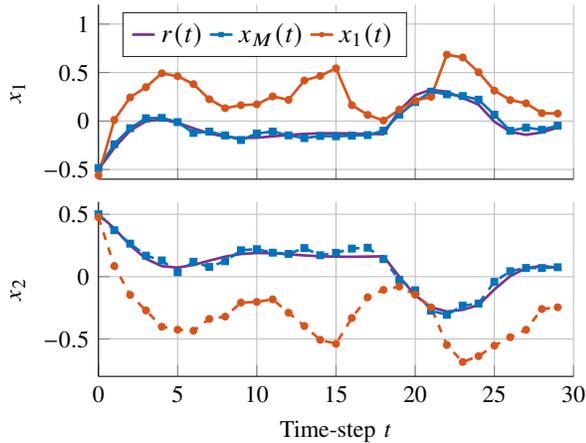
\begin{figure}
\centering
%
%
\definecolor{mycolor1}{rgb}{0.4940 0.1840 0.5560}
\definecolor{mycolor2}{rgb}{0.00000,0.44700,0.74100}%
\definecolor{mycolor3}{rgb}{0.85000,0.32500,0.09800}%
\begin{tikzpicture}

\begin{axis}[%
width=\pgfscale*4.521in,
height=\pgfscale*1.662in,
at={(\pgfscale*0.758in,\pgfscale*2.36in)},
scale only axis,
xmin=0,
xmax=30,
ymin=-0.6,
ymax=1.2,
ylabel style={font=\color{white!15!black}},
ylabel={$x_1$},
axis background/.style={fill=white},
title style={font=\bfseries},
axis x line*=bottom,
axis y line*=left,
xmajorgrids,
ymajorgrids,
legend style={at={(0.35,0.7)}, anchor=south, legend columns=3, legend cell align=left, align=left, draw=white!15!black},
legend image post style={scale=0.5},
separate axis lines,
xtick distance = 5,
xticklabels = empty,
label style={font=\small},
tick label style={font=\small}
]
\addplot [color=mycolor1, line width=1.0pt]
  table[row sep=crcr]{%
0	-0.5\\
1	-0.2774\\
2	-0.0993417999999999\\
3	-0.0020940925999999\\
4	0.0166424404518001\\
5	-0.0180538655731374\\
6	-0.0748049915630561\\
7	-0.127862541385103\\
8	-0.162670023725054\\
9	-0.175872578321671\\
10	-0.171944483427357\\
11	-0.158775388672986\\
12	-0.143954020381704\\
13	-0.132597203092507\\
14	-0.126774127637602\\
15	-0.126075127020246\\
16	-0.128705456935217\\
17	-0.132568709909807\\
18	-0.136017162761344\\
19	0.109626885122031\\
20	0.267142384111049\\
21	0.323516173747873\\
22	0.301332784118485\\
23	0.238030005499905\\
24	0.168794084773865\\
25	-0.00707119254779995\\
26	-0.111970128653349\\
27	-0.142041430365336\\
28	-0.117503129189066\\
29	-0.0671301753297658\\
};
\addlegendentry{$r(t)$}

\addplot [color=mycolor2, line width=1.0pt, mark size=1pt, mark=square*, mark options={solid, mycolor2}]
  table[row sep=crcr]{%
0	-0.485391205232388\\
1	-0.239163550340206\\
2	-0.0740906188510623\\
3	0.0282561764473951\\
4	0.0322066708476182\\
5	-0.0111746428214023\\
6	-0.122673020885011\\
7	-0.109090647835487\\
8	-0.14879030611282\\
9	-0.195992078326456\\
10	-0.128587431286966\\
11	-0.108642872192381\\
12	-0.147740326074401\\
13	-0.176205470698283\\
14	-0.154124735131131\\
15	-0.157165058654132\\
16	-0.151100924119038\\
17	-0.145212608069383\\
18	-0.0995067036066172\\
19	0.0668422256533813\\
20	0.19579977309656\\
21	0.303073634362294\\
22	0.276114593092534\\
23	0.257786399364435\\
24	0.220244490043164\\
25	0.0650772877884096\\
26	-0.10085357746002\\
27	-0.0679836185154162\\
28	-0.0902745735738788\\
29	-0.04751440492731\\
};
\addlegendentry{$x_M(t)$}

\addplot [color=mycolor3, line width=1.0pt, mark size=1pt, mark=*, mark options={solid, mycolor3}]
  table[row sep=crcr]{%
0	-0.559986275021919\\
1	0.0112488501352189\\
2	0.243495199991331\\
3	0.349887484273902\\
4	0.492601172718387\\
5	0.461918849324848\\
6	0.380592233535361\\
7	0.224849671360519\\
8	0.132489636659112\\
9	0.163462431524253\\
10	0.171999632385732\\
11	0.253876255576853\\
12	0.218441303624811\\
13	0.41818728888397\\
14	0.465705142475881\\
15	0.543717639431203\\
16	0.163690748799429\\
17	0.0632513577160313\\
18	0.0052302296392092\\
19	0.117918944359216\\
20	0.209352106199514\\
21	0.249219424875257\\
22	0.684707843608308\\
23	0.65558310544784\\
24	0.503876560006641\\
25	0.313535186836197\\
26	0.216648117361268\\
27	0.182752184254432\\
28	0.0815504314515054\\
29	0.0770603854071963\\
};
\addlegendentry{$x_1(t)$}

\end{axis}

\begin{axis}[%
width=\pgfscale*4.521in,
height=\pgfscale*1.662in,
at={(\pgfscale*0.758in,\pgfscale*0.481in)},
scale only axis,
xmin=0,
xmax=30,
xlabel style={font=\color{white!15!black}},
xlabel={Time-step $t$},
ymin=-0.8,
ymax=0.6,
ylabel style={font=\color{white!15!black}},
ylabel={$x_2$},
axis background/.style={fill=white},
axis x line*=bottom,
axis y line*=left,
xmajorgrids,
ymajorgrids,
legend style={at={(0.5,0.97)}, anchor=north, legend columns=3, legend cell align=left, align=left, draw=white!15!black},
xtick distance = 5,
label style={font=\small},
tick label style={font=\small}
]
\addplot [color=mycolor1, line width=1.0pt]
  table[row sep=crcr]{%
0	0.5\\
1	0.38712\\
2	0.25164564\\
3	0.14327941948\\
4	0.0840944647563599\\
5	0.0726914107410345\\
6	0.0938075825878114\\
7	0.128346317865276\\
8	0.160637142686974\\
9	0.181820976439068\\
10	0.189856051166569\\
11	0.187465412613889\\
12	0.179450701546379\\
13	0.170430416804305\\
14	0.1635186578021\\
15	0.159974734080245\\
16	0.159549322304522\\
17	0.161150141090773\\
18	0.163501316851108\\
19	-3.99547434457581e-05\\
20	-0.149538922285268\\
21	-0.245402854969984\\
22	-0.279711943342955\\
23	-0.26621113241451\\
24	-0.227685061347242\\
25	-0.102728079993375\\
26	0.00430352778459105\\
27	0.0681450202984283\\
28	0.0864464145203436\\
29	0.0715124044244656\\
};

\addplot [color=mycolor2, dashed, line width=1.0pt, mark size=1pt, mark=square*, mark options={solid, mycolor2}]
  table[row sep=crcr]{%
0	0.500389626509504\\
1	0.372157855496243\\
2	0.265074141240345\\
3	0.168387879452131\\
4	0.130084349434359\\
5	0.0354615587268484\\
6	0.119646987911878\\
7	0.0778723699034915\\
8	0.123998250118225\\
9	0.211776754052544\\
10	0.220524305863556\\
11	0.192254637488174\\
12	0.182756337136304\\
13	0.229232151903869\\
14	0.171835863970182\\
15	0.191092641979912\\
16	0.225045205723102\\
17	0.232382000130238\\
18	0.140908479950685\\
19	-0.0258408419367319\\
20	-0.109562875718062\\
21	-0.272741965874909\\
22	-0.305677288164521\\
23	-0.232805332945535\\
24	-0.215819843026137\\
25	-0.0416090927619638\\
26	0.043793917756102\\
27	0.0696501042218182\\
28	0.0689899126450936\\
29	0.075433741416102\\
};

\addplot [color=mycolor3, dashed, line width=1.0pt, mark size=1pt, mark=*, mark options={solid, mycolor3}]
  table[row sep=crcr]{%
0	0.476279908715821\\
1	0.0851656172315571\\
2	-0.146551381610347\\
3	-0.270923928480405\\
4	-0.401888681407485\\
5	-0.425172238740543\\
6	-0.433627170304453\\
7	-0.33886670332339\\
8	-0.320550681272706\\
9	-0.208102880604605\\
10	-0.202637317837997\\
11	-0.182019644863518\\
12	-0.288992403858148\\
13	-0.395916873856446\\
14	-0.506787270752982\\
15	-0.538056302486612\\
16	-0.331709901969151\\
17	-0.16550701605297\\
18	-0.105985226476198\\
19	-0.079287793752088\\
20	-0.144581962641619\\
21	-0.245823112067633\\
22	-0.547941221035777\\
23	-0.68442171879295\\
24	-0.637636641427002\\
25	-0.553514537620878\\
26	-0.485977948635115\\
27	-0.426031406295588\\
28	-0.259715480840283\\
29	-0.245711217508464\\
};

\end{axis}
\end{tikzpicture}%
\caption{Closed-loop state trajectories.}
\label{fig:SIM_state_tracking}
\end{figure}

\cref{fig:SIM_comparison_noise_nonlinear} compares the closed loop and the open loop costs of the ILPC with the cost obtained by applying the open-loop input $u_k(t)=v_k^*(t)+K(x_k(t)-z_k^*(t))$ (i.e. solving only \Pilc and not repeating the optimization at every time-step in shrinking horizon fashion). The plots shows also the costs of an MPC that utilizes $d_k^\text{ref}=d_1$, where $d_1$ is assumed to be known exactly. Clearly, the BMPC scheme is able to improve upon the performance of both the ILC scheme and the MPC scheme by utilizing the shrinking horizon strategy and by updating $d_k^\text{ref}$ at the beginning of each new iteration, respectively.

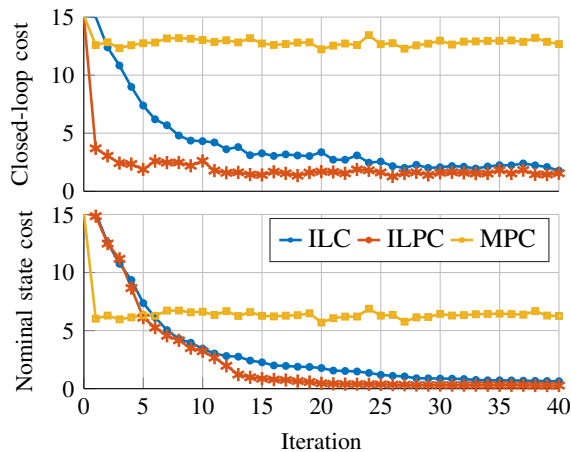
\begin{figure}
\centering
%
%
\definecolor{mycolor1}{rgb}{0.00000,0.44700,0.74100}%
\definecolor{mycolor2}{rgb}{0.85000,0.32500,0.09800}%
\definecolor{mycolor3}{rgb}{0.92900,0.69400,0.12500}%
\begin{tikzpicture}

\begin{axis}[%
width=\pgfscale*4.521in,
height=\pgfscale*1.659in,
at={(\pgfscale*0.758in,\pgfscale*2.363in)},
scale only axis,
xmin=0,
xmax=40,
ymin=0,
ymax=15,
ylabel style={font=\color{white!15!black}},
ylabel={Closed-loop cost},
axis background/.style={fill=white},
title style={font=\bfseries},
axis x line*=bottom,
axis y line*=left,
xmajorgrids,
ymajorgrids,
legend style={legend cell align=left, align=left, draw=white!15!black},
separate axis lines,
xtick distance = 5,
xtick style={draw=none},
xticklabels = empty,
x axis line style = { draw opacity=0},
label style={font=\small},
tick label style={font=\small}
]
\addplot [color=mycolor1, line width=1.0pt, mark size=1pt, mark=*, mark options={solid, mycolor1},forget plot]
  table[row sep=crcr]{%
0	15.1463570256496\\
1	15.0259838309112\\
2	12.3937118638146\\
3	10.829954054437\\
4	8.99297992694262\\
5	7.38434723593837\\
6	6.19744167038528\\
7	5.68660119167429\\
8	4.80183965111926\\
9	4.36467647633443\\
10	4.31563512831839\\
11	4.20711821076084\\
12	3.62061039996489\\
13	3.80697956119786\\
14	3.10735548039213\\
15	3.27136666106238\\
16	3.04453450353994\\
17	3.176767863974\\
18	3.08416665037228\\
19	3.03569049048969\\
20	3.36679664991138\\
21	2.72157062115906\\
22	2.71566289058195\\
23	3.08839878138017\\
24	2.46441966450614\\
25	2.56224172673964\\
26	2.16521245918078\\
27	2.01779867901618\\
28	2.2775885518488\\
29	2.01161041705451\\
30	2.07054313005243\\
31	2.17193379500878\\
32	2.11593047693021\\
33	1.9631597278173\\
34	2.13364361465426\\
35	2.24689162023473\\
36	2.23593715498341\\
37	2.39795935863678\\
38	2.23671587035834\\
39	2.09131913017322\\
40	1.77571367617503\\
41	1.88067937651913\\
42	1.8963426625247\\
43	2.05147498469557\\
44	1.79315232071066\\
45	1.73373981839867\\
46	1.97597077642126\\
47	1.91879879237869\\
48	1.81253212782926\\
49	2.04094280955651\\
};
\addplot [color=mycolor2, line width=1.0pt, mark size=2.5pt, mark=asterisk, mark options={solid, mycolor2},forget plot]
  table[row sep=crcr]{%
0	15.1463570920645\\
1	3.6952136623854\\
2	3.04940148910685\\
3	2.43040221229384\\
4	2.34786924738797\\
5	1.88588078747681\\
6	2.60117940142282\\
7	2.472114833334\\
8	2.47726008988776\\
9	2.19848813781522\\
10	2.61676548076703\\
11	1.78355266490359\\
12	1.58953737699046\\
13	1.62926467346491\\
14	1.43652673512561\\
15	1.40827318758636\\
16	1.67924491990686\\
17	1.54404230719429\\
18	1.35134894703197\\
19	1.60902058243078\\
20	1.69140652231635\\
21	1.66185527199223\\
22	1.54749826999176\\
23	1.87676114926188\\
24	1.80615090988131\\
25	1.62208019750875\\
26	1.27905241063899\\
27	1.53651007009412\\
28	1.63362535595399\\
29	1.40119457125166\\
30	1.5904074640631\\
31	1.63485075459569\\
32	1.58589396119655\\
33	1.52252676070772\\
34	1.5299333874234\\
35	1.80978870905007\\
36	1.53691476850918\\
37	1.82538897401454\\
38	1.45831891695144\\
39	1.44207886115284\\
40	1.55108672883689\\
41	1.43798997550716\\
42	1.60969194466676\\
43	1.91531336645967\\
44	1.43612930374054\\
45	1.38777368582835\\
46	1.4294753363526\\
47	1.54384994307331\\
48	1.3483211924991\\
49	1.38805377057389\\
};
\addplot [color=mycolor3, line width=1.0pt, mark size=1pt, mark=square*, mark options={solid, mycolor3},forget plot]
  table[row sep=crcr]{%
0	15.1463570256496\\
1	12.5965178771535\\
2	12.8358795205891\\
3	12.3278590502036\\
4	12.579532978402\\
5	12.749760074004\\
6	12.8120952333828\\
7	13.1674186350547\\
8	13.1899519787689\\
9	13.1375100079446\\
10	13.0290327592444\\
11	12.8615331665601\\
12	13.0037345213626\\
13	12.8215041238857\\
14	13.1827598957097\\
15	12.7291476940783\\
16	12.599117346263\\
17	12.6792742330021\\
18	12.8020681572146\\
19	12.8216198959167\\
20	12.2231710115344\\
21	12.5413696983582\\
22	12.7244485780788\\
23	12.5945864758557\\
24	13.4433350047765\\
25	12.6673893966217\\
26	12.7520619740359\\
27	12.2781881578766\\
28	12.5622916149708\\
29	12.7115150787439\\
30	12.9739505039686\\
31	12.6363146981819\\
32	12.8805731203841\\
33	12.9088801206891\\
34	12.9444537268604\\
35	12.9554256314292\\
36	12.9873963837164\\
37	12.8667089227271\\
38	13.208130670472\\
39	12.8702947779855\\
40	12.6735845334636\\
41	12.903095943719\\
42	13.1525987780866\\
43	13.3073207133427\\
44	12.8783675326478\\
45	12.8139929954652\\
46	12.9612213070364\\
47	12.7116807968181\\
48	12.5630176329636\\
49	12.9380862918812\\
};
\end{axis}

\begin{axis}[%
width=\pgfscale*4.521in,
height=\pgfscale*1.659in,
at={(\pgfscale*0.758in,\pgfscale*0.481in)},
scale only axis,
xmin=0,
xmax=40,
xlabel style={font=\color{white!15!black}},
xlabel={Iteration},
ymin=0,
ymax=15,
ylabel style={font=\color{white!15!black}},
ylabel={Nominal state cost},
axis background/.style={fill=white},
legend style={legend columns=3,legend cell align=left, align=left, draw=white!15!black},
legend image post style={scale=0.5},
axis x line*=bottom,
axis y line*=left,
xmajorgrids,
ymajorgrids,
xtick distance = 5,
label style={font=\small},
tick label style={font=\small}
]
\addplot [color=mycolor1, line width=1.0pt, mark size=1pt, mark=*, mark options={solid, mycolor1}]
  table[row sep=crcr]{%
0	15.1463570256496\\
1	15.0929967698923\\
2	12.5817609742951\\
3	10.7539669197173\\
4	9.35685261865524\\
5	7.3674743247128\\
6	6.14487247420468\\
7	5.03784565817616\\
8	4.29069028568678\\
9	3.95670188058747\\
10	3.44615308302812\\
11	3.0255312849775\\
12	2.81239038362899\\
13	2.75713232877404\\
14	2.41801176244919\\
15	2.26896722615998\\
16	2.00191657509161\\
17	1.95974026753137\\
18	1.8894897672852\\
19	1.87061834182819\\
20	1.77848908235475\\
21	1.55868205828996\\
22	1.52532629770729\\
23	1.48103140698623\\
24	1.35961050442145\\
25	1.19102218636784\\
26	1.10438528322348\\
27	1.05619967308959\\
28	0.897451724680915\\
29	0.888149360495246\\
30	0.868178419629159\\
31	0.861085546085927\\
32	0.830263892786088\\
33	0.750057299213987\\
34	0.717910535354829\\
35	0.717563422132365\\
36	0.711458661407357\\
37	0.68897960757238\\
38	0.668824474586668\\
39	0.657554104532505\\
40	0.636509739358136\\
41	0.62959637287477\\
42	0.61965077912955\\
43	0.603227920313522\\
44	0.580434542154029\\
45	0.57942916725337\\
46	0.575942481315595\\
47	0.575942482862273\\
48	0.575113139974873\\
49	0.575113126698885\\
};
\addlegendentry{ILC}

\addplot [color=mycolor2, line width=1.0pt, mark size=2.5pt, mark=asterisk, mark options={solid, mycolor2}]
  table[row sep=crcr]{%
0	15.1463570920645\\
1	14.8536513205963\\
2	12.49287032617\\
3	11.1812552031333\\
4	8.66304402123951\\
5	6.11677489244614\\
6	5.26063951627928\\
7	4.55507649162457\\
8	4.16593415823588\\
9	3.49160081653113\\
10	3.22390367388169\\
11	2.66777824676639\\
12	1.96777839647957\\
13	1.21368606727147\\
14	0.980845788651661\\
15	0.863060143920061\\
16	0.783755369719838\\
17	0.745805269531216\\
18	0.654904881595136\\
19	0.575208195910453\\
20	0.463514948879598\\
21	0.417187679241636\\
22	0.395033531269992\\
23	0.390863078005501\\
24	0.378398115918893\\
25	0.374700900057002\\
26	0.351169403121764\\
27	0.338221775533119\\
28	0.315104335602632\\
29	0.314145477844165\\
30	0.313953980105451\\
31	0.310787754297136\\
32	0.309648527773344\\
33	0.307493431606799\\
34	0.29879138874899\\
35	0.296205073654725\\
36	0.294717324129127\\
37	0.29414511102578\\
38	0.272401616302404\\
39	0.271177317713774\\
40	0.27068446400365\\
41	0.262527942535541\\
42	0.262171296746235\\
43	0.236451907598903\\
44	0.235757682030309\\
45	0.235757682030656\\
46	0.235100330897558\\
47	0.232066880737124\\
48	0.230876278512446\\
49	0.226878779092702\\
};
\addlegendentry{ILPC}

\addplot [color=mycolor3, line width=1.0pt, mark size=1pt, mark=square, mark options={solid, mycolor3}]
  table[row sep=crcr]{%
0	15.1463570256496\\
1	6.02512856759914\\
2	6.29529801771682\\
3	5.9749128096568\\
4	6.14416981490648\\
5	6.31010578987367\\
6	6.27136728091676\\
7	6.72521104105808\\
8	6.72473153984773\\
9	6.57857508407965\\
10	6.62119259685615\\
11	6.34588255739541\\
12	6.67873581887861\\
13	6.25973626506548\\
14	6.59495801883015\\
15	6.27306546382593\\
16	6.2261367653341\\
17	6.28360271525403\\
18	6.33024245056238\\
19	6.48958569814084\\
20	5.70125544872734\\
21	6.0713476784145\\
22	6.20396985681232\\
23	6.21023472318458\\
24	6.88554382363376\\
25	6.28975031190235\\
26	6.35062827801199\\
27	5.7671182805718\\
28	6.14453812513711\\
29	6.16768055164487\\
30	6.43808202788913\\
31	6.29818970150174\\
32	6.33298220277397\\
33	6.40794406629514\\
34	6.43634996316316\\
35	6.4628621425243\\
36	6.42308786826765\\
37	6.3700329731671\\
38	6.68197342061533\\
39	6.29574121692671\\
40	6.26325695654189\\
41	6.4683359410027\\
42	6.57928139696708\\
43	6.75710668812927\\
44	6.25794922311783\\
45	6.3511780012454\\
46	6.44227969659823\\
47	6.15645472065239\\
48	6.0516855775181\\
49	6.3947809849495\\
};
\addlegendentry{MPC}

\end{axis}
\end{tikzpicture}%
\caption{Comparison of closed-loop cost.}
\label{fig:SIM_comparison_noise_nonlinear}
\end{figure}

\section{Conclusion}\label{section:CONCL}
In this paper we developed an Iterative Learning Predictive Control scheme that is able to control a constrained nonlinear system performing a repeated finite-time operation subject to both bounded additive stochastic noise and additive uncertainty. The uncertainty is assumed to be state-dependent and to satisfy a Lipschitz inequality. The scheme has guaranteed constraint satisfaction and its nominal performance is non-increasing in the iteration axis. Constraint satisfaction is guranteed by constructing $p$-norm balls that are guaranteed to contain the true state of the system. By leveraging the dependency of the uncertainty on the state, the size of the ball is dynamically chosen during the optimization to minimize the conservativeness. The scheme converges to a solution where the uncertain component has been fully learned and the ball size is at its minimum.

The proposed scheme can at times be conservative, because of the well-known conservativeness of Lipschitz bounds. In the future, we may address this issue by considering different kinds of uncertainty propagation, for example by utilizing polytopes or zonotopes. The conservatism could be further reduced by utilizing a probabilistic description of the noise and the uncertainty, as opposed to the currently used deterministic, worst-case description.

\bibliographystyle{IEEEtran}
\bibliography{Sources/ref.bib}

\appendix

\subsection{Proof of \texorpdfstring{Theorem 1}{Theorem \ref{thm:AN_convergence}}}\label{sec:appendix}
The proof of \cref{thm:AN_convergence} requires several technical Lemmas. The proof is based on induction; hence most Lemmas will begin with an condition which must be verified for all time-steps $t\leq i$ (or $t<i$) for some $i\in\Z_{[0,T-1]}$ and demonstrate the condition continues to hold at time $i+1$ (or $i$). The overall reasoning is summarized as follows:
\begin{enumerate}
    \item \cref{lemma:AN_continuity_lemma} proves that if the nominal ILC trajectory remains close to the reference up to some time-step $i$, and the magnitude $\|w_k(t)\|$ of the noise trajectory $w_k(t)$ is small enough, then the real closed-loop trajectory is also close to the reference up to time-step $i-1$.
    \item \cref{lemma:AN_technical_lemma} is a technical Lemma required in the last step of the main proof to obtain a conclusion that holds with probability $1$.
    \item \cref{lemma:AN_disturbance_set_to_singleton} leverages the previous two Lemmas to show that the disturbance estimate set will shrink to an arbitrarily small ball around the true disturbance.
    \item \cref{lemma:AN_xi_and_rho_are_zero} uses \cref{lemma:AN_disturbance_set_to_singleton} to prove that $\xi_k(t)$ and $\rho_k(t)$ will both converge to an arbirarily small neighborhood of the origin.
    \item \cref{prop:AN_thm_intermediate} finally shows that the nominal ILC trajectory will eventually be arbitrarily close to the reference.
    \item \cref{thm:AN_convergence} can then be proven easily using \cref{prop:AN_thm_intermediate}.
\end{enumerate}
We begin by proving that the closed loop state trajectory $x_k(t)$ matches the open loop prediction $z_k(t)$ as provided by solving \Pilc, so long as the noise $w_k(j)$ is zero for $j < t$, and the state $z_k$ coincides with the reference $r$ up to time step $t$. Moreover, leveraging the outer Lipschitz continuity of the set of solutions of quadratic programs subject to linear perturbations of the constraints, we show that $x_k(t)$ remains arbitrarily close to $z_k(t)$ if $\|w_k(t)\|$ is small enough.

\begin{lemma}
\label{lemma:AN_continuity_lemma}
Let $i\in\Z_{[0,T]}$ be fixed. Suppose that
\begin{subequations}\label{eq:AN_cont_lemma_hp}
\begin{align}
\|z_k(t)-r(t)\| &\leq h_1(\c)+\sigma_1(1/k), ~~ \forall t\leq i, \label{eq:AN_cont_lemma_hp_1} \\
\rho_k(t) &\leq h_2(\c)+\sigma_2(1/k), ~~ \forall t<i, \label{eq:AN_cont_lemma_hp_2} \\
\xi_k(t) &\leq h_3(\c)+\sigma_3(1/k), ~~ \forall t<i, \label{eq:AN_cont_lemma_hp_3}
\end{align}
\end{subequations}
for some $h_1,h_2,h_3,\sigma_1,\sigma_2,\sigma_3\in\kinf$. Then
\begin{enumerate}
    \item[(\emph{i})] there exist $h_4,\sigma_4\in\kinf$ such that if $x_k(0)=\x$ and $w_k(t)=0$ for $t<i$, then
    \begin{align}\label{eq:AN_cont_lemma_result}
    \|x_k(i)-r(i)\|\leq h_4(\c)+\sigma_4(1/k).
    \end{align}
    \item[(\emph{ii})] For every $\epsilon>h_4(\c)+\sigma_4(1/k)$ there exists some $\delta>0$ such that if $\|x_k(0)-\bar{x}\|\leq\delta$ and $\|w_k(t)\|\leq\delta$ for all $t<i$, then $\|x_k(i)-r(i)\|\leq\epsilon$.
\end{enumerate}
\end{lemma}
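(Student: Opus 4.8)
The plan is to view the closed loop of \cref{alg:BMPC} over $[0,i]$ as a finite composition of the online programs \Pmpc built around the already-fixed output of \Pilc, and to prove both claims by combining an exact ``ideal'' tracking statement with stability of these programs under data perturbations. First I would translate the hypotheses \cref{eq:AN_cont_lemma_hp} into closeness of the references produced by \Pilc to the ideal tracking data: the constraint $\|z_k(t)-x_k^\text{ref}(t)\|\le\xi_k(t)$ in \Pilc together with \cref{eq:AN_cont_lemma_hp_1,eq:AN_cont_lemma_hp_3} gives $\|x_k^\text{ref}(t)-r(t)\|$ bounded by a sum of $\kinf$ functions of $\c$ and of $1/k$ for $t<i$; the norm constraint on $d_k^\text{ref}$ in \Pilc forces $\|d_k^\text{ref}(t)-\f(x_k^\text{ref}(t),t)\|\le\rho_k(t)$ (as explained after \cref{eq:PREF_disturbance_reference_set_estimate2}), which with \cref{eq:AN_cont_lemma_hp_2} and the Lipschitz bound of \cref{ass:PF_disturbance} gives $\|d_k^\text{ref}(t)-\f(r(t),t)\|$ bounded by the same order; and subtracting the nominal dynamics \cref{eq:PREF_nominal_dynamics} from the reference dynamics implied by \cref{ass:AN_perfect_tracking}, namely $r(t+1)=A(t)r(t)+B(t)v^*(t)+\f(r(t),t)$, gives $\|B(t)(v_k(t)-v^*(t))\|$ bounded by the same order, for $t<i$. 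All of these collapse to $0$ in the ideal case $z_k\equiv r$, $\rho_k\equiv\xi_k\equiv0$.

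For \emph{(i)} I would induct on $\tau\in\Z_{[0,i]}$ with the claim that $\|x_k(\tau)-r(\tau)\|$ is bounded by a $\kinf$ function of $\c$ plus a $\kinf$ function of $1/k$; the base case $\tau=0$ is immediate since $x_k(0)=\x$ and $\|\x-r(0)\|\le h_1(\c)+\sigma_1(1/k)$. For the induction step I would construct a feasible solution of \Pmpc at time $\tau$ whose first block tracks $r$: start it at $z_k^\tau(0)=x_k(\tau)$ and feed in the shifted input $s\mapsto v_k(\tau+s)$, so that by the inductive hypothesis ($x_k(\tau)\approx r(\tau)$), by $d_k^\text{ref}\approx\f(r(\cdot),\cdot)$, and by $B(t)v_k(t)\approx B(t)v^*(t)$, the resulting nominal trajectory stays within a $\kinf(\c)+\kinf(1/k)$ ball of $r$ on the part of the horizon covered by \cref{eq:AN_cont_lemma_hp}, making that block of the \Pmpc objective small; feasibility of the tightened constraints is ensured by \cref{ass:AN_strict_feasibility} on that block and by splicing in the recursive-feasibility warm start of \cref{prop:AN_recursive_feasibility_time} thereafter. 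Since $Q\succ0$, a small optimal cost forces $\|z_k^{\tau,*}(0)-r(\tau)\|$ small; propagating through the \Pmpc dynamics with $d_k^\text{ref}\approx\f(r(\cdot),\cdot)$ forces $\|v_k^{\tau,*}(0)-v^*(\tau)\|$ small; and the feedback correction $K(\tau)(x_k(\tau)-z_k^{\tau,*}(0))$ is controlled using the tube relation of \cref{lemma:AN_tubes} and the inductive bound. Substituting $u_k(\tau)=v_k^{\tau,*}(0)+K(\tau)(x_k(\tau)-z_k^{\tau,*}(0))$ into the closed-loop dynamics with $w_k(\tau)=0$, subtracting the reference dynamics, and using \cref{ass:PF_disturbance}, advances the induction; collecting the $\c$- and $1/k$-dependent terms into single $\kinf$ functions yields $h_4,\sigma_4$ and \cref{eq:AN_cont_lemma_result}.

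For \emph{(ii)} I would treat the perturbed trajectory (with $\|x_k(0)-\x\|\le\delta$ and $\|w_k(t)\|\le\delta$ for $t<i$) as a perturbation of the one in \emph{(i)}. For $p=\infty$ each online instance of \Pmpc is a quadratic program whose data enters affinely through the measured state, which itself differs from its \emph{(i)}-value by $O(\delta)$ accumulated over at most $T$ steps; by outer Lipschitz continuity of the optimal-solution set of a quadratic program under linear perturbations of its data, every optimizer of \Pmpc at $\tau<i$ stays within an $O(\delta)$ neighbourhood of its \emph{(i)}-value, and propagating through the finitely many closed-loop steps gives $\|x_k(i)-r(i)\|\le h_4(\c)+\sigma_4(1/k)+C\delta$ with $C$ depending only on problem data. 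Given $\epsilon>h_4(\c)+\sigma_4(1/k)$ one then takes $\delta<(\epsilon-h_4(\c)-\sigma_4(1/k))/C$.

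The main obstacle is the step in \emph{(i)} that turns a bound on (part of) the optimal \Pmpc cost into a bound on $z_k^{\tau,*}(0)$: because the prediction horizon $[\tau,T]$ reaches past $i$, \cref{eq:AN_cont_lemma_hp} controls only the early terms of the \Pmpc objective, so one must rule out that the optimizer trades a loss on its first block for a gain on the uncontrolled tail. I expect to handle this by splitting the objective, using that the spliced warm start attains small cost on the first block, and exploiting convexity together with the boundedness of $\mathcal{J}_\text{ILC}(\mathbf{x}_k)$ from \cref{prop:AN_non_deg_it} to bound the available tail gain — but making this quantitative, uniformly in $\tau$ and $k$, is the crux, and it is also where \cref{ass:AN_strict_feasibility} is indispensable, both so that the spliced warm start is feasible for the slightly re-tightened \Pmpc constraints and so that the cost-to-distance conversion does not degenerate at the boundary of the feasible set. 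A secondary point is that a quadratic-program solution map is set-valued and only outer semicontinuous, so the perturbation estimate in \emph{(ii)} must be phrased for the entire optimizer set rather than for a particular selection.
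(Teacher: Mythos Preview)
Your framework for (ii) is sound and matches the paper's argument: both use outer Lipschitz continuity of the QP solution map (the paper cites Klatte's theorem) to propagate a $\delta$-perturbation of $x_k(0)$ and $w_k$ through the finite composition of \Pmpc problems. Your preliminary step for (i), translating \cref{eq:AN_cont_lemma_hp} into $\kinf$-bounds on $\|x_k^{\text{ref}}(t)-r(t)\|$, $\|d_k^{\text{ref}}(t)-\f(r(t),t)\|$ and $\|B(t)(v_k(t)-v^*(t))\|$, is also correct and used by the paper.

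The gap is exactly where you flag it, and your proposed fix does not close it. Boundedness of $\mathcal{J}_{\text{ILC}}(\mathbf{x}_k)$ from \cref{prop:AN_non_deg_it} only gives a constant upper bound $M$ on the tail of the warm start; since the optimal \Pmpc tail cost is merely non-negative, the tail gain $J_B^{\text{ws}}-J_B^*$ can be as large as $M$, which lets the optimizer absorb an $O(1)$ loss on the first block. Convexity alone does not prevent this trade-off (take $J_A=(x-1)^2$, $J_B=10(x+1)^2$: the warm start $x_0=1$ has $J_A=0$, yet the optimizer sits near $-1$ with $J_A\approx 4$). So from your argument you only get $\|z_k^{\tau,*}(0)-r(\tau)\|_Q^2\le\text{small}+M$, which is useless.

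The paper's missing idea is to use \emph{optimality} of $\mathbf{x}_k$ for \Pilc, not merely boundedness of its cost. For each $\ell$, it builds a perturbed \Pilc candidate by replacing $z_k(t+\ell-1)$ with the corresponding \Pmpc value $z_k^{t-1}(\ell)$ (adjusting $d^{\text{ref}}$ at two time steps so the rest of the trajectory is unchanged), and uses \cref{ass:AN_strict_feasibility} to make a small convex combination feasible. Optimality of $\mathbf{x}_k$ then forces
\[
\|z_k(t+\ell-1)-r(t+\ell-1)\|_Q-\|z_k^{t-1}(\ell)-r(t+\ell-1)\|_Q\le \bar k,
\]
with $\bar k=O(\c)$: the \Pmpc trajectory cannot beat the \Pilc trajectory by more than $O(\c)$ at \emph{any single} time step. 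This gives a pointwise bound $\Delta\mathcal{J}(\ell)\le\gamma_4(\c)$ on the ILC-minus-MPC cost difference, and combined with $\sum_\ell\Delta\mathcal{J}(\ell)\ge 0$ (optimality of \Pmpc, since the shifted \Pilc solution is feasible for it) yields $\Delta\mathcal{J}(0)\le (T-t)\gamma_4(\c)$. That is what pins down $z_k^{t-1}(0)$ and $z_k^{t-1}(1)$ near $z_k(t-1)$ and $z_k(t)$, after which your closing algebra (bounding $\|x_k(t)-r(t)\|$ via the closed-loop update and the $v_k$-bound) goes through. In short: the correct lever is ``if \Pmpc were much better on the tail, splice its value into \Pilc and contradict optimality,'' not ``the \Pilc cost is bounded.''
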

%
%
\begin{proof}[Proof (i)]
The proof of part (i) will proceed as follows
\begin{enumerate}
    \item we work towards a contradiction by assuming that \cref{eq:AN_cont_lemma_result} does not hold for at least one time-step $t$;
    \item next, we bound $\|z_k(t+\ell-1)-r(t+\ell-1)\|_Q^2-\|z_k^{t-1}(\ell)-r(t+\ell-1)\|_Q^2$ for all $\ell\in\Z_{[0,T-t]}$ by separately bounding $\|z_k(t+\ell-1)-r(t+\ell-1)\|+\|z_k^{t-1}(\ell)-r(t+\ell-1)\|$ (leveraging the compactness of $\mathscr{X}_c$) and $\|z_k(t+\ell-1)-r(t+\ell-1)\|-\|z_k^{t-1}(\ell)-r(t+\ell-1)\|$ (by constructing a feasible trajectory for the MPC at time-step $t$ and leveraging \cref{ass:AN_strict_feasibility});
    \item by studying the difference of the ILC cost and the MPC cost (pointwise at each time-step), we then obtain a bound on $\|z_k^t-1(i)-z_k(t-1+i)\|$ for $i=0,1$;
    \item finally, we bound $\|x_k(t)-r(t)\|$.
\end{enumerate}
Suppose that $w_k(t)=0$ for $t<i$ and $x_k(0)=\bar{x}$. Suppose, for the sake of contradiction, that \cref{eq:AN_cont_lemma_result} does not hold for one or more time-step strictly smaller than $i$, let $t<i$ be the smallest such time-step, then \cref{eq:AN_cont_lemma_result} for all $j < t$, that is, there exists some some $\gamma_1,g_1\in\kinf$ such that for all $j<t$
\begin{align}\label{eq:AN_cont_lemma_contr_hp}
\|x_k(j)-r(j)\|\leq\gamma_1(\c)+g_1(1/k).
\end{align}
Since by assumption $x_k(0)=\bar{x}=r(0)$, we have $t>0$.
We now show that for all $\ell\in\Z_{[0,T-t]}$
\begin{multline}
\label{eq:AN_cont_lemma_cost_condition_4}
\|z_k(t+\ell-1)-r(t+\ell-1)\|_Q\\-\|z_k^{t-1}(\ell)-r(t+\ell-1)\|_Q\leq\gamma_2(\c),
\end{multline}
for some $\gamma_2\in\kinf$. To this end, given any $\ell\in\Z_{[0,T-t]}$, consider a candidate solution $\tilde{\mathbf{x}}=\mathbf{x}_k$ for \Pilc with the following modifications
\begin{align*}
\tilde{z}(t\!+\!\ell\!-\!1) & = z_k^{t-1}(\ell ),\\
\tilde{d}(t\!+\!\ell\!-\!2) & = d_k^\text{ref}(t\!+\!\ell\!-\!2)-z_k(t\!+\!\ell\!-\!1)+z_k^{t-1}(\ell),\\
\tilde{d}(t\!+\!\ell\!-\!1) & = d_k^\text{ref}(t\!+\!\ell\!-\!1) - A(t\!+\!\ell\!-\!1)[z_k^{t-1}(\ell) - z_k(t\!+\!\ell\!-\!1)].
\end{align*}
Thanks to \cref{ass:AN_strict_feasibility}, we have that $\theta\tilde{\mathbf{x}}+(1-\theta)\mathbf{x}_k$ is feasible for \Pilc for some $\theta\in[0,1]$ sufficiently small. Observe that
\begin{align}\label{eq:AN_cont_lemma_aux}
\begin{split}
\tilde{\rho}(t\!+\!\ell\!-\!2) & \leq \rho(t\!+\!\ell\!-\!2) + \|z_k^{t-1}(\ell)-z_k(t\!+\!\ell\!-\!1)\|,\\
\tilde{\rho}(t\!+\!\ell\!-\!1) & \leq \rho(t\!+\!\ell\!-\!1) + \theta \|A(t\!+\!\ell\!-\!1)z_k^{t-1}(\ell) - z_k(t\!+\!\ell\!-\!1)\|,\\
\tilde{\xi}(t\!+\!\ell\!-\!1) & \leq \xi(t\!+\!\ell\!-\!1) + \theta \|z_k^{t-1}(\ell)-z_k(t\!+\!\ell\!-\!1)\|.
\end{split}
\end{align}
Since $\theta\tilde{\mathbf{x}}+(1-\theta)\mathbf{x}_k$ and $\mathbf{x}_k$ only differ at time-steps $t+\ell-2$ and $t+\ell-1$, the difference between their cost between is
\begin{align*}\allowdisplaybreaks
&\mathcal{J}_\text{ILC}(\theta\tilde{\mathbf{x}}+(1-\theta)\mathbf{x}_k)-\mathcal{J}_\text{ILC}(\mathbf{x}_k)\\
\stackrel{\cref{eq:AN_cont_lemma_aux}}{\leq}\,&\|\theta z_k^{t-1}(\ell)+(1-\theta)z_k(t+\ell-1)-r(t+\ell-1)\|_Q^2\\ &-\|z_k(t+\ell-1)-r(t+\ell-1)\|_Q^2\\ &+\theta(c_1+c_2)\|z_k^{t-1}(\ell)-z_k(t+\ell-1)\|\\ &+\theta c_2\|A(t+\ell-1)[z_k^{t-1}(\ell)-z_k(t+\ell-1)]\|\\
\stackrel{(a)}{\leq}\,&\|\theta[z_k^{t-1}(\ell)-r(t+\ell-1)]+(1-\theta)[z_k(t+\ell-1)\\ &-r(t+\ell-1)]\|_Q^2-\|z_k(t+\ell-1)-r(t+\ell-1)\|_Q^2\\ &+\theta\bar{k}\|z_k^{t-1}(\ell)-z_k(t\!+\!\ell\!-\!1)\|_Q\\
\stackrel{(b)}{\leq}\,&\theta\|z_k^{t-1}(\ell)-r(t\!+\!\ell\!-\!1)\|_Q^2-\theta\|z_k(t\!+\!\ell\!-\!1)-r(t\!+\!\ell\!-\!1)\|_Q^2\\ &+\theta\bar{k}\|z_k^{t-1}(\ell)-z_k(t\!+\!\ell\!-\!1)\|_Q\\   
\stackrel{(c)}{\leq}\,&\theta[\|z_k^{t-1}(\ell)-r(t+\ell-1)\|_Q+\|z_k(t+\ell-1)\\ &-r(t+\ell-1)\|_Q][\|z_k^{t-1}(\ell)-r(t+\ell-1)\|_Q\\ &-\|z_k(t+\ell-1)-r(t+\ell-1)\|_Q]+\theta\bar{k}\|z_k^{t-1}(\ell)\\ &-r(t+\ell-1)\|_Q+\theta\bar{k}\|z_k(t+\ell-1)-r(t+\ell-1)\|_Q,
\end{align*}
where in $(a)$ we defined
\begin{align*}
\bar{k}&:= k_{p,-} (c_1 + c_2 + c_2 \|A(t+\ell-1)\|) \\
&=c_1k_{p,-}(1+1/m(t+\ell-2)+\|A(t+\ell-1)\|/m(t+\ell-1)),
\end{align*}
where $k_{p,-}>0$ is a constant satisfying $\|v\| \leq k_{p,-}\|v\|_2$ for any $v\in\R^{n_x}$, in $(b)$ we used the identity $\|(1-\theta)a+\theta b\|_2^2=(1-\theta)\|a\|_2^2+\theta \|b\|_2^2-\theta(1-\theta)\|a-b\|_2^2$ (compare equation (4) in \cite{ryu2016primer}), and in $(c)$ we used the relation $a^2-b^2=(a-b)(a+b)$. Defining
\begin{align*}
\epsilon_1:=\|z_k^{t-1}(\ell)-r(t\!+\!\ell\!-\!1)\|_Q-\|z_k(t\!+\!\ell\!-\!1)-r(t\!+\!\ell\!-\!1)\|_Q,
\end{align*}
a sufficient condition for $\mathcal{J}_\text{ILC}(\theta\tilde{\mathbf{x}}+(1-\theta)\mathbf{x}_k)-\mathcal{J}_\text{ILC}(\mathbf{x}_k)<0$ is therefore
\begin{align}
\label{eq:AN_cont_lemma_cost_condition_5}
\bar{k}+\epsilon_1<0\iff\epsilon_1<-\bar{k}.
\end{align}
If \cref{eq:AN_cont_lemma_cost_condition_5} holds then $\theta\tilde{\mathbf{x}}+(1-\theta)\mathbf{x}_k$ solves \Pilc while achieving a smaller cost than $\mathbf{x}_k$, contradicting the optimality of $\mathbf{x}_k$. We therefore conclude that \cref{eq:AN_cont_lemma_cost_condition_5} must not hold. In other words, we need $\epsilon_1 \geq -\bar{k}$, which is equivalent to \cref{eq:AN_cont_lemma_cost_condition_4} for $\gamma_2(\c)=\bar{k}$. Since $\ell\in\Z_{[0,T-t]}$ was chosen arbitrarily, \cref{eq:AN_cont_lemma_cost_condition_4} holds for all $\ell\in\Z_{[0,T-t]}$.

Next, observing that
\begin{align*}
\|z_k(t\,+&\,\ell-1)-r(t+\ell-1)\|_Q\\&+\|z_k^{t-1}(\ell)-r(t+\ell-1)\|_Q \leq 2\lambda_\text{max}(Q) \operatorname{diam}\mathscr{X}_x,
\end{align*}
and that for any $a,b\in\R$ it holds $a^2-b^2=(a-b)(a+b)$, we get
\begin{multline}
\label{eq:AN_cont_lemma_cost_condition_6}
\|z_k(t+\ell-1)-r(t+\ell-1)\|_Q^2-\|z_k^{t-1}(\ell)-r(t+\ell-1)\|_Q^2\\\leq 2\lambda_\text{max}(Q) \operatorname{diam}\mathscr{X}_x \gamma_2(\c)=:\gamma_3(\c).
\end{multline}
We now study the difference in cost (pointwise in time) $\Delta \mathcal{J}(\ell)$ between the ILC solution and the MPC solution for a given $\ell$
\begin{align*}
\Delta \mathcal{J}(\ell) & \stackrel{\hphantom{(27)}}{:=} \|z_k(t+\ell-1)-r(t+\ell-1)\|_Q^2-\|z_k^{t-1}(\ell)\\ &~~~-r(t+\ell-1)\|_Q^2 + c_1\xi_k(t+\ell-1) - c_1 \xi_k^{t-1}(\ell) \\
& \stackrel{\cref{eq:AN_cont_lemma_cost_condition_6}}{\leq} \gamma_3(\c) + c_1 k_{p,-} \|z_k(t+\ell-1)-x_k^\text{ref}(t+\ell-1)\| \\
& \stackrel{\hphantom{(27)}}{\leq} \gamma_3(\c) + c_1 k_{p,-} \operatorname{diam} \mathscr{X}_x =: \gamma_4(\c),
\end{align*}
with $\gamma_4\in\kinf$. Condition \cref{eq:AN_cont_lemma_cost_condition_6} ensures that if $\Delta \mathcal{J}(0)>0$, that is, the \Pmpc solution improves the cost for $\ell=0$ compared to the ILC, the improvement is bounded above by $\gamma_4(c_1)$. By optimality of the solution of \Pmpc we must have $\sum_{i=1}^{T-t}\Delta \mathcal{J}(i)>0$. Therefore, if $\Delta \mathcal{J}(0)<0$, there must be a time-index $\ell>0$ for which $\Delta \mathcal{J}(0)>0$. $\Delta \mathcal{J}(0)$ is most negative when $\Delta \mathcal{J}(0)>0$ for all $\ell\in\Z_{[1,T-t]}$. Therefore we must have
\begin{align*}
& \Delta \mathcal{J}(0) \\
=\,&\|z_k^{t-1}(0)-r(t-1)\|_Q^2 - \|z_k(t-1)-r(t-1)\|_Q^2 \\ & + c_1\|z_k^{t-1}(0)-x_k^\text{ref}(t-1)\|-c_1\|z_k(t-1)-x_k^\text{ref}(t-1)\| \\ \leq & \sum_{\ell=1}^{T-t} \Delta \mathcal{J}(\ell) \leq (T-t)\gamma_4(\c),
\end{align*}
which is equivalent to
\begin{align*}
& \, \|z_k^{t-1}(0)-r(t-1)\|_Q^2+c_1\|z_k^{t-1}-x_k^\text{ref}(t-1)\|  \\
\stackrel{\hphantom{(23a)}}{\leq} & (T-t)\gamma_4(\c) + \|z_k(t\!-\!1)-r(t\!-\!1)\|_Q^2\\ &\,+c_1\|z_k(t\!-\!1)-x_k^\text{ref}(t\!-\!1)\| \\
\stackrel{\cref{eq:AN_cont_lemma_hp_3}}{\leq} & \, (T-t)\gamma_4(\c) +  \lambda_\text{max}^2(Q)k_{p,+}^2\operatorname{diam}\mathscr{X}_x[h_1(\c)+\sigma_1(1/k)]\\&\,+c_1[h_3(\c)+\sigma_3(1/k)]\\
=:&\gamma_5(\c)+g_5(1/k),
\end{align*}
where $k_{p,+}>0$ is a constant satisfying $\|v\| \geq k_{p,+}\|v\|_2$ for all $v\in\R^{n_x}$, and $\gamma_5,g_5\in\kinf$. In particular we have $\|z_k^{t-1}(0)-r(t-1)\|_Q^2\leq\gamma_5(\c)+g_5(1/k)$, which implies
\begin{align*}
\|z_k^{t-1}(0)-r(t-1)\| &\leq \lambda_\text{min}(Q)^{-1}k_{p,-} \sqrt{\gamma_5(\c)+g_5(1/k)} \\
& \leq \lambda_\text{min}(Q)^{-1}k_{p,-} \sqrt{\gamma_5(\c)}+\sqrt{g_5(1/k)}\\
& =: \gamma_6(\c)+g_6(1/k),
\end{align*}
with $\gamma_6,g_6\in\kinf$. Using \cref{eq:AN_cont_lemma_hp_1} and the triangle inequality we have
\begin{align}
\|z_k^{t-1}(0)-z_k(t\!-\!1)\| &\leq h_1(c_1)+\gamma_1(1/k)+\gamma_6(c_1)+g_6(1/k) \notag \\
&=: \gamma_7(c_1)+g_7(1/k). \label{eq:AN_cont_lemma_ilc_mpc_bound_0}
\end{align}
Since $t<i$, the same holds for the subsequent time-step, i.e.,
\begin{align}
\|z_k^{t-1}(1)-z_k(t)\| \leq \gamma_7(c_1)+g_7(1/k). \label{eq:AN_cont_lemma_ilc_mpc_bound_1}
\end{align}
Finally, consider that
{\allowdisplaybreaks
\begin{align}
&\,\|x_k(t)-r(t)\|\notag\\
\stackrel{(a)}{=}&\,\|A(t\!-\!1)x_k(t\!-\!1)-A(t\!-\!1)r(t\!-\!1)+B(t-1)v_k^{t\!-\!1}(0)\notag\\ &+B(t\!-\!1)K(t\!-\!1)[x_k(t\!-\!1)-z_k^{t-1}(0)]-B(t\!-\!1)v^*(t\!-\!1)\notag\\ &+f(x_k(t\!-\!1),t\!-\!1)-f(r(t\!-\!1),t\!-\!1)\|\notag\\
\stackrel{(b)}{\leq}&\,\bar{m}(t-1)\|x_k(t-1)-r(t-1)\|+\|B(t-1)[v_k^{t-1}(0)\notag\\ &-v^*(t-1)]\|+\|B(t-1)K(t-1)[r(t-1)-z_k^{t-1}(0)]\|\notag\\
\stackrel{(c)}{\leq}&\,\bar{m}(t\!-\!1)[\gamma_1(\c)+s_1(1/k)]+\epsilon_2\notag\\&\,+\|B(t\!-\!1)K(t\!-\!1)[r(t\!-\!1)\!-\!z_k^{t-1}(0)]\|\notag\\
\stackrel{\hphantom{(d)}}{\leq}&\,\bar{m}(t\!-\!1)[\gamma_1(\c)+s_1(1/k)]+\epsilon_2+\|B(t\!-\!1)K(t\!-\!1)[r(t\!-\!1)\notag\\ &-z_k(t-\!1)]\|+\|B(t\!-\!1)K(t\!-\!1)[z_k(t\!-\!1)-z_k^{t-1}(0)]\|\notag\\
\stackrel{(d)}{\leq}&\,\bar{m}(t\!-\!1)[\gamma_1(\c)+s_1(1/k)]+\epsilon_2+\|B(t\!-\!1)K(t\!-\!1)\|\notag\\&\,\cdot[h_1(\c)+\sigma_1(1/k)+\gamma_7(\c)+g_7(1/k)],\label{eq:AN_cont_lemma_xk_m_r_bound}
\end{align}}
where $(a)$ follows from the definition of $x_k(t)$ and \cref{ass:AN_perfect_tracking}, $(b)$ follows from \cref{ass:PF_disturbance} and $\bar{m}(t-1)=\|A(t-1)\|+m(t-1)$, in $(c)$ we defined $\epsilon_2:=\|B(t-1)[v_k^{t-1}(0)-v^*(t-1)]\|$ and used \cref{eq:AN_cont_lemma_contr_hp}, and in $(d)$ we used \cref{eq:AN_cont_lemma_hp_1} and \cref{eq:AN_cont_lemma_ilc_mpc_bound_0}. It remains to show that $\epsilon_2\leq\gamma_8(\c)+g_8(1/k)$ for some $\gamma_8,g_8\in\kinf$. To see this, observe that
{\allowdisplaybreaks
\begin{align*}
&\,\|B(t-1)[v_k^{t-1}(0)-v^*(t-1)]\|\\
\stackrel{(a)}{\leq}&\,\|A(t-1)z_k(t-1)+B(t-1)v_k(t-1)+d_k^\text{ref}(t-1)\\ &-A(t-\!1)r(t-\!1)-B(t-\!1)v^*(t-\!1)-f(r(t-\!1),t-\!1)\|\\ &+\|A(t-1)z_k^{t-1}(0)+B(t-1)v_k^{t-1}-A(t-1)z_k(t-1)\\ &-B(t-1)v_k(t-1)\|+\|A(t-1)\|\|z_k^{t-1}(0)-r(t-1)\|\\ &+\|f(z_k(t-1),t-1)-f(r(t-1),t-1)\|\\ &+\|d_k^\text{ref}(t-1)-f(z_k(t-1),t-1)\|\\
\stackrel{(b)}{\leq}&\, \|z_k(t)-r(t)\|+\|z_k^{t-1}(1)-z_k(t)\|+\|A(t-\!1)\|\\ &\,\|z_k^{t-1}(0)-r(t-1)\|+m(t-1)\|z_k(t-1)-r(t-1)\|\\ &+\|d_k^\text{ref}(t-1)-f(z_k(t-1),t-1)\|\\
\stackrel{(c)}{\leq}&\,h_1(\c)+\sigma_1(1/k)+\gamma_7(\c)+g_7(1/k)+\|A(t-1)\|[\gamma_6(\c)\\&\,+g_6(1/k)]+m(t-1)[h_1(\c)+\sigma_1(1/k)]+h_2(\c)\\&\,+\sigma_2(1/k)+m(t-1)h_3(\c)+\sigma_3(1/k)\\
=:&\,\gamma_8(c)+g_8(1/k),
\end{align*}}
where $(a)$ follows by adding and subtracting terms and the submultiplicative property of the $p$-norm, in $(b)$ we used the definition of $z_k(t)$, $z_k^{t-1}(1)$, and $r(t)$, as well as \cref{ass:PF_disturbance}, and in $(c)$ we used \cref{eq:AN_cont_lemma_hp}, \cref{eq:AN_cont_lemma_ilc_mpc_bound_0}, and \cref{eq:AN_cont_lemma_ilc_mpc_bound_1}.
with $\gamma_8,g_8\in\kinf$. Therefore, returning to \cref{eq:AN_cont_lemma_xk_m_r_bound}, we see that $\|x_k(t)-r(t)\|\leq h_4(\c)+\sigma_4(1/k)$ where $h_4(\cdot):=\bar{m}(t\!-\!1)\gamma_1(\cdot)+\gamma_8(\cdot)+\|B(t\!-\!1)K(t\!-\!1)\|[h_1(\cdot)+\gamma_7(\cdot)]\in\kinf$ and $\gamma_3(\cdot):=\bar{m}(t-1)s_1(\cdot)+g_8(\cdot)+\|B(t\!-\!1)K(t\!-\!1)\|[\sigma_1(\cdot)+g_7(\cdot)]$. We therefore proved that \cref{eq:AN_cont_lemma_result} holds at time $t$, reaching a contradiction.
\end{proof}
\begin{proof}[Proof (ii)]
Assume that for all $\epsilon_1>h_4(\c)+\sigma_4(1/k)$ there exists some $\delta>0$ such that $\|x_k(i-1)-r(i-1)\|\leq\epsilon_1$ if $\|w_k(t)\|\leq\delta_1$ for $i<i-1$. We know from \cite[Theorem 3]{klatte1985lipschitz} that the optimal set of a convex quadratic program is outer Lipschitz continuous for linear perturbations of the cost function and constraints. As a result, the values of $z_k^{i-1}(0)$, $z_k^{i-1}(1)$, and $v_k^{i-1}(0)$ are outer Lipschitz continuous under perturbations of $x_k(i-1)$, meaning that, if we denote with $z_k^{i-1,0}(0)$, $z_k^{i-1,0}(1)$, and $v_k^{i-1,0}(0)$ any optimizers of \Pmpc for any $x_k(i-1)$ satisfying $\|x_k(i-1)-r(i-1)\| \leq h_4(\c)+\sigma_4(1/k)$, and with $z_k^{i-1,\nu}(0)$, $z_k^{i-1,\nu}(1)$, and $v_k^{i-1,\nu}(0)$ any optimizers of \Pmpc for $x_k(i-1)=r(i-1)+\nu$, we have
\begin{align*}
\|z_k^{i-1,\nu}(0)-z_k^{i-1,0}(0)\| \leq k_1\|\nu\|,\\
\|z_k^{i-1,\nu}(1)-z_k^{i-1,0}(1)\| \leq k_2\|\nu\|,\\
\|v_k^{i-1,\nu}(0)-v_k^{i-1,0}(0)\| \leq k_3\|\nu\|,
\end{align*}
for some $k_1,k_2,k_3 \geq 0$. Combining with our previous results, we get that for all $\|x_k(i-1)-r(i-1)\| \leq \delta_2$ the following holds
\begin{align*}
\|z_k(i-1)-z_k^{i-1}(0)\|&\leq\gamma_7(\c)+g_7(1/k)+k_1\delta_2,\\
\|z_k(i)-z_k^{i-1}(1)\|&\leq\gamma_7(\c)+g_7(1/k)+k_2\delta_2,\\
\|B(i-1)[v_k(i-1)-v_k^{i-1}(0)]\|&\leq\gamma_8(\c)+g_8(1/k)\\&\phantom{\leq}~+(k_1+k_3)\delta_2,
\end{align*}
and combining with \cref{eq:AN_cont_lemma_xk_m_r_bound} we conclude that
\begin{multline*}
\|x_k(i)- r(i)\| \leq h_4(\c) + \sigma_4(1/k) \\ + (\|B(i\!-\!1)K(i\!-\!1)\| k_1 + k_1 + k_3) \delta_2.
\end{multline*}
We conclude that for $\|x_k(i-1)-r(i-1)\| \leq \delta$ and $\|x_k(0)-\bar{x}\| \leq \delta$, with $\delta:=\min\{\delta_1,\delta_2\}$, we have $\|x_k(i)- r(i)\| \leq \epsilon$, where $\epsilon \leq h_4(\c) + \sigma_4(1/k) + (\|B(i\!-\!1)K(i\!-\!1)\| k_1 + k_1 + k_3) \delta$ can be taken arbitrarily close to $h_4(\c)+\sigma_4(1/k)$ by making $\delta\to 0$. This completes the proof.
\end{proof}
\begin{remark}
The proof could be extended to the case $p=2$ as long as the solution of each \Pmpc problem possesses the outer Lipschitz continuity property. This is the case if, for example, \Pmpc satisfies the second order sufficient conditions of optimality and the strict constraint qualification \cite[Theorem 4.3]{zhang2017characterizations}.
\end{remark}

Next, we need the following technical result.

\begin{lemma}\label{lemma:AN_technical_lemma}
Suppose $\{ x_k \}_{k\in \N}$ is a sequence of independent random variables satisfying for all $k\in\N$ and any $\epsilon>0$
\begin{align*}
\operatorname*{Pr}\{ \|x_k\| \leq h(c_1) + \gamma(1/k) + \epsilon \}>0,
\end{align*}
where $h,\gamma\in\kinf$ and $c_1>0$ is a constant. Then for any $p\in(0,1)$ there exists some $g\in\kinf$ such that with probability at least $p$ we have for all $k\in\N$ that
\begin{align}\label{eq:AN_tech_lemma_statement}
\operatorname*{Pr} \left\{ \min_{0 \leq j \leq k} \|x_j\| \leq h(c_1) + g(1/k) \right\}\geq p.
\end{align}
\end{lemma}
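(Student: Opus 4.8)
The plan is to prove the slightly stronger statement that a \emph{single} $g\in\kinf$ exists with
$\Pr\{\,\min_{0\le j\le k}\|x_j\|\le h(\c)+g(1/k)\ \text{for every }k\in\N\,\}\ge p$; since the running minimum is non-increasing in $k$, this implies the displayed claim (in particular $\Pr\{\min_{0\le j\le k}\|x_j\|\le h(\c)+g(1/k)\}\ge p$ for each individual $k$). First I would build $g$ as a strictly increasing, continuous regularization of a step function that decreases to $0$ very slowly. For every $n\in\Np$ set $q_n:=\inf_{k\in\N}\Pr\{\|x_k\|\le h(\c)+\gamma(1/k)+1/n\}$, and note $q_n>0$. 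Then I pick integers $1=K_1<K_2<\cdots$ (whose gaps I fix only at the very end) and let $g(1/k):=\gamma(1/K_{n-1})+1/(n-1)$ on the block $k\in\Z_{[K_n,K_{n+1}-1]}$ for $n\ge 2$, with a large enough constant on the first block. Because $\gamma(1/K_{n-1})\to0$ and $1/(n-1)\to0$, these block values decrease to $0$, so after linear interpolation between the points $1/k$ and an arbitrary increasing continuation past $t=1$ I obtain a genuine $g\in\kinf$.

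Next I would introduce, for each $n\ge1$, the ``block success'' event $A_n:=\{\,\|x_j\|\le h(\c)+\gamma(1/K_n)+1/n\ \text{for some }j\in\Z_{[K_n,K_{n+1}-1]}\,\}$. Since $j\ge K_n$ forces $\gamma(1/j)\le\gamma(1/K_n)$, each term in the block satisfies $\Pr\{\|x_j\|\le h(\c)+\gamma(1/K_n)+1/n\}\ge q_n$, and by independence of the $x_j$ this gives $\Pr(A_n^{\mathrm c})\le(1-q_n)^{K_{n+1}-K_n}$. The key deterministic observation is that $\bigcap_{n\ge1}A_n$ is contained in the target event: on that intersection, given any $k$, let $n$ be its block index; the witness $j^\star$ of $A_{n-1}$ obeys $j^\star<K_n\le k$ and $\|x_{j^\star}\|\le h(\c)+\gamma(1/K_{n-1})+1/(n-1)=h(\c)+g(1/k)$, so $\min_{0\le j\le k}\|x_j\|\le h(\c)+g(1/k)$ (the first block being handled directly, using that $x_{K_1}$ is almost surely finite). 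Finally, because $0\le1-q_n<1$, I choose each gap $K_{n+1}-K_n$ large enough that $(1-q_n)^{K_{n+1}-K_n}\le(1-p)2^{-n}$ (and tune the first-block constant similarly); a union bound then yields $\Pr\big((\bigcap_nA_n)^{\mathrm c}\big)\le(1-p)\sum_{n\ge1}2^{-n}=1-p$, i.e.\ $\Pr(\bigcap_nA_n)\ge p$, which proves the stronger statement and hence the lemma.

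I expect the real difficulty to lie in the very first step: extracting a strictly positive, \emph{$k$-uniform} lower bound $q_n>0$ from the stated pointwise positivity. Pointwise positivity alone is not enough — if the small-ball probabilities were summable in $k$, independence together with the Borel--Cantelli lemma would keep the running minimum bounded away from $h(\c)$ with positive probability — so this step must exploit the additional structure present where the lemma is used, namely that the relevant $x_k$ are identically distributed (being built from the i.i.d.\ noise of Assumption \ref{ass:PF_noise}), which makes $\Pr\{\|x_k\|\le h(\c)+\gamma(1/k)+1/n\}$ independent of $k$. A second, minor annoyance is the ``one-block lag'' baked into $g$: a single success inside block $n-1$ must certify the bound throughout all of block $n$, which is why $g$ on block $n$ is set to the block-$(n-1)$ target rather than the block-$n$ target, and why the first block requires separate bookkeeping.
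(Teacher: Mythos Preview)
Your proposal is correct and takes a genuinely different route from the paper. The paper argues directly: by independence (and, tacitly, identical distribution) it writes
\[
\Pr\Bigl\{\min_{0\le j\le k}\|x_j\|\le h(\c)+g(1/k)\Bigr\}=1-\bigl(1-\Pr\{\|x_j\|\le h(\c)+g(1/k)\}\bigr)^k,
\]
so the displayed bound reduces to $\Pr\{\|x_j\|\le h(\c)+g(1/k)\}\ge 1-(1-p)^{1/k}$; it then sets $f(\epsilon):=\Pr\{\|x_k\|\le h(\c)+\gamma(1/k)+\epsilon\}$, inverts $f$ on $[f(0),1]$, and takes $g:=\gamma+\rho$ with $\rho(1/k):=f^{-1}(1-(1-p)^{1/k})$ where defined and $0$ otherwise. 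Your construction instead partitions $\N$ into blocks, secures a ``success'' in each block with probability $1-(1-p)2^{-n}$ by making the block long enough, and then union-bounds across blocks; this yields the stronger uniform-in-$k$ event $\bigcap_n A_n$ rather than the pointwise statement, and the monotonicity of $g$ is built in by the one-block lag. Your argument is more elementary and more explicit about what is actually needed, namely the uniform lower bound $q_n>0$; the paper's inversion is shorter but leans on exactly the same unstated hypothesis (it raises a single $\Pr\{\|x_j\|>\cdot\}$ to the $k$th power and treats $f$ as $k$-independent), and it also glosses over the fact that $f$ need only be non-decreasing, so the ``inverse'' must be read as a generalized inverse. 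Your diagnosis that the i.i.d.\ structure of the noise in Assumption~\ref{ass:PF_noise} is what rescues both proofs is exactly right.
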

\begin{proof}
We have that \cref{eq:AN_tech_lemma_statement} is equivalent to
\begin{align*}
p \leq & \operatorname*{Pr}\left\{ \min_{0 \leq j \leq k} \|x_j\| \leq h(c_1) + g(1/k) \right\} \\
= \, & 1 - (\operatorname*{Pr}\{ \|x_j\| > h(c_1) + g(1/k) \})^k \\
= \, & 1 - (1 - \operatorname*{Pr}\{ \|x_j\| \leq h(c_1) + g(1/k) \} )^k
\end{align*}
and this holds if and only if
\begin{align*}
\operatorname*{Pr}\{ \|x_j\| \leq h(c_1) + g(1/k) \} \geq 1 - \sqrt[k]{1-p}.
\end{align*}
Let $f(\epsilon):=\operatorname*{Pr}\{ \|x_k\| \leq h(c_1)+\gamma(1/k)+\epsilon \}$. Then $f$ is strictly increasing in the range $[f(0),1]$ and so is its inverse. Let $\rho\in\kinf$ be defined as
\begin{align*}
\rho(1/k) = 
\begin{cases}
f^{-1}(1-\sqrt[k]{1-p}) & \text{if } 1-\sqrt[k]{1-p} \in [f(0),1]\\
0 & \text{otherwise}.
\end{cases}
\end{align*}
Since $f^{-1}$ is strictly increasing on its domain $[f(0),1]$, with $f^{-1}(f(0))=0$, $\rho$ is monotonically increasing (although not strictly increasing). Choosing $g:=\gamma+\rho\in\kinf$ completes the proof.
\end{proof}

In the upcoming proofs, we repeatedly use a trajectory $\tilde{\mathbf{x}}_k$ as a candidate solution to \Pilc. Given the solution $\mathbf{x}_k$ of \Pilc and some index $j\in\ztm$, the trajectory $\tilde{\mathbf{x}}_k$ is defined as $\tilde{\mathbf{x}}_k=\mathbf{x}_k$ with the following modifications
\begin{align}
\begin{split}\label{eq:AN_set_to_singleton_candidate_traj}
\tilde{\alpha}_k(j)&=\argmin_{\alpha\in\{0,1\}}\{\alpha\|x_k(j)-r(j)\|+(1\!-\!\alpha)\|x_k^\text{ref}(j)-r(j)\|\},\\
\tilde{v}_k(j)&=\argmin_{v\in\Ru}\{\|v-v^*(j)\|^2_2:(v,z_k(j))\in\mathscr{Z}^{\psi(j)\eta_k(j)}\},\\
&\hspace{-0.77cm}\tilde{x}_k^\text{ref}(j)= \argmin_{x\in\{x_k^\text{ref}(j),x_k(j)\}}\|x-r(j)\|,\\
&\hspace{-0.8cm}\tilde{d}_k^\text{ref}(j)=B(j)[v_k(j)-\tilde{v}(j)]+d_k^\text{ref}(j).
\end{split}
\end{align}
and with $\tilde{\rho}_k$ and $\tilde{\xi}_k$ chosen as small as possible.

In our next result, we prove that, under the same assumptions as \cref{lemma:AN_continuity_lemma}, the estimation of the disturbance associated to the state reference $\tilde{x}^\text{ref}_k(j)$ becomes arbitrarily good. This is a crucial step towards proving that the algorithm converges to a steady state where $\xi$ and $\rho$ become arbitrarily close to $0$.

\begin{lemma}%
\label{lemma:AN_disturbance_set_to_singleton}
Let $i\in\Z_{(1,T]}$, and $j\in\Z_{[0,i]}$ be fixed. Assume that there exists some $h_1,h_2,h_3\in\kinf$ such that for any probability $p\in(0,1)$
\begin{align*}
\|z_k(t)-r(t)\| &\leq h_1(\c)+\sigma_1(1/k) ,~~ \forall t\leq i,\\
\|\xi_k(t)\| &\leq h_2(\c)+\sigma_2(1/k),~~ \forall t<i,\\
\|\rho_k(t)\| &\leq h_3(\c)+\sigma_3(1/k),~~ \forall t<j,
\end{align*}
with probability $p$ for some $\sigma_1,\sigma_2,\sigma_3\in\kinf$. Then there exists some $h_5\in\kinf$ such that for any probability $p\in(0,1)$
\begin{align*}
\mathcal{D}_{k|k}(\tilde{x}_k^\text{ref}(j),j)\subset\{f(\tilde{x}_k^\text{ref}(j),j)\}\oplus \mathcal{B}(h_5(\c)+\sigma_5(1/k)),
\end{align*}
with probability $p$, for some $\sigma_5\in\kinf$ and where $\tilde{x}_k^\text{ref}(j)$ is defined as in \cref{eq:AN_set_to_singleton_candidate_traj}.
\end{lemma}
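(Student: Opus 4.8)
The plan is as follows. Generalizing \cref{eq:PREF_set_estimates} to an arbitrary state $x$, write $\mathcal{D}_{k|k}(x,j)=\bigcap_{l=0}^{k}\bigl\{\bar d_l(j)\oplus\mathcal{W}\oplus\mathcal{B}(m(j)\|x_l(j)-x\|)\bigr\}$ with $\bar d_l(j)=f(x_l(j),j)+w_l(j)$ and $\mathcal{W}=\mathcal{B}(\bar w)$; exactly as in \cref{lemma:PREF_dist_set}, \cref{ass:PF_disturbance,ass:PF_noise} give $f(x,j)\in\mathcal{D}_{k|k}(x,j)$, so it only remains to upper bound the ``radius'' of this set. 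The first step is to show that $\tilde x_k^\text{ref}(j)$ stays close to $r(j)$: since by \cref{eq:AN_set_to_singleton_candidate_traj} $\tilde x_k^\text{ref}(j)$ is whichever of $x_k^\text{ref}(j),x_k(j)$ is nearer to $r(j)$, the \Pilc constraint $\|z_k(j)-x_k^\text{ref}(j)\|\le\xi_k(j)$ together with the hypotheses on $z_k$ and $\xi_k$ give $\|\tilde x_k^\text{ref}(j)-r(j)\|\le h_1(\c)+h_2(\c)+\sigma_1(1/k)+\sigma_2(1/k)=:\bar h(\c)+\bar\sigma(1/k)$ with probability $p$ (for the corner case $j=i$, where $\xi_k(i)$ is not directly bounded, one bounds $\|x_k(i)-r(i)\|$ instead by combining \cref{lemma:AN_continuity_lemma}(ii) and \cref{lemma:AN_technical_lemma}).

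Next I would fix a finite collection of extreme points $w^{(1)},\dots,w^{(N)}\in\mathcal{W}$ with $\bigcap_{n=1}^{N}\bigl(\{w^{(n)}\}\oplus\mathcal{W}\bigr)=\{0\}$; for $p=\infty$ the $N=2n_x$ vectors $\pm\bar w e_n$ suffice. For each $n$ and iteration $l$ the event ``$\|x_l(0)-\bar x\|\le\delta$, $\|w_l(t)\|\le\delta$ for $t<j$, and $\|w_l(j)-w^{(n)}\|\le\delta$'' has positive probability for every $\delta>0$, by \cref{ass:PF_initial_condition,ass:PF_noise} and the independence of $w_l(j)$ from the earlier noise of iteration $l$; and on that event \cref{lemma:AN_continuity_lemma}(ii), applied at time $j$ (using only the hypotheses available up to time $j$), gives $\|x_l(j)-r(j)\|\le h_4(\c)+\sigma_4(1/l)+\kappa(\delta)$ for some $\kappa\in\kinf$. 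Encoding this into one independent random variable per $n$, applying \cref{lemma:AN_technical_lemma} with confidence close to $1$, and taking a union bound over $n=1,\dots,N$ and the event of the previous step, I would obtain --- with probability $p$ --- that for every $k$ one can choose distinct iterations $l_1(k)<\dots<l_N(k)\le k$ (there being infinitely many ``good'' iterations per target) with $\|x_{l_n(k)}(j)-r(j)\|\le h_4(\c)+g(1/k)$ and $\|w_{l_n(k)}(j)-w^{(n)}\|\le g(1/k)$ for some $g\in\kinf$; when $k$ is too small to admit $N$ such iterations the final estimate is made vacuous by enlarging $\sigma_5$.

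On this event, each set $\mathcal{F}_n:=\bar d_{l_n(k)}(j)\oplus\mathcal{W}\oplus\mathcal{B}(m(j)\|x_{l_n(k)}(j)-\tilde x_k^\text{ref}(j)\|)$ is one of the factors of $\mathcal{D}_{k|k}(\tilde x_k^\text{ref}(j),j)$. By the Lipschitz bound of \cref{ass:PF_disturbance} and the two previous steps, the centre of $\mathcal{F}_n$ lies within $m(j)\|x_{l_n(k)}(j)-\tilde x_k^\text{ref}(j)\|+g(1/k)$ of $f(\tilde x_k^\text{ref}(j),j)+w^{(n)}$, while its extra radius $m(j)\|x_{l_n(k)}(j)-\tilde x_k^\text{ref}(j)\|$ is at most $m(j)\bigl(h_4(\c)+\bar h(\c)+g(1/k)+\bar\sigma(1/k)\bigr)$. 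An elementary $\infty$-norm computation --- intersecting boxes with slightly displaced centres and slightly enlarged half-widths about the configuration $\{f(\tilde x_k^\text{ref}(j),j)+w^{(n)}\}_n$, the $\bar w$ half-widths cancelling against the displaced centres because $\bigcap_n(\{w^{(n)}\}\oplus\mathcal{W})=\{0\}$ --- then yields $\bigcap_{n=1}^{N}\mathcal{F}_n\subseteq\{f(\tilde x_k^\text{ref}(j),j)\}\oplus\mathcal{B}(h_5(\c)+\sigma_5(1/k))$, where $h_5$ collects the $\c$-dependent terms (a finite positive combination of $\kinf$ functions, hence $\kinf$, vanishing as $\c\to 0$) and $\sigma_5$ the $1/k$-terms. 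Since $\mathcal{D}_{k|k}(\tilde x_k^\text{ref}(j),j)\subseteq\bigcap_{n}\mathcal{F}_n$, the statement follows with probability $p$.

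I expect the main obstacle to be the bookkeeping of the second step: packaging the ``small initial state / small early noise / noise near $w^{(n)}$'' events over successive iterations into genuinely independent random variables matching the $h(\c)+\gamma(1/k)+\epsilon$ template of \cref{lemma:AN_technical_lemma} (with the $\kinf$ floor supplied by $h_4,\sigma_4$), guaranteeing that the $N$ chosen iterations can be taken distinct and $\le k$, and verifying quantitatively that intersecting the $\mathcal{W}$-translates at the $w^{(n)}$ controls both the centres and the extra radii $m(j)\|x_l(j)-\tilde x_k^\text{ref}(j)\|$, not merely the centres. The $j=i$ edge case of the first step is a minor point, handled by the same pairing of \cref{lemma:AN_continuity_lemma,lemma:AN_technical_lemma}.
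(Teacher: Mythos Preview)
Your proposal is correct in spirit and takes a genuinely different route from the paper's proof. The paper argues with a \emph{single} factor of the intersection: it shows that, for each iteration $k$, with positive probability the noise $w_k(j)$ is large enough (close to the boundary of $\mathcal{W}$) and $x_k(j)$ is close enough to $\tilde{x}_k^\text{ref}(j)$ (via \cref{lemma:AN_continuity_lemma}) that the $k$-th factor $\bar{\mathcal{D}}_k(j)\oplus\mathcal{B}(m(j)\|x_k(j)-\tilde{x}_k^\text{ref}(j)\|)$ fails to contain the ball of radius $r$ about $f(\tilde{x}_k^\text{ref}(j),j)$; it then invokes \cref{lemma:AN_technical_lemma} directly on this one-step shrinkage. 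Your approach instead fixes $N=2n_x$ extreme points $\pm\bar{w}e_n$ of $\mathcal{W}$ and, for each, produces (via \cref{lemma:AN_continuity_lemma}(ii) and \cref{lemma:AN_technical_lemma}) an iteration $l_n(k)\le k$ whose factor is a near-translate of $\{f(\tilde{x}_k^\text{ref}(j),j)+w^{(n)}\}\oplus\mathcal{W}$; the explicit intersection of these $N$ boxes then pins down the set to a small neighbourhood. What this buys you is a fully constructive bound: the paper's single-factor argument tacitly passes from ``the factor does not contain $\mathcal{B}(r)$'' to ``the intersection is contained in $\mathcal{B}(r)$'', a step that is not justified by one cut alone, whereas your $N$-factor intersection makes the containment transparent at the cost of more bookkeeping (distinctness of the $l_n(k)$, the union bound over $n$, and the adaptation of \cref{lemma:AN_technical_lemma} to $N$ targets). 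Your handling of $\|\tilde{x}_k^\text{ref}(j)-r(j)\|$ also differs slightly: the paper bounds it via $\|x_k(j)-r(j)\|$ alone (since $\tilde{x}_k^\text{ref}(j)$ is by construction the nearer of $x_k(j),x_k^\text{ref}(j)$ to $r(j)$), which sidesteps the $j=i$ case you flag, so you may simplify your first step accordingly.
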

\begin{proof}
Let $p\in(0,1)$ be fixed, and suppose that as $k\to\infty$ and $\c\to 0$ we have
\begin{align*}
\mathcal{D}_{k|k}(\tilde{x}_k^\text{ref}(j),j)\ominus\{\f(\tilde{x}_k^\text{ref}(j),j)\}\to\mathcal{B}(r),
\end{align*}
for some $r>0$.
Using \cref{lemma:AN_continuity_lemma}, for any $\epsilon>h_4(\c)+\sigma_4(1/k)$ we have that $\|x_k(j)-r(j)\|\leq\epsilon$ with nonzero probability $\bar{p}=pp'$, where $p'$ is the probability of $\|x_k(0)-\bar{x}\|\leq \delta$ and $\|w_k(t)\|\leq \delta$ for $t<i$, where $\delta$ is defined in \cref{lemma:AN_continuity_lemma}. Moreover, from the choice of variables in \cref{eq:AN_set_to_singleton_candidate_traj}, we get that similarly $\|\tilde{x}^\text{ref}_k(j)-r(j)\|\leq\epsilon$ with probability $\bar{p}$. Combining, we obtain that $\|x_k(j)-\tilde{x}_k^\text{ref}(j)\|\leq 2\epsilon$ with probability $\bar{p}$.

For any $k$ we therefore have with probability $\bar{p}$ that
\begin{multline*}
d_k(j)\!+\!w_k(j)\oplus\mathcal{B}(\bar{w})\subseteq\f(\tilde{x}_k^\text{ref}(j),j)\!+\!w_k(j)\\\oplus\mathcal{B}(\bar{w}\!+\!2m(j)\epsilon),
\end{multline*}
Consider now the probability (over the random variable $w_k(j)$)
\begin{align*}
&~\operatorname{Pr}\left\{ \f(\tilde{x}_k^\text{ref}(j),j) \oplus \mathcal{B}(r) \subseteq d_k(j)+w_k(j) + \mathcal{B}(m(j) \right. \\ & \left. \hspace{4.2cm} \cdot\, \|\tilde{x}_k^\text{ref}(j)-x_k(j)\|+\bar{w}) \right\} \\ 
\leq &~ \bar{p} \operatorname{Pr}\left\{ \{\f(\tilde{x}_k^\text{ref}(j),j)\} \subseteq \{\f(\tilde{x}_k^\text{ref}(j),j)\} + w_k(j) + \mathcal{B}(4m(j) \right. \\ & \left. \hspace{3.4cm} \cdot\, \epsilon + \bar{w} - r) \vphantom{\tilde{x}_k^\text{ref}(j)} \right\} \\
=&~ \bar{p} \operatorname{Pr}\left\{ -w_k(j) \in \mathcal{B}(4m(j)\epsilon+\bar{w}-r) \right\}\\
=&~ \bar{p} \operatorname{Pr}\left\{ \|w_k(j)\| \leq 4m(j)\epsilon+\bar{w}-r \right\}.
\end{align*}
Therefore,
\begin{align*}
&~\operatorname{Pr}\left\{ \f(\tilde{x}_k^\text{ref}(j),j) \oplus \mathcal{B}(r) \not\subset d_k(j)+w_k(j) +\mathcal{B}(m(j) \right. \\ & \hspace{4.2cm} \left. \cdot\, \|\tilde{x}_k^\text{ref}(j)-x_k(j)\|+\bar{w}) \right\} \\ 
\geq & ~ \bar{p} \operatorname{Pr}\left\{ \|w_k(j)\| \geq 4m(j)\epsilon+\bar{w}-r \right\},
\end{align*}
and this probability is strictly greater than $0$ whenever $r>4m(j)\epsilon$ thanks to \cref{ass:PF_disturbance}. This means for any value of $r>0$, we can find some value of $\c$ small enough and some value of $k$ large enough such that $\epsilon>h_4(\c)+\sigma_4(1/k)$ is sufficiently small and $r>4m(j)\epsilon$. This proves that for every $k$, given any $r>h_5(c_1)+g_9(1/k)$, where $h_5(\cdot):=4m(j)h_4(\cdot)$ and $\sigma_5(\cdot):=4m(j)\sigma_4(\cdot)$, there is a nonzero probability that the set $\mathcal{D}_{k|k}(\tilde{x}^\text{ref}(j),j)$ will be strictly contained in $\{\f(\tilde{x}_k^\text{ref}(j),j)\} \oplus \mathcal{B}(r)$. Using \cref{lemma:AN_technical_lemma} we conclude that for a given probability $p$, there exists some $\sigma_5\in\kinf$ such that for all $k$ with probability $p$
\begin{align*}
\mathcal{D}_{k|k}(\tilde{x}_k^\text{ref}(j),j)\subset\{f(\tilde{x}_k^\text{ref}(j),j)\}\oplus \mathcal{B}(h_5(\c)+\sigma_5(1/k)).
\end{align*}
\end{proof}

In our final technical Lemma, we show that if $z_k$ asymptotically matches $r(t)$ up to some time step $i$, then the state deviation $\xi_k(t)$ and the estimation error $\rho_k(t)$ converge asymptotically to a value arbitrarily close to $0$ for all time steps $t < i$. To prove this result, we use both our previous technical Lemmas.

\begin{lemma}
\label{lemma:AN_xi_and_rho_are_zero}
Let $i\in\Z_{(1,T]}$ be fixed. Assume that there exists some $h_1\in\kinf$ such that for any probability $p\in(0,1)$
\begin{align}
\|z_k(t)-r(t)\|_p\leq h_1(\c)+\sigma_1(1/k),~~ \forall t\in\Z_{[0,i]}, \label{eq:AN_xi_and_rho_hp}
\end{align}
with probability $p$ for some $\sigma_1\in\kinf$. Then there exist $h_2,h_3\in\kinf$ such that for any probability $p\in(0,1)$
\begin{align*}
\|\xi_k(t)\|_p &\leq h_2(\c)+\sigma_2(1/k),~~ \forall t<i,\\
\|\rho_k(t)\|_p &\leq h_3(\c)+\sigma_3(1/k),~~ \forall t<i,
\end{align*}
with probability $p$ for some $\sigma_2,\sigma_3\in\kinf$.
\end{lemma}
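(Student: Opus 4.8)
The plan is to prove the two bounds by induction on the time step $t$, running from $t=0$ up to $t=i-1$, so that at step $t$ the inductive hypothesis furnishes exactly $\|\xi_k(s)\|\le h_2(\c)+\sigma_2(1/k)$ and $\|\rho_k(s)\|\le h_3(\c)+\sigma_3(1/k)$ (with probability $p$) for every $s<t$. Together with the standing hypothesis \cref{eq:AN_xi_and_rho_hp} on $z_k$, these are precisely the premises needed to invoke \cref{lemma:AN_continuity_lemma} and \cref{lemma:AN_disturbance_set_to_singleton} at step $t$; the base case $t=0$ is vacuous in $\rho,\xi$ and uses only $z_k(0)=\x=r(0)$.

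At step $t$ I would work with the candidate trajectory $\tilde{\mathbf x}_k$ of \cref{eq:AN_set_to_singleton_candidate_traj}, which leaves $\mathbf x_k$ unchanged except at time $t$: it picks the binary variable $\tilde\alpha_k(t)$ and the reference $\tilde x_k^\text{ref}(t)$ so that the reference is whichever of the two admissible options lies closest to $r(t)$, picks $\tilde v_k(t)$ feasible, and absorbs the input perturbation into $\tilde d_k^\text{ref}(t)$ so that $\tilde z_k=z_k$ — hence the quadratic term $\|z_k-r\|_Q^2$ of the cost is untouched — with $\tilde\xi_k(t),\tilde\rho_k(t)$ taken as small as the constraints of \Pilc allow. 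Feasibility of $\tilde{\mathbf x}_k$ (after a small convexification with $\mathbf x_k$ where needed, exactly as in the proof of \cref{lemma:AN_continuity_lemma}) follows from \cref{ass:AN_strict_feasibility}; combining the resulting cost comparison with optimality of $\mathbf x_k$ and the identity $c_2(t)=c_1/m(t)$ then forces $\xi_k(t)\le\tilde\xi_k(t)+m(t)^{-1}\tilde\rho_k(t)$ and $\rho_k(t)\le m(t)\tilde\xi_k(t)+\tilde\rho_k(t)$. It therefore remains to bound $\tilde\xi_k(t)$ and $\tilde\rho_k(t)$.

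For $\tilde\xi_k(t)=\|z_k(t)-\tilde x_k^\text{ref}(t)\|$, \cref{eq:AN_xi_and_rho_hp} gives $\|z_k(t)-r(t)\|\le h_1(\c)+\sigma_1(1/k)$, while the chosen reference lies within $\|x_k(t)-r(t)\|$ of $r(t)$; applying \cref{lemma:AN_continuity_lemma} (its hypotheses supplied by the inductive hypothesis) together with \cref{lemma:AN_technical_lemma} shows $\|x_k(t)-r(t)\|\le h_4(\c)+g(1/k)$ with probability $p$, so the triangle inequality bounds $\tilde\xi_k(t)$ by a $\kinf$ function of $\c$ plus a $\kinf$ function of $1/k$ with probability $p$. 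For $\tilde\rho_k(t)$, \cref{lemma:AN_disturbance_set_to_singleton} shows that the disturbance estimate set attached to $\tilde x_k^\text{ref}(t)$ is contained in $\{f(\tilde x_k^\text{ref}(t),t)\}\oplus\mathcal B(h_5(\c)+\sigma_5(1/k))$ with probability $p$; placing $\tilde d_k^\text{ref}(t)$ at (essentially) the Chebyshev centre of the relevant estimate set then makes the set-membership constraint \cref{eq:PREF_disturbance_reference_set_estimate2} hold with $\tilde\rho_k(t)\le h_5(\c)+\sigma_5(1/k)$; here one has to reconcile the set $\mathcal D_{k-1|k-1}$ appearing in \Pilc with the set $\mathcal D_{k|k}$ of \cref{lemma:AN_disturbance_set_to_singleton} (e.g.\ by applying the latter at the previous iteration index, or by monotonicity of the estimates in the accumulated data).

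Substituting these two estimates into the cost-comparison inequalities yields $\|\xi_k(t)\|\le h_2(\c)+\sigma_2(1/k)$ and $\|\rho_k(t)\|\le h_3(\c)+\sigma_3(1/k)$ with probability $p$, with $h_2,h_3,\sigma_2,\sigma_3\in\kinf$ taken as the maxima of the finitely many per-time-step $\kinf$ functions and the probability-$p$ event realised as the intersection of finitely many such events (re-calibrating the nominal probability exactly as in \cref{lemma:AN_technical_lemma}); this closes the induction. The hardest parts I anticipate are: (a) constructing $\tilde d_k^\text{ref}(t)$ so that it is simultaneously feasible for the set-membership constraint of \Pilc, compatible with keeping $\tilde z_k=z_k$, and close enough to the true disturbance for \cref{lemma:AN_disturbance_set_to_singleton} to bite — complicated by the binary variable $\alpha_k(t)$, which cannot simply be convexified away; and (b) propagating the ``with probability $p$'' guarantees coherently through the chain \cref{lemma:AN_continuity_lemma}$\to$\cref{lemma:AN_disturbance_set_to_singleton}$\to$ cost comparison and across the finitely many time steps, which is exactly what \cref{lemma:AN_technical_lemma} is designed to enable.
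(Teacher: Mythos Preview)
Your proposal is essentially the paper's proof: induction on the time index, using the candidate $\tilde{\mathbf x}_k$ of \cref{eq:AN_set_to_singleton_candidate_traj} (which keeps $\tilde z_k=z_k$ so only the $\xi$ and $\rho$ terms differ in cost), bounding $\tilde\xi_k(t)$ via \cref{lemma:AN_continuity_lemma} and the hypothesis, bounding $\tilde\rho_k(t)$ via \cref{lemma:AN_disturbance_set_to_singleton}, and then using optimality of $\mathbf x_k$ together with $c_2(t)=c_1/m(t)$ to transfer the bound to $m(t)\xi_k(t)+\rho_k(t)$. Two small points: the paper does not need the convexification step here---feasibility of $\tilde{\mathbf x}_k$ is argued directly from $m(t)\tilde\xi_k(t)+\tilde\rho_k(t)\le m(t)\xi_k(t)+\rho_k(t)$, which forces $\tilde\eta_k(s)\le\eta_k(s)$ for $s>t$ and hence looser tightened constraints; and your remark about placing $\tilde d_k^\text{ref}(t)$ at the Chebyshev centre is inconsistent with your earlier (correct) statement that $\tilde d_k^\text{ref}(t)$ is fixed by the requirement $\tilde z_k=z_k$---the paper simply uses the latter and absorbs the resulting offset into the $\kinf$ bound on $\tilde\rho_k(t)$.
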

\begin{proof}
We use induction. Let $j\in\Z_{[0,i)}$ and suppose that for any $p\in(0,1)$ there exist $\sigma_2,\sigma_3\in\kinf$ such that $\|\xi_k(t)\|_p\leq h_2(\c)+\sigma_2(1/k)$ and $\|\rho_k(t)\|_p\leq h_3(\c)+\sigma_3(1/k)$ for $t < j$ with probability $p$. We now prove that the same holds for $t=j$.
Consider the trajectory $\tilde{\mathbf{x}}$ defined in \cref{eq:AN_set_to_singleton_candidate_traj}. 

It's easy to see that the choice of $\tilde{v}(j)$ and $\tilde{d}^\text{ref}(j)$ produce $\tilde{z}=z_k$. Next, using \cref{eq:AN_xi_and_rho_hp} and \cref{lemma:AN_continuity_lemma} we have that with probability $p$
\begin{align}
\|x_k(j)-z_k(j)\|_p&\leq\|x_k(j)-r(j)\|_p+\|z_k(j)-r(j)\|_p\notag\\&\leq(h_4+h_1)(\c)+(\sigma_4+\sigma_1)(1/k)\notag\\&=:\gamma_{14}(\c)+g_{14}(1/k),\label{eq:AN_xi_and_rho_are_zero_bound_z_minus_x}
\end{align}
and therefore, thanks to the choice of $\tilde{\alpha}(j)$ we get $\|x_k^\text{ref}(j)-z_k(j)\|_p\leq\gamma_{14}(\c)+g_{14}(1/k)$ with probability $p$. Next, since the assumptions of \cref{lemma:AN_disturbance_set_to_singleton} are met, and since $\tilde{x}^\text{ref}(j)$ as defined in the proof of \cref{lemma:AN_disturbance_set_to_singleton} coincides with the one defined in \cref{eq:AN_set_to_singleton_candidate_traj} through the choice of $\tilde{\alpha}(j)$, we have that $\tilde{\rho}(j)\leq h_5(\c)+\sigma_5(1/k)$, and combining
\begin{align*}
m(j)\tilde{\xi}_k(j)+\tilde{\rho}_k(j) \leq \gamma_{15}(\c)+g_{15}(1/k),
\end{align*}
with probability $p$, with $\gamma_{15}:=h_5+\gamma_{14}$ and $g_{15}:=\sigma_5+g_{14}$. Since we must have $\mathcal{J}_\text{ILC}(\mathbf{x}_k)\leq\mathcal{J}_\text{ILC}(\tilde{\mathbf{x}})$ or else $\tilde{\mathbf{x}}$ would be feasible (since $\tilde{\eta}_k(j)<\eta_k(j)$) and with a lower cost than $\mathbf{x}$, we conclude that with probability $p$
\begin{align*}
m(j)\xi_k(j)+\rho_k(j) \leq \gamma_{15}(\c)+g_{15}(1/k),
\end{align*}
which in particular means that $\|\xi_k(j)\|_p\leq h_2(\c)+\sigma_2(1/k)$ and $\|\rho_k(j)\|_p\leq h_3(\c)+\sigma_3(1/k)$ with $h_2:=\gamma_{15}/m(j)$, $h_3:=\gamma_{15}$, $\sigma_2:=g_{15}/m(j)$, $\sigma_3:=g_{15}$ with probability $p$. This concludes the induction step. The initialization step of the proof follows from the same argument by replacing $j$ with $1$, and it is therefore omitted. This completes the proof.
\end{proof}

Before proving the main theorem, we need the following intermediate result.

\begin{proposition}\label{prop:AN_thm_intermediate}
Suppose \cref{ass:PF_noise,ass:PF_disturbance,ass:PF_initial_condition,ass:AN_initial_trajectories,ass:AN_perfect_tracking,ass:AN_strict_feasibility} hold. Then there exist $h_{1,i},h_{2,i},h_{3,i}\in\kinf$, $i\in\ztm$, and $h_{1,T}\in\kinf$ such that for any probability $p\in(0,1)$
\begin{subequations}
\begin{align}
\|z_k(t)-r(t)\|_p &\leq h_{1,t}(\c)+\sigma_{1,t}(1/k), \label{eq:AN_convergence_hp_1} \\
\xi_k(t) &\leq h_{2,t}(\c)+\sigma_{2,t}(1/k), \label{eq:AN_convergence_hp_2} \\
\rho_k(t) &\leq h_{3,t}(\c)+\sigma_{3,t}(1/k), \label{eq:AN_convergence_hp_3}
\end{align}
\end{subequations}
with probability $p$ for some $\sigma_{1,i},\sigma_{2,i},\sigma_{3,i}\in\kinf$, $i\in\ztm$, and
\begin{align*}
\|z_k(T)-r(T)\|_p \leq h_{1,T}(\c)+\sigma_{1,T}(1/k).
\end{align*}
with probability $p$ for some $\sigma_{1,T}\in\kinf$.
\end{proposition}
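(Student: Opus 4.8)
The plan is to prove \cref{prop:AN_thm_intermediate} by induction on the time index $i\in\zt$, building the $\kinf$ functions $h_{1,i},h_{2,i},h_{3,i}$ (together with the error functions $\sigma_{1,i},\sigma_{2,i},\sigma_{3,i}$) one step at a time. For the base case $i=0$, the constraint $z_k(0)=\x=r(0)$ in \Pilc (the identification $r(0)=\x$ coming from \cref{ass:AN_perfect_tracking}, since $(r,v^*,\eta^*)$ solves \Pabsw) gives $\|z_k(0)-r(0)\|=0$, so \cref{eq:AN_convergence_hp_1} holds at $t=0$ for an arbitrary choice of $\kinf$ functions, while the claims on $\xi_k,\rho_k$ at time-steps $t<0$ are vacuous. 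For the inductive step, I assume that for some $i\in\ztm$ the bound \cref{eq:AN_convergence_hp_1} holds for all $t\leq i$, for every $k$, with probability $p$. Then \cref{lemma:AN_xi_and_rho_are_zero} (with its index equal to $i$) yields \cref{eq:AN_convergence_hp_2,eq:AN_convergence_hp_3} for all $t<i$, and \cref{lemma:AN_continuity_lemma} yields that $\|x_k(i)-r(i)\|$ can be made arbitrarily close to $h_4(\c)+\sigma_4(1/k)$ with positive probability. Hence the only genuinely new fact to establish in the step is \cref{eq:AN_convergence_hp_1} at $t=i+1$; once that is in place, re-applying \cref{lemma:AN_xi_and_rho_are_zero} with index $i+1$ supplies \cref{eq:AN_convergence_hp_2,eq:AN_convergence_hp_3} at $t=i$, and carrying the induction through to $i=T$ produces the final bound on $\|z_k(T)-r(T)\|$.

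To bound $\|z_k(i+1)-r(i+1)\|$ I would build a candidate solution $\tilde{\mathbf{x}}_k$ for \Pilc at iteration $k$ that coincides with the optimizer $\mathbf{x}_k$ except at time-steps $i$ and $i+1$, in the spirit of \cref{eq:AN_set_to_singleton_candidate_traj} and of the candidate used in the proof of \cref{lemma:AN_continuity_lemma}. Concretely: pick $\tilde{\alpha}_k(i)$ so that $\tilde{x}_k^\text{ref}(i)$ is whichever of $x_{k-1}(i),x_{k-1}^\text{ref}(i)$ is closest to $r(i)$ (which, by \cref{lemma:AN_continuity_lemma} applied at iteration $k-1$, is within $h_4(\c)+\sigma_4(1/k)$ of $r(i)$ with positive probability, up to a harmless rescaling of $\sigma_4$ since $1/(k-1)\le 2/k$ for $k\ge 2$ and small $k$ are absorbed into the $\kinf$ functions); choose $\tilde{d}_k^\text{ref}(i)$ inside the admissible set of the \Pilc disturbance constraint (which, by \cref{lemma:AN_disturbance_set_to_singleton} and \cref{ass:PF_disturbance}, is contained in a ball of radius $h_5(\c)+\sigma_5(1/k)+m(i)\|\tilde{x}_k^\text{ref}(i)-r(i)\|$ around $f(r(i),i)$) as close to $f(r(i),i)$ as possible, so that $\tilde{\rho}_k(i)$ inherits that same radius; and replace $v_k(i)$ by $\tilde{v}_k(i):=(1-\theta)v_k(i)+\theta v^*(i)$ with $v^*$ from \cref{ass:AN_perfect_tracking}, where $\theta>0$ is small enough that $(z_k(i),\tilde{v}_k(i))\in\mathscr{Z}^{\psi(i)\eta_k(i)}$ — possible by \cref{ass:AN_strict_feasibility}, with $\theta$ depending only on the strict-feasibility margin and not on $k$. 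The resulting nominal state $\tilde{z}_k(i+1)=A(i)z_k(i)+B(i)\tilde{v}_k(i)+\tilde{d}_k^\text{ref}(i)$ then differs from $r(i+1)=A(i)r(i)+B(i)v^*(i)+f(r(i),i)$ by a quantity controlled by the induction hypothesis at $t=i$, by $\|\tilde{d}_k^\text{ref}(i)-f(r(i),i)\|$, and by $(1-\theta)\|v_k(i)-v^*(i)\|$. To leave the remainder of the trajectory unchanged one re-routes the dynamics at $i+1$ via $\tilde{d}_k^\text{ref}(i+1):=d_k^\text{ref}(i+1)-A(i+1)[\tilde{z}_k(i+1)-z_k(i+1)]$, exactly as in \cref{lemma:AN_continuity_lemma}, adjusting $\tilde{\rho}_k(i+1),\tilde{\xi}_k(i+1)$ upward by amounts proportional to $\|\tilde{z}_k(i+1)-z_k(i+1)\|$.

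The conclusion then follows from optimality. Since $\tilde{\mathbf{x}}_k$ and $\mathbf{x}_k$ differ only at steps $i$ and $i+1$, the cost difference $\mathcal{J}_\text{ILC}(\tilde{\mathbf{x}}_k)-\mathcal{J}_\text{ILC}(\mathbf{x}_k)$ is localized there; using the quadratic identity $\|(1-\theta)a+\theta b\|_2^2=(1-\theta)\|a\|_2^2+\theta\|b\|_2^2-\theta(1-\theta)\|a-b\|_2^2$ as in \cref{lemma:AN_continuity_lemma} and the compactness of $\mathscr{X}_x$, this difference can be bounded above by $\theta$ times an expression of the form $[h(\c)+\sigma(1/k)]-\|z_k(i+1)-r(i+1)\|_Q^2$, the negative term being the nominal-cost contribution at $i+1$ and the positive $\kinf$ terms collecting the small $\rho_k,\xi_k$ increments together with the diameter-scaled cross terms. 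Optimality of $\mathbf{x}_k$ forces $\mathcal{J}_\text{ILC}(\tilde{\mathbf{x}}_k)-\mathcal{J}_\text{ILC}(\mathbf{x}_k)\geq 0$, which rearranges to $\|z_k(i+1)-r(i+1)\|\leq h_{1,i+1}(\c)+\sigma_{1,i+1}(1/k)$ for suitable $\kinf$ functions, completing the inductive step. The main obstacle is precisely this last bookkeeping: one must verify that every error term genuinely collapses into a $\kinf$ function of $\c$ plus a $\kinf$ function of $1/k$ with constants uniform in $k$, that $\theta$ can be fixed independently of $k$ so that the $-\|z_k(i+1)-r(i+1)\|_Q^2$ term is not swamped by the positive terms, and that passing from an $\|\cdot\|_Q^2$ bound back to a $\|\cdot\|$ bound (via $\lambda_{\min}(Q)$ and norm-equivalence constants, as done elsewhere in the appendix) preserves the $\kinf$ structure — each routine in isolation but delicate to chain together.
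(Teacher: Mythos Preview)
Your overall architecture matches the paper's proof exactly: induction on $i$, base case from $z_k(0)=\x=r(0)$, invocation of \cref{lemma:AN_xi_and_rho_are_zero} and \cref{lemma:AN_disturbance_set_to_singleton} inside the step, construction of a competitor $\tilde{\mathbf{x}}_k$ differing from $\mathbf{x}_k$ only at $i$ and $i+1$, and an optimality contradiction to force $\|z_k(i+1)-r(i+1)\|$ small. The re-routing at $i+1$ so that $\tilde z_k(i+2)=z_k(i+2)$ is also essentially what the paper does (the paper splits the correction between $\tilde v_k(i+1)$ and $\tilde d_k^{\text{ref}}(i+1)$ via the feedback gain $K(i+1)$, obtaining $A_K(i+1)$ rather than $A(i+1)$, but this is cosmetic).

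The genuine gap is your choice $\tilde v_k(i):=(1-\theta)v_k(i)+\theta v^*(i)$ while simultaneously replacing $\tilde d_k^{\text{ref}}(i)$ \emph{fully} by something close to $f(r(i),i)$. With that mixed scaling, $\tilde z_k(i+1)$ is \emph{not} of the form $(1-\theta)z_k(i+1)+\theta b$ with $b$ close to $r(i+1)$: the $\tilde d_k^{\text{ref}}(i)-d_k^{\text{ref}}(i)$ contribution is order one, and the remaining input error $(1-\theta)\|v_k(i)-v^*(i)\|$ is not known to be small. Hence the quadratic identity you invoke cannot produce a bound of the shape $\theta\,[h(\c)+\sigma(1/k)-\|z_k(i+1)-r(i+1)\|_Q^2]$, and the optimality argument does not close.

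The paper resolves this by decoupling the two roles of $\theta$. First it defines $\tilde{\mathbf{x}}_k$ with $\tilde v_k(i):=\argmin_v\{\|v-v^*(i)\|_2^2 : (z_k(i),v)\in\mathscr X^{\psi(i)\eta_k(i)}\}$, i.e.\ the projection of $v^*(i)$ onto the current feasible input slice, and then uses parametric Lipschitz continuity of convex QPs (Klatte's theorem, as in the appendix) to get $\|\tilde v_k(i)-v^*(i)\|\le k_v[\|\eta_k(i)-\eta^*(i)\|+\|z_k(i)-r(i)\|]$. Since by the induction hypothesis and \cref{lemma:AN_xi_and_rho_are_zero} both $\eta_k(i)-\eta^*(i)$ and $z_k(i)-r(i)$ are $\kinf$-small, this makes $\tilde z_k(i+1)$ itself close to $r(i+1)$. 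Only \emph{then} is the convex combination $\bar{\mathbf{x}}_k=\theta\tilde{\mathbf{x}}_k+(1-\theta)\mathbf{x}_k$ formed, with a single $\theta$ scaling \emph{all} modified entries uniformly; this is what puts $\bar z_k(i+1)$ in the form $(1-\theta)\,\text{(close to }r(i+1))+\theta z_k(i+1)$ and makes the quadratic identity deliver the desired $-\|z_k(i+1)-r(i+1)\|_Q^2$ term. The resulting inequality then holds for any $\theta\in(0,1]$, so $\theta$ need not be chosen uniformly in $k$ --- your worry on that point is unnecessary.
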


\begin{proof}
All statements in this proof are meant to hold with probability $p$. We use induction. First, the initialization step follows immediately from \cref{ass:PF_initial_condition} as $z_k(0)=\x=r(0)$. Next, suppose that for every $p$ there exists some $\sigma_{1,t}\in\kinf$ such that $\|z_k(t)-r(t)\|_p\leq h_{1,t}(\c)+\sigma_{1,t}(1/k)$ for $t\in\Z_{[0,i]}$. We now prove that for the same $p$ there exists some $h_{1,i+1}\in\kinf$ independent of $p$, and some $\sigma_{1,i+1}\in\kinf$ such that $\|z_k(i+1)-r(i+1)\|_p\leq h_{1,i+1}(\c)+\sigma_{1,i+1}(1/k)$.

First, note that thanks to \cref{lemma:AN_xi_and_rho_are_zero}, there exist $h_{2,t},h_{3,t},\sigma_{2,t},\sigma_{3,t}$ for $t\in\Z_{[0,i-1]}$ such that $\xi_k(t)\leq h_{2,t}(\c)+\sigma_{2,t}(1/k)$ and $\rho_k(t)\leq h_{3,t}(\c)+\sigma_{3,t}(1/k)$ for all $k$ and $t\in\Z_{[0,i-1]}$. Next, consider the trajectory $\tilde{\mathbf{x}}_k$ defined as $\tilde{v}_k=v_k$, $\tilde{d}_k^\text{ref}=d_k^\text{ref}$, $\tilde{x}_k^\text{ref}=x_k^\text{ref}$, with the following modifications: $\tilde{v}_k(i)$, $\tilde{x}_k^\text{ref}(i)$, $\tilde{\alpha}_k(i)$ defined as in \cref{eq:AN_set_to_singleton_candidate_traj} with $j=i$,
\begin{align}
\begin{split}
\tilde{d}_k^\text{ref}(i)&=f(\tilde{x}_k^\text{ref}(i),i),\\
\tilde{v}_k(i+1)&=v_k(i+1)+K(i+1)[\tilde{z}_k(i+1)-z_k(i+1)],\\
\tilde{d}_k^\text{ref}(i+1)&=A_K(i+1)[z_k(i+1)-\tilde{z}_k(i+1)]+d_k^\text{ref}(i+1),
\end{split}\label{eq:AN_thm_candidate_traj}
\end{align}
$\tilde{z}_k$ determined uniquely from the choice of $\tilde{d}_k^\text{ref},\tilde{v}_k^\text{ref}$, and $\tilde{\xi}_k,\tilde{\rho}_k$ chosen as small as possible while maintaining feasibility of the problem (we will specify their exact value in \cref{eq:AN_xi_rho_precise} and \cref{eq:AN_xi_rho_precise_2}).

Since the assumptions of \cref{lemma:AN_disturbance_set_to_singleton} hold with $j=i$, we have
\begin{align*}
\mathcal{D}_{k|k}(\tilde{x}_k^\text{ref}(i),i)\subset\{f(\tilde{x}_k^\text{ref}(i),i)\}\oplus \mathcal{B}(h_5(\c)+\sigma_5(1/k)).
\end{align*}
Moreover, using the same argument as in \cref{eq:AN_xi_and_rho_are_zero_bound_z_minus_x}, we get that
\begin{subequations}
\begin{align}
\|x_k^\text{ref}(i)-r(i)\|&\leq\gamma_{15}(\c)+g_{15}(1/k), \label{eq:AN_thm_bound_1} \\
\|x_k(i)-r(i)\|&\leq\gamma_{15}(\c)+g_{15}(1/k), \label{eq:AN_thm_bound_2}
\end{align}
\end{subequations}
for some $\gamma_{15},g_{15}\in\kinf$. Next, consider that
\begin{align}
&\|\tilde{z}_k(i+1)-r(i+1)\|\notag\\
\stackrel{(a)}{=}\,&\|A(i)[z_k(i)-r(i)]+B(i)[\tilde{v}_k(i)-v^*(i)]+\tilde{d}_k^\text{ref}(i)-d^*(i)\|\notag\\
\stackrel{(b)}{\leq}\,&\|A(i)\|[h_{1,i}(\c)+\sigma_{1,i}(1/k)]+\|B(i)\|\|\tilde{v}_k(i)-v^*(i)\|\notag\\&+m(i)\|x_k^\text{ref}(i)-r(i)\|\notag\\
\stackrel{(c)}{\leq}\,&\|A(i)\|[h_{1,i}(\c)+\sigma_{1,i}(1/k)]+\|B(i)\|\|\tilde{v}_k(i)-v^*(i)\|\notag\\&+m(i)[\gamma_{15}(\c)+g_{15}(1/k)],\label{eq:AN_thm_bound_z_til_minus_r_1}
\end{align}
where $(a)$ follows from the definition of $\tilde{z}_k(i+1)$ and $r(i+1)$, $(b)$ follows from \cref{ass:PF_disturbance} and \cref{eq:AN_convergence_hp_1}, and in $(c)$ we used \cref{eq:AN_thm_bound_1}.
We now prove that there exists some $\gamma_{16},g_{16}\in\kinf$ such that $\|\tilde{v}_k(i)-v^*(i)\|\leq \gamma_{16}(\c)+g_{16}(1/k)$. To see this, observe that $\tilde{v}_k(i)$ is the unique solution to the strongly convex optimization problem
\begin{align*}
\operatorname*{minimize}_v& \quad \|v-v^*(i)\|^2\\
\text{subject to}& \quad H_u v \leq h-\psi(i)\tilde{\eta}_k(i)-H_x \tilde{z}_k(i),
\end{align*}
where the variables $\tilde{\eta}_k(i)$ and $\tilde{z}_k(i)$ are parameters of the problems entering linearly in the inequality constraints. As a result, because of the outer Lipschitz continuity of $\tilde{v}_k(i)$ with respect to joint variations of $\tilde{\eta}_k(i)$ and $\tilde{z}_k(i)$ (as per \cite[Theorem 3]{klatte1985lipschitz}), we get that the solution $\tilde{v}_k(i)$ satisfies
\begin{align*}
\|\tilde{v}_k(i)-v^*(i)\| \leq k_v [\|\tilde{\eta}_k(i)-\eta^*(i)\|+\|\tilde{z}_k(i)-r(i)\|],
\end{align*}
for some $k_v \geq 0$, where we have used the fact that $v^*(i)$ solves the problem with $\tilde{\eta}_k(i)=\eta^*(i)$ and $\tilde{z}_k(i)=r(i)$. Since $\tilde{\xi}_k(t)=\xi_k(t)\leq h_{2,t}(\c)+\sigma_{2,t}(1/k)$ and $\tilde{\rho}_k(t)=\rho_k(t)\leq h_{3,t}(\c)+\sigma_{3,t}(1/k)$ for $t\in\Z_{[0,i-1]}$, there exist $\gamma_{17},g_{17}\in\kinf$ such that $\tilde{\eta}_k(i)\leq\gamma_{17}(\c)+g_{17}(1/k)+\eta^*(i)$,
where $\eta^*(i)$ is defined in \cref{ass:AN_perfect_tracking}. We therefore get
\begin{align*}
&~\|\tilde{v}_k(i)-v^*(i)\|\\
\leq&~k_v[\gamma_{17}(\c)+g_{17}(1/k)+h_{1,i}(\c)+\sigma_{1,i}(1/k)]\\
=:&~\gamma_{16}(\c)+g_{16}(1/k).
\end{align*}
Replacing in \cref{eq:AN_thm_bound_z_til_minus_r_1} we obtain
\begin{align}\label{eq:AN_thm_bound_3}
\|\tilde{z}_k(i+1)-r(i+1)\|\leq \gamma_{18}(\c)+g_{18}(1/k),
\end{align}
where $\gamma_{18}(\cdot):=\|A(i)\|h_{1,i}(\cdot)+\|B(i)\|\gamma_{16}(\cdot)+m(i)\gamma_{15}(\cdot)$ and $g_{18}(\cdot):=\|A(i)\|\sigma_{1,i}(\cdot)+\|B(i)\|g_{16}(\cdot)+m(i)g_{15}(\cdot)$. Moreover, we have by definition
\begin{align*}
&~\tilde{z}_k(i+2)\\
=&~A(i+1)\tilde{z}_k(i+1)+B(i+1)\tilde{v}_k(i+1)+\tilde{d}_k^\text{ref}(i+1)\\
=&~A(i+1)\tilde{z}_k(i+1)+B(i+1)\tilde{v}_k(i+1)+B(i+1)K(i+1)\\&~\cdot[\tilde{z}_k(i+1)-z_k(i+1)]+A_K(i+1)[z_k(i+1)-\tilde{z}_k(i+1)]\\&~+d_k^\text{ref}(i+1),\\
=&~A(i+1)z_k(i+1)+B(i+1)v_k(i+1)+d_k^\text{ref}(i+1)\\
=&~z_k(i+2).
\end{align*}
The variables $\tilde{\xi}_k(i)$, $\tilde{\rho}_k(i)$, and $\tilde{\xi}_k(i+1)$, $\tilde{\rho}_k(i+1)$ associated to the choice in \cref{eq:AN_set_to_singleton_candidate_traj,eq:AN_thm_candidate_traj} can be chosen as
\begin{align}
\begin{split}
\tilde{\xi}_k(i) &= \gamma_{15}(\c) + g_{15}(1/k) ,\\
\tilde{\rho}_k(i) &= h_5(\c) + \sigma_5(1/k),
\end{split}\label{eq:AN_xi_rho_precise}
\end{align}
and
\begin{align}
\begin{split}
\tilde{\xi}_k(i+1) &= \|\tilde{z}_k(i+1)-\tilde{x}_k^\text{ref}(i+1)\|, \\
\tilde{\rho}_k(i+1) &= \|A_K(i+1)[z_k(i+1)-\tilde{z}_k(i+1)]\| + \rho_k(i+1),
\end{split}
\label{eq:AN_xi_rho_precise_2}
\end{align}
with $\tilde{\xi}_k(t)=\xi_k(t)$, $\tilde{\rho}_k(t)=\rho_k(t)$ for all $t \neq i, i+1$.

We now prove that the trajectory $\bar{\mathbf{x}}_k:=\theta \tilde{\mathbf{x}}_k+(1-\theta)\mathbf{x}_k$, for an appropriately small $\theta\in(0,1]$, produces a lower cost than $\mathbf{x}_k$. Note that thanks to \cref{ass:AN_strict_feasibility} there always exists some $\theta\in (0,1]$ for which $\bar{\mathbf{x}}_k$ is feasible. Since $\bar{\mathbf{x}}_k$ and $\mathbf{x}_k$ only differ at time-steps $i$ and $i+1$, we have
\begin{align*}
\mathcal{J}_\text{ILC}(\bar{\mathbf{x}}_k)-\mathcal{J}_\text{ILC}(\mathbf{x}_k) = \sum_{t\in\{i,i+1\}} \delta\mathcal{J}(t),
\end{align*}
where $\delta\mathcal{J}(t)$ is the difference in cost between $\mathcal{J}_\text{ILC}(\bar{\mathbf{x}}_k)$ and $\mathcal{J}_\text{ILC}(\mathbf{x}_k)$ at time step $t$.

We have
{\allowdisplaybreaks
\begin{align*}
&\delta\mathcal{J}(i)\\
=\,\,& c_2(i) \bar{\rho}_k(i) + \c \bar{\xi}_k(i) - c_2(i) \rho_k(i) - \c \xi_k(i) \\
\stackrel{(a)}{\leq} \,& \frac{\c}{m(i)} \|\theta d_k^\text{ref}(i)  + (1-\theta)\tilde{d}_k^\text{ref}(i) - f(\tilde{x}_k^\text{ref}(i),i)\| \\ & + \c [h_5(\c)+\sigma_5(1/k) + \gamma_{15}(\c)+g_{15}(\c)] \\ & - \frac{\c}{m(i)} \|d_k^\text{ref}(i) - f(x_k^\text{ref}(i),i)\| \\ & - \c\|z_k(i)- x_k^\text{ref}(i)\| \\
\stackrel{\cref{eq:AN_thm_candidate_traj}}{\leq} \, & \frac{\theta\c}{m(i)} \|d_k^\text{ref}(i)-\tilde{d}_k^\text{ref}(i)\| \\ & + \c [h_5(\c)+\sigma_5(1/k) + \gamma_{15}(\c)+g_{15}(\c)] \\ & - \frac{\c}{m(i)} \|d_k^\text{ref}(i) - f(x_k^\text{ref}(i),i)\| \\ & - \c\|z_k(i)- x_k^\text{ref}(i)\| \\
\stackrel{(b)}{\leq} \, & \frac{\theta\c}{m(i)} \|d_k^\text{ref}(i)-f(x_k^\text{ref}(i),i)\| \\ & + \frac{\theta\c}{m(i)} \|\tilde{d}_k^\text{ref}(i)-f(x_k^\text{ref}(i),i)\|\\ & + \c [h_5(\c)+\sigma_5(1/k) + \gamma_{15}(\c)+g_{15}(\c)] \\ & - \frac{\c}{m(i)} \|d_k^\text{ref}(i) - f(x_k^\text{ref}(i),i)\| \\ & - \c\|z_k(i)- x_k^\text{ref}(i)\| \\
\stackrel{\hphantom{(a)}}{\leq} \, & \frac{(\theta-1)\c}{m(i)} \|d_k^\text{ref}(i)-f(x_k^\text{ref}(i),i)\| \\ & +\theta\c \|\tilde{x}_k^\text{ref}(i)-x_k^\text{ref}(i)\| - \c\|z_k(i)- x_k^\text{ref}(i)\| \\ & + \c [h_5(\c)+\sigma_5(1/k) + \gamma_{15}(\c)+g_{15}(\c)] \\
\stackrel{(c)}{\leq} \, & \frac{(\theta-1)\c}{m(i)} \|d_k^\text{ref}(i)-f(x_k^\text{ref}(i),i)\| \\ &  +(\theta-1)\c\|z_k(i)- x_k^\text{ref}(i)\| \\ & + \c [h_5(\c)+\sigma_5(1/k) + \gamma_{15}(\c)+g_{15}(\c)] \\ & + \theta\c[\gamma_{15}(\c)+g_{15}(1/k)+h_{1,i}(\c)+\sigma_{1,i}(1/k)],
\end{align*}}
where in $(a)$ we used that
\begin{align*}
\bar{d}_k^\text{ref}(i) &= \theta d_k^\text{ref}(i) + (1-\theta) \tilde{d}_k^\text{ref}(i),\\
\rho_k(i) & \geq \|d_k^\text{ref}(i)-f(x_k^\text{ref}(i),i)\|,
\end{align*}
and that
\begin{align*}
\bar{\rho}_k(i) = \, & \sup_{d\in \mathcal{D}(x_k^\text{ref}(i))} \|\theta d_k^\text{ref}(i)  + (1-\theta)\tilde{d}_k^\text{ref}(i)-d\| \\
\leq \, & \|\theta d_k^\text{ref}(i)  + (1-\theta)\tilde{d}_k^\text{ref}(i)-f(\tilde{x}_k^\text{ref}(i),i)\| \\ & + \sup_{d\in\mathcal{D}(x_k^\text{ref}(i))} \|f(\tilde{x}_k^\text{ref}(i),i)-d\|\\
\leq \, & \|\theta d_k^\text{ref}(i) + (1-\theta)\tilde{d}_k^\text{ref}(i)-f(\tilde{x}_k^\text{ref}(i),i)\| \\ & + h_5(\c) + \sigma_5(1/k),
\end{align*}
in $(b)$ we added and subtracted equal terms, and in $(c)$ we used that
\begin{align*}
\|\tilde{x}_k^\text{ref}(i)-z_k(i)\| &\leq \|\tilde{x}_k^\text{ref}(i)-r(i)\| + \|z_k(i)-r(i)\| \\
& \leq \gamma_{15}(\c)+g_{15}(1/k)+h_{1,i}(\c)+\sigma_{1,i}(1/k),
\end{align*}
where the last inequality follows from \cref{eq:AN_convergence_hp_1} and \cref{eq:AN_thm_bound_1}. We conclude that
\begin{align*}
\delta \mathcal{J}(i) \leq \, & \c [ \frac{\theta-1}{m(i)} \|d_k^\text{ref}(i)-f(x_k^\text{ref}(i),i)\|  \\ & + (\theta-1) \|z_k(i)- x_k^\text{ref}(i)\| ] \\ & + \c [\gamma_{19}(\c)+g_{19}(1/k)], \\ 
\leq \, & \c [\gamma_{19}(\c)+g_{19}(1/k)],
\end{align*}
where $\gamma_{19}:=h_5+2\gamma_{15}+h_{1,i}\in\kinf$, and $g_{15}:=\sigma_5+2g_{15}+\sigma_{1,i}\in\kinf$. Next, consider that by definition
\begin{align}
&\bar{\xi}_k(i+1)-\xi_k(i+1) \notag \\=\,&\|\theta z_k(i+1)+(1-\theta)\tilde{z}_k(i+1)-x_k^\text{ref}(i+1)\| \notag \\ & - \|z_k(i+1) -x_k^\text{ref}(i+1)\| \notag \\
\leq \, & (1-\theta) \|\tilde{z}_k(i+1)-z_k(i+1)\| \notag \\
\leq \, & (1-\theta) \|\tilde{z}_k(i+1)-z_k(i+1)\|_2, \label{eq:AN_thm_cost_1}
\end{align}
where the last step is satisfied with equality if $p=2$. Similarly, the estimation error satisfies by definition
\begin{align}
& \bar{\rho}_k(i+1)-\rho_k(i+1) \notag \\ \leq \, & (1-\theta) \|A_K(i+1)\| \|z_k(i+1)-\tilde{z}_k(i+1)\| \notag \\
\leq \, & (1-\theta) \|A_K(i+1)\|_2 \|z_k(i+1)-\tilde{z}_k(i+1)\|_2. \label{eq:AN_thm_cost_2}
\end{align}
Finally, the state error satisfies
\begin{align}
& \|\theta z_k(i+1) + (1-\theta)\tilde{z}_k(i+1)-r(i+1)\|_Q^2 \notag \\ &- \|z_k(i+1)-r(i+1)\|_Q^2 \notag \\
\stackrel{(a)}{=} & \, (1-\theta) \|\tilde{z}_k(i+1)-r(i+1)\|_Q^2 + (\theta-1)\|z_k(i+1) \notag \\ & -r(i+1)\|_Q^2-\theta(1-\theta)\|z_k(i+1)-\tilde{z}_k(i+1)\|_Q^2 \notag \\
\stackrel{(b)}{\leq} \, & 2(1-\theta)\lambda_\text{max}(Q)^2 \operatorname{diam}\mathscr{X}_x[\gamma_{18}(\c)+g_{18}(1/k)] \notag \\ & +(\theta-1)\|z_k(i+1)-r(i+1)\|_Q^2 \notag \\
\stackrel{(c)}{=:} \, & (1-\theta)[\gamma_{20}(\c)+g_{20}(1/k)] + (\theta-1)\|z_k(i+1) \notag \\ &-r(i+1)\|_Q^2, \label{eq:AN_thm_cost_3}
\end{align}
where in $(a)$ we used the identity $\|(1-\theta)a+\theta b\|_2^2=(1-\theta)\|a\|_2^2+\theta \|b\|_2^2-\theta(1-\theta)\|a-b\|_2^2$, in $(b)$ we used that $\|\tilde{z}_k(i+1)-r(i+1)\|_2 \leq 2\operatorname*{diam}\mathscr{X}_x$, and \cref{eq:AN_thm_bound_3}, and in $(c)$ we defined $\gamma_{20}:=2\lambda_\text{max}(Q)^2 \operatorname{diam}\mathscr{X}_x \gamma_{18}\in\kinf$ and $g_{20}:=2\lambda_\text{max}(Q)^2 \operatorname{diam}\mathscr{X}_x g_{18}\in\kinf$. Combining \cref{eq:AN_thm_cost_1}, \cref{eq:AN_thm_cost_2}, and \cref{eq:AN_thm_cost_3}, we get
{\allowdisplaybreaks
\begin{align*}
& \delta \mathcal{J}(i+1) \\
\stackrel{\hphantom{(a)}}{=} \, & c_1 [\bar{\xi}_k(i+1)-\xi_k(i+1)] + c_2(i+1) [\bar{\rho}_k(i+1)-\rho_k(i+1)] \\ & + \|\bar{z}_k(i+1)-r(i+1)\|_Q^2 - \|z_k(i+1)-r(i+1)\|_Q^2 \\
\stackrel{(a)}{\leq} \, & (1-\theta)\c \|\tilde{z}_k(i+1)-z_k(i+1)\|_2 \\ & + \frac{(1-\theta)\c}{m(i+1)} \|A_K(i+1)\|_2 \|z_k(i+1)-\tilde{z}_k(i+1)\|_2 \\ & + (1-\theta)[\gamma_{20}(\c)+g_{20}(1/k)] \\ & + (\theta-1)\lambda_\text{min}(Q)^2\|z_k(i+1)-r(i+1)\|_2^2 \\
\stackrel{\hphantom{(b)}}{=} \, & \|\tilde{z}_k(i+1)-z_k(i+1)\|_2(1-\theta)[\c+\frac{\c\|A_K(i+1)\|_2}{m(i+1)}\\&-\lambda_\text{min}(Q)^2\|z_k(i+1)-r(i+1)\|_2] \\& + (1-\theta)[\gamma_{20}(\c)+g_{20}(1/k)] \\
\stackrel{(b)}{\leq} \, & (1-\theta) \bigg[2\operatorname{diam}\mathscr{X}_x [\c+\frac{\c\|A_K(i+1)\|_2}{m(i+1)}\\&-\lambda_\text{min}(Q)^2\|z_k(i+1)-r(i+1)\|_2] \\& + \gamma_{20}(\c)+g_{20}(1/k)\bigg],
\end{align*}}
where $(a)$ follows from \cref{eq:AN_thm_cost_3} and from the definitions of $\mathbf{x}_k$ and $\bar{\mathbf{x}}_k$, and $(b)$ follows from $\|\tilde{z}_k(i+1)-r(i+1)\|_2 \leq 2\operatorname*{diam}\mathscr{X}_x$. Note that if the following condition holds
\begin{align*}
\|z_k(i+1)-r(i+1)\|_2 > \, & \frac{c_1 \|A_K(i+1)\|_2}{\lambda_\text{min}(Q)^2m(i+1)} \\ & + \frac{\gamma_{20}(c_1)+g_{20}(1/k)}{2 \operatorname*{diam}\mathscr{X}_x\lambda_\text{min}(Q)^2}\\
=: \, & \gamma_{21}(\c) + g_{21}(1/k),
\end{align*}
holds, then for any $\theta\in(0,1]$ we have $\mathcal{J}_\text{ILC}(\bar{\mathbf{x}}_k)-\mathcal{J}_\text{ILC}(\mathbf{x}_k)<0$. This is of course not possible since it would violate the optimality of $\mathbf{x}_k$. Hence, we must have
\begin{align*}
\|z_k(i+1)-r(i+1)\|_2 \leq \gamma_{21}(\c) + g_{21}(1/k).
\end{align*}
Choosing $h_{1,i+1}:=k_{p,+}\gamma_{21}$ and $\sigma_{1,i+1}:=k_{p,+}g_{21}(1/k)$ completes the proof.
\end{proof}

We are now ready to prove the main result.

\begin{proof}[Proof of \cref{thm:AN_convergence}]
For any $t$, we have for any probability $p\in(0,1)$ that $\|z_k(t)-r(t)\| \leq h(c_1) + \sigma(1/k)$ for some $h\in\kinf$ independent of $p$ and for some $\sigma\in\kinf$. Therefore, for any $p\in(0,1)$ we have $\limsup_{k\to\infty} \|z_k(t)-r(t)\| \leq h(c_1)$. Since this statement holds for any $p\in(0,1)$, it holds with probability $1$.
\end{proof} 

\end{document}